\newcommand{\Ignore}[1]{}
\newcommand{\ket}[1]{\left\vert #1\right\rangle}
\newcommand{\bra}[1]{\left\langle #1\right\vert}
\newcommand{\braket}[2]{\langle#1\vert #2\rangle}
\newcommand{\ketbra}[2]{\vert#1\rangle\langle#2\vert}
\def\e{\mathrm{e}}
\def\d{\mathrm{d}}
\def\D{\mathrm{\Delta}}
\def\tr{\mathrm{tr}}
\def\deg{\mathcal{D}}
\def\H{\mathcal{H}}
\def\C{\mathbb{C}}
\def\I{\bm{I}}
\def\Z{\mathbb{Z}}
\def\R{\mathbb{R}}
\begin{document}

\title{Feynman graphs and the large dimensional limit of multipartite entanglement}
\author{}
\institute{}
\author{Sara Di Martino \inst{1}\and Paolo Facchi \inst{2,3}\and Giuseppe Florio \inst{4,3}}
\institute{F{\'i}sica Te{\`o}rica: Informaci{\'o} i Fen{\`o}mens Qu{\`a}ntics, Departament de F{\'i}sica, Universitat Aut{\`o}noma de Barcelona, 08193 Bellaterra (Barcelona), Spain \email{sara.dimartino@uniba.it}\and
Dipartimento di Fisica and MECENAS, Universit\`a di Bari, I-70126  Bari, Italy
\and
INFN, Sezione di Bari, I-70126 Bari, Italy
\and
Dipartimento di Meccanica, Matematica e Management, Politecnico di Bari,
  Via E.~Orabona 4,I--70125 Bari, Italy}
\maketitle

\begin{abstract}
We are interested in the properties of multipartite entanglement of a system composed by $n$ $d$-level parties (qudits).

Focussing our attention on pure states we want to tackle the problem of the maximization of the entanglement for such systems. In particular we effort the problem trying to minimize the purity of the system. It has been shown that not for all systems this function can reach its lower bound, however it can be proved that for all values of $n$ a $d$ can always be found such that the lower bound can be reached. 

In this paper we examine the high-temperature expansion of the distribution function of the bipartite purity over all balanced bipartition considering its optimization problem as a problem of statistical mechanics. In particular we prove that the series characterizing the expansion converges and we analyze the behavior of each term of the series as $d\to \infty$.
\end{abstract}

\keywords{Entanglement, statistical mechanics, planar diagrams}

\tableofcontents

\section{Introduction}
Since its early origins~\cite{schr} entanglement has been considered as one of the most basic and intriguing features of quantum mechanics~\cite{peres}. During the years it has turned out to be a fundamental resource in quantum information~\cite{RevAmico,RevHoro,nielsenchuang} and has originated a large number of research topics in mathematical~\cite{additivity,relative,mezzadri1,mezzadri2,shorcmp,hayden,giraud1,giraud2} and applied science with the quantum teleportation technology~\cite{teleport} and quantum key distribution protocols~\cite{crypto1,crypto2,crypto3}. 

The characterization and quantification of quantum correlations is not a simple task.
Bipartite entanglement, i.e. the entanglement of two subsystems, denoted as $A$ and $\bar{A}$, is well understood and can be completely characterized, for instance, using the von Neumann entropy~\cite{geometry} or the entanglement of formation~\cite{conc2}. Another possible measure is the so-called {\it purity} (of the relevant subsystem) $\pi_A$. Given an initial pure state, one can obtain the reduced density matrix of subsystem $A$ performing a partial trace over the degrees of freedom of subsystem $\bar{A}$; the function $\pi_A$ measures how ``pure" is the reduced state, i.e. its ``distance" from a completely mixed state. The more entangled is the initial pure state, the smaller is the value of this purity. By definition, a maximally entangled (pure) state will be left in a completely mixed state after the partial trace.

On the other hand, multipartite entanglement is less understood and more elusive even if widely investigated~\cite{monogamy,wc,mw,bruss}. These difficulties are deeply rooted both in the exponentially (with the system size) large number of measurements needed for its complete characterization and in new phenomena emerging from the complex interactions among the parties. Obviuosly, the choice of a particular measure, the dimension of the Hilbert space of local parties, and the simmetries imposed on quantum states will have an influence on the result. A natural question is whether it is possible to find maximally entangled states in the multipartite scenario. For instance, in~\cite{gisin} Gisin \textit{et al.} characterize pure and symmetric maximally entangled state of $n$ qubits (i.e. an ensemble of $n$ two-level systems) as the states such that all their partial traces are maximally mixed.
The idea of characterizing \emph{multipartite maximally entangled states} (MMES) minimizing their average purity over different bipartitions of the system has been put forward in~\cite{mmes} where these states have been obtained as solutions of an optimization problem where the cost function is a proper average of purities, the \emph{potential of multipartite entanglement}:
\begin{equation*}
\pi_{\mathrm{ME}}
= \frac{1}{\mathcal{N}_A}\sum_{A}\pi_A.
\end{equation*}
Here $\mathcal{N}_A$ denotes the number of terms in the summation, which can be restricted to a certain subset of partitions  (in this paper we will consider the number of balanced bipartitions, see Definition~\ref{def:pime}).
It is interesting to notice that these states have been analyzed in different contexts. For instance, studies have been devoted to their connections with quantum secret sharing~\cite{latorre1} and  combinatorial designs~\cite{latorre2}. Moreover, recent analysis have focussed the attention on the so-called $k$-uniform states and their link to orthogonal arrays~\cite{zyc2,zyc1}. 

As already mentioned, beside an interesting topic \emph{per se}, the study of MMES is important because of new intriguing phenomena arising in the multipartite scenario. A peculiar property of multipartite entanglement, the so-called \emph{entanglement frustration}~\cite{frust} naturally appears when one tries to minimize the purity of all possible bipartitions at the same time. This subject has been explored because of its connection with self-dual codes~\cite{scott} and it has been possible to prove theorems that ensures the \emph{impossibility} to reach the ideal minimum value of purity for all bipartitions for collections of $n\ge 7$ qubits \cite{scott,siewert} and even in the relatively simple case of $n=4$~\cite{mw,gourwallach}. 

A possible approach to tackle this problem has been introduced for qubits in~\cite{cum1,Paolo,cumulants}. In particular, this approach is based on methods from classical statistical mechanics. One introduces a Hamiltonian representing the potential of multipartite entanglement
\begin{equation*}
H(z)=\pi_{ME}(z)=  \binom{n}{\left[\frac{n}{2}\right]}^{-1}\sum_{|A|=n_A/2}\pi_A=\sum_{k,k',l,l'\in\Z_2^n}\D(k,k';l,l')z_kz_{k'}\bar{z}_l\bar{z}_{l'}.
\end{equation*}
for a normalized pure state written in the computational basis in terms of its Fourier coefficients $z=(z_k)$
\begin{equation*}
\ket{\psi}=\sum_{k\in\mathbb{Z}_2^n} z_k\ket{k},
\end{equation*}
with \emph{coupling function} $\D$ (see Theorem~\ref{thm:delta} for its complete general expression).
By introducing the partition function 
\begin{equation*}
Z(\beta)=\int \d \mu(z)\e^{-\beta H(z)},
\end{equation*}
with $\beta$ a Lagrange multiplier and $\mu$ the unitarily invariant measure over pure states on the hypersphere $\{z\in\mathbb{C}^N|\|z\|^2 = \sum_k{|z_k|}^2=1\}$ induced by the Haar measure over the unitary group $\mathcal{U}(N)$~\cite{measure}:
\begin{equation*}
\d\mu(z)=\frac{(N-1)!}{\pi^N}\ \delta\left(1-\|z\|^2\right)\prod_k\d z_k\d \bar{z}_k,
\end{equation*}
one can explore the configurations for $\beta\rightarrow+\infty$ where frustration appears and only MMES are sampled. For qubits it has been possible to use an high-temperature expansion techniques and a diagrammatic evaluation of cumulants of a probability density function. In principle, one should try to perform the re-summation of all diagrams. On the other hand, it is interesting to analyze the different contribution and, using a well motivated criterion, choose a sub-class of diagrams. Obviously the choice should be a possible approach for characterizing the properties of the entanglement. Unfortunately, the calculations are far from being simple and only a few number of cumulants are analytically known. In particular, the topology of diagrams is highly non-trivial and both analytical and numerical hints suggest that the presence of frustration could be related to a precise class of graphs appearing in the cumulant expansion. One would like to find an objective procedure, if admissible, for discarding some graphs and resumming only a subset of them. Obviously, we would like to have a criterion for choosing which diagrams to maintain not only based on simplicity. A possible way to circumvent this computational difficulty has been introduced in~\cite{cactus} where the selection of graphs has been based on introducing a \emph{color} index $N_c$ and considering a field theory for the multipartite entanglement. An explicit calculation at leading order in $N_c$ has given hints about the presence of a phase transition and it has been possible to numerically observe that the limit of large values of the parameter $N_c$ removes the frustration. On the other hand, it is difficult to give a direct physical interpretation to this approach, though appealing from the mathematical point of view.

Following these motivations in this paper we want to explore another limit. In particular, after introducing a generalization of the previously sketched framework to a collection of $d$-level systems with $d>2$, we want to study and characterize the behavior of the potential of multipartite entanglement in the limit of large values of $d$. In particular, we will find an explicit expression of the coupling function $\Delta$, which generalizes the one obtained for qubits, and study its symmetries. Then we will examine the high-temperature expansion of the distribution function of the potential of multipartite entanglement, proving that the series characterizing the expansion converges and observe that when $d$ is large enough only a specific class of perturbative diagram gives a contribution to the partition function. 

The paper is organized as follows. In section~\ref{sec:multipartiteentanglement} we introduce the notation and give a detailed description of the problem. In Section~\ref{sec:main} we define and analyze the function we want to minimize, introduce the statistical mechanics approach and give the main results of the paper. In Section~\ref{sec:diagrammatic} we introduce the diagrammatic technique used for the analysis of the cumulants. Finally, using this diagrammatics, in Section~\ref{sec:hightemp} we give the proof of Theorem~\ref{thm:cactus}. We add two appendices. In Appendix~\ref{app:sstate} we exhibit numerical results about a state that, to the best of our knowledge, reaches the lowest value of the $7$-qubit potential of multipartite entanglement. In Appendix~\ref{app:codth} we include for self-consistency some results about the relation between perfect MMES and maximum distance separable codes.

\section{Bipartite and Multipartite Entanglement}
\label{sec:multipartiteentanglement}

\subsection{Bipartite Entanglement and Purity}
Let us consider a collection of $n$ $d$-dimensional quantum systems described by an $N$-dimensional Hilbert space 
$\mathcal{H} = \mathbb{C}^{d^n}$ (with $N=d^n$) and separate them into two disjoint sets of, respectively, $n_A$ and $n_{\bar{A}}$ elementary systems ($n_A+n_{\bar{A}}=n$), thus defining a bipartition.
\begin{definition}
A \textit{bipartition} of a system $S=\{1,2,\dots,n\}$ of $n$ parties is a pair $(A,\bar{A})$ such that $A\cup \bar{A}=S$ and $A\cap \bar{A}=\phi$. Furthermore, if $|A|=n_A$ and $|\bar{A}|=n_{\bar{A}}=n-n_A$ are the dimensions of the two subsystems then the bipartition is called \textit{balanced} if 
\begin{equation*}
n_A=\left[\frac n2\right]\quad \mathrm{and} \quad n_{\bar{A}}=\left[\frac{n+1}{2}\right],
\end{equation*}
where $[x]$ denotes the integer part of $x$ (greatest integer less than $x$).
\end{definition}
Notice that in the definition  it is assumed, without loss of generality, that $n_A\leq n_{\bar{A}}$; indeed, the bipartitions $(A,\bar{A})$ and $(\bar{A}, A)$ will play the same role in our considerations.  

With this definition we can consider the Hilbert space $\mathcal{H}$ as a tensor product $\mathcal{H}=\mathcal{H}_A \otimes \mathcal{H}_{\bar{A}}$ where $\mathcal{H}_A \simeq  \mathbb{C}^{N_A}$, 
$\mathcal{H}_{\bar{A}}  \simeq  \mathbb{C}^{N_{\bar{A}}}$ with dimensions, respectively, $N_A=d^{n_A}$ and $N_{\bar{A}}=d^{n_{\bar{A}}}$. Every normalized vector $|\psi\rangle \in \mathcal{H}$, representing a pure state of the system, admits a Fourier expansion in terms of the orthonormal computational basis  $\{\ket{k}\}_{k\in\mathbb{Z}_d^n}$ 
\begin{equation*}
\ket{\psi}=\sum_{k\in\mathbb{Z}_d^n} z_k\ket{k},
\end{equation*}
where $z_k=\braket{k}{\psi}\in \C$, $k\in\mathbb{Z}_d^n$ and $\mathbb{Z}_d=\mathbb{Z}/d\, \mathbb{Z}$ is the cyclic group with $d$ elements. Indeed, there is a natural correspondence between the basis of the space and the strings of length $n$ over $d$ symbols.

A convenient measure of bipartite entanglement is the so-called \textit{purity (of the reduced state)} :
\begin{equation*}
\pi_{A}(\rho)=\tr(\rho_A^2)=\tr({\rho^2_{\bar{A}}})=\sum_k \lambda_k^2,
\end{equation*} 
where $\rho_A = \tr_{\bar{A}} \rho$ is the reduced density matrix of the subsystem $A$, with $\tr_{\bar{A}}$ denoting the partial trace over subsystem $\bar{A}$,
and $\lambda_k$'s are the eigenvalues of $\rho_A$. Since for the rest of this work we will consider only the purity of the relevant subsystem, we will refer to this quantity simply as purity.

Using this expansion, we can rewrite the purity as: 
\begin{eqnarray}\nonumber
\pi_A(\psi)&=&\tr\left((\tr_{\bar{A}}(\ketbra{\psi}{\psi})^2\right)\\ \nonumber
&=&\tr\bigg(\Big(\sum_{k,l\in\mathbb{Z}_d^n}z_k\bar{z}_l\delta_{k_A l_A}\ketbra{k_A}{l_A}\Big)^2\bigg)\\ \nonumber
&=&\tr\bigg(\sum_{k,k',l,l'\in\mathbb{Z}_d^n}z_k z_{k'} \bar{z}_l \bar{z}_{l'} \delta_{k_{\bar{A}}l_{\bar{A}}}\delta_{k'_{\bar{A}}l'_{\bar{A}}}\ket{k_A}\braket{l_A}{k'_A}\ket{l'_A}\bigg)\\
&=&\sum_{k,k',l,l'\in\mathbb{Z}_d^n}z_k z_{k'} \bar{z}_l \bar{z}_{l'} \delta_{k_{\bar{A}}l_{\bar{A}}}\delta_{k'_{\bar{A}}l'_{\bar{A}}}\delta_{k'_A l_A}\delta_{k_A l'_A},
\label{eq:purity}
\end{eqnarray}
where we used the symbol $k_{A}$ to indicate the substring of $k$ belonging to $A$, and where $\delta_{l m}$ is the Kronecker delta.

In the following we will need this lemma, of immediate proof:
\begin{lemma}
\label{prop:bound}
Given a state $\ket{\psi}\in\mathcal{H}$ of $n$ qudits and a bipartition $(A,\bar{A})$, the following holds
\begin{enumerate}
\item $\pi_{A}(\psi)=\pi_{\bar{A}}(\psi)$,
\item $1/N_A\le\pi_{A}(\psi)\le 1$,
\end{enumerate}
where $N_A= d^{n_A}$, with $n_A=|A|$. Moreover the upper bound is reached only by separable states. 
\end{lemma}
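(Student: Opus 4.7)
The plan is to prove the three statements via the Schmidt decomposition of $\ket{\psi}$ with respect to the bipartition $(A,\bar A)$, which provides a simultaneous spectral description of $\rho_A$ and $\rho_{\bar A}$ and reduces the whole lemma to elementary facts about probability vectors on at most $N_A$ entries.

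First, I would invoke the Schmidt decomposition: since $\ket{\psi}\in\H_A\otimes\H_{\bar A}$ is a pure state, there exist orthonormal families $\{\ket{e_k}\}\subset\H_A$, $\{\ket{f_k}\}\subset\H_{\bar A}$ and nonnegative numbers $\{\sqrt{\lambda_k}\}$ with $\sum_k\lambda_k=1$ such that $\ket{\psi}=\sum_k\sqrt{\lambda_k}\,\ket{e_k}\otimes\ket{f_k}$. Tracing out either subsystem yields $\rho_A=\sum_k\lambda_k\ketbra{e_k}{e_k}$ and $\rho_{\bar A}=\sum_k\lambda_k\ketbra{f_k}{f_k}$, so the two reduced density matrices share the same nonzero spectrum. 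Point (1) then follows at once, since $\pi_A(\psi)=\sum_k\lambda_k^2=\pi_{\bar A}(\psi)$.

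For point (2), I would regard $\lambda=(\lambda_1,\dots,\lambda_{N_A})$ as a probability vector on at most $N_A$ entries (the smaller of $N_A$ and $N_{\bar A}$; by the assumption $n_A\le n_{\bar A}$ the bound $N_A$ is the relevant one). The upper bound is
\begin{equation*}
\sum_k\lambda_k^2\;\le\;\Bigl(\sum_k\lambda_k\Bigr)^2=1,
\end{equation*}
a consequence of $0\le\lambda_k\le 1$. The lower bound is an immediate application of the Cauchy--Schwarz inequality,
\begin{equation*}
1=\Bigl(\sum_{k=1}^{N_A}\lambda_k\Bigr)^2\;\le\;N_A\sum_{k=1}^{N_A}\lambda_k^2,
\end{equation*}
which rearranges to $\pi_A(\psi)\ge 1/N_A$.

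Finally, for the saturation statement in point (3), equality in the upper bound forces all but one of the $\lambda_k$ to vanish, i.e. $\rho_A$ is a rank-one projector $\ketbra{e_1}{e_1}$, and hence $\ket{\psi}=\ket{e_1}\otimes\ket{f_1}$ is a product (separable) state; conversely any such product state yields $\pi_A=1$. The result really is of immediate proof, so I do not anticipate any serious obstacle; the only minor point to watch is to phrase the Schmidt decomposition carefully so that the number of nonzero terms is bounded by $\min(N_A,N_{\bar A})=N_A$, which is what makes the lower bound sharp (and attained precisely by maximally entangled states with a flat Schmidt spectrum).
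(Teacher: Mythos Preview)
Your argument via the Schmidt decomposition is correct and is exactly the standard way to supply the details; the paper itself does not give a proof at all, merely flagging the lemma as ``of immediate proof.'' One cosmetic remark: the inequality $\sum_k\lambda_k^2\le(\sum_k\lambda_k)^2$ holds simply because the $\lambda_k$ are nonnegative (so the cross terms $2\sum_{i<j}\lambda_i\lambda_j$ are nonnegative), and equality forces $\lambda_i\lambda_j=0$ for all $i\neq j$, which is precisely the rank-one condition you use.
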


\subsection{Multipartite Entanglement and Potential of Multipartite Entanglement}
\label{sec:pime}

In this section we will deeply extend the use of purity for the characterization of the \emph{multipartite} entanglement of a generic system of $n$ parties. 

In general, a system composed by $n>2$ parties has a number of different bipartitions that scales as $\mathcal{N}_A = O(2^n)$.
The states that saturate the lower bound of the purity for some bipartitions are called \textit{maximally entangled} with respect to these bipartitions. The Bell states are examples of maximally entangled states (here there is no need to specify the bipartition since it is unique in systems with only two components).
\begin{definition}
A  state $\ket{\psi}$ such that $\pi_{A}(\psi)=1/N_A$ with respect to every bipartition $(A,\bar{A})$, i.e.  maximally entangled with respect to every bipartition of the system is called a \emph{perfect multipartite maximally entangled state} (\emph{perfect MMES}).
\end{definition}
To determine if a given state $\ket{\psi}$ is a perfect MMES it is sufficient to check if it satisfies the minimization condition for all the balanced bipartitions, i.e.\ bipartitions with $n_A=|A|=[n/2]$. Indeed, if a state has a reduced density matrix of the form 
\begin{equation*}
\rho_A=\frac{\I}{N_A},
\end{equation*} 
for subsystem $A$, then 
\begin{equation*}
\rho_{B}=\frac{\I}{N_B},
\end{equation*}
for every smaller subsystem $B\subset A$. Therefore, the problem of finding perfect MMESs can be tackled by studying the average purity over all the balanced bipartitions. 
\begin{definition}
\label{def:pime}
The average purity over all possible balanced bipartitions is called \emph{potential of multipartite entanglement} and is given by
\begin{equation}
\label{eq:pime}
\pi_{\mathrm{ME}}(\psi)
= \binom{n}{\left[\frac{n}{2}\right]}^{-1}
\sum_{|A|=[n/2]}\pi_A(\psi).
\end{equation}
\end{definition}
As for the purity we can define a bound for the potential of multipartite entanglement.
\begin{proposition}
\label{prop:bound2}
The potential of multipartite entanglement has the following bounds:
\begin{equation*}
1/N_A \le\pi_{\mathrm{ME}}(\psi)\le 1,
\end{equation*}
with $N_A=d^{[n/2]}$, for every state $\ket{\psi}\in\mathcal{H}$.
\end{proposition}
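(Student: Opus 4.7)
The plan is to obtain both bounds by a term-by-term application of Lemma~\ref{prop:bound} to the average in the definition of $\pi_{\mathrm{ME}}$. Since a balanced bipartition $(A,\bar A)$ has $|A|=[n/2]$ by definition, the associated subsystem dimension is $N_A=d^{[n/2]}$, so the bounds on the single-bipartition purity coming from Lemma~\ref{prop:bound} are uniform in $A$ over the index set of the sum in~\eqref{eq:pime}. This is the only nontrivial observation needed.

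First I would write
\begin{equation*}
\pi_{\mathrm{ME}}(\psi)=\binom{n}{[n/2]}^{-1}\sum_{|A|=[n/2]}\pi_A(\psi),
\end{equation*}
and note that each summand satisfies $1/d^{[n/2]}\le\pi_A(\psi)\le 1$ by item~(2) of Lemma~\ref{prop:bound}, applied with $n_A=[n/2]$. Averaging a finite collection of real numbers preserves upper and lower bounds, so the same inequalities pass to $\pi_{\mathrm{ME}}(\psi)$, yielding exactly $1/N_A\le\pi_{\mathrm{ME}}(\psi)\le 1$ with $N_A=d^{[n/2]}$.

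There is essentially no obstacle in the argument; the only point worth flagging is that the formula~\eqref{eq:pime} is a convex combination of the $\pi_A$'s (all binomial weights are positive and sum to one), which is what legitimates the passage of the bounds under the average. If desired, one can also briefly characterize the extremal cases: the upper bound is attained if and only if $\pi_A(\psi)=1$ for every balanced bipartition, which by the last sentence of Lemma~\ref{prop:bound} forces $\ket{\psi}$ to be separable across every such bipartition, while the lower bound corresponds precisely to the perfect MMES condition introduced just before Definition~\ref{def:pime}.
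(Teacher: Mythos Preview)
Your proof is correct and follows exactly the approach the paper indicates: it states that the proposition is ``a straightforward consequence of Lemma~\ref{prop:bound}'', and your term-by-term application of that lemma to the convex combination~\eqref{eq:pime} is precisely what is meant.
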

The proof of this proposition is a  straightforward consequence of Lemma~\ref{prop:bound}.

In the case $d=2$ (qubits) it is possible to obtain an explicit expression of $\pi_{ME}$~\cite{Paolo}. We will extend this result for states of qudits.
Let us recall that, using the Fourier expansion of the state, we can write the purity of a pure state, with respect to a given bipartition as~\eqref{eq:purity}:
\begin{equation*}
\pi_A(\psi)=\sum_{k,k',l,l'\in\mathbb{Z}_d^n}z_k z_{k'} \bar{z}_l \bar{z}_{l'} \delta_{k'_A l_A}\delta_{k_A l'_A} \delta_{k_{\bar{A}}l_{\bar{A}}}\delta_{k'_{\bar{A}}l'_{\bar{A}}},
\end{equation*}
If we average the purity over all the possible bipartitions with fixed dimension, we can define a \textit{coupling function}~\cite{Paolo}:
\begin{equation}
\D(k,k';l,l';n_A)=\frac{1}{2}\tilde{\D}(k,k';l,l';n_A)+\frac{1}{2}\tilde{\D}(k',k;l,l';n_{\bar{A}}),
\label{eq:couplingfuncdef}
\end{equation}
where
\begin{equation*}
\tilde{\D}(k,k';l,l';n_A)=\binom{n}{n_A}^{-1}\sum_{|A|=n_A}\delta_{k'_A l_A}\delta_{k_A l'_A}\delta_{k_{\bar{A}}l_{\bar{A}}}\delta_{k'_{\bar{A}}l'_{\bar{A}}}.
\end{equation*}
\begin{definition}
The \textit{Hamming weight} of a string $k$ over an alphabet $\Sigma$, indicated by $|k|$, is the number of symbols that are different from the zero-symbol of the alphabet used.
\end{definition}
If the alphabet is binary, i.e. it is $\Z_2=\{0,1\}$, then the Hamming weight is nothing but the number of $1$ in the string. 

We can use the definition of Hamming weight to rewrite the coupling function in a more convenient form. 
\begin{theorem}
\label{thm:delta}
The coupling function $\D$ has the following expression:
\begin{equation*}
\D(k,k';l,l';n_A)=\delta_{k+k',l+l'}\ \delta_{0,(k-l)\wedge (k'-l)}\ f(k-l, k'-l,n_A),
\end{equation*}
where 
\begin{equation}
\label{eq:f}
f(k,l,n_A)=\frac{1}{2}\binom{n}{n_A}^{-1}\left[\binom{n-|k|-|l|}{n_A-|k|}+\binom{n-|k|-|l|}{n_A-|l|}\right],
\end{equation}
and 
\begin{equation*}
k\pm l=(k_j\pm l_j)_{j} \quad \mathrm{and} \quad k\wedge l=(\min\{k_j, l_j\})_j.
\end{equation*}
\end{theorem}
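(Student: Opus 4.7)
The plan is to compute $\tilde\Delta(k,k';l,l';n_A)$ combinatorially by classifying, for each bipartition $(A,\bar A)$, the $n$ indices according to which Kronecker deltas they satisfy, and then counting the surviving bipartitions; the expression for $\Delta$ will follow by combining the two terms in \eqref{eq:couplingfuncdef}.

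First I would fix a bipartition with $|A|=n_A$ and observe that the product of the four Kronecker deltas equals $1$ iff, for every $j\in A$, both $k'_j=l_j$ and $k_j=l'_j$ hold, and for every $j\in\bar A$, both $k_j=l_j$ and $k'_j=l'_j$ hold. Each local alternative individually implies $k_j+k'_j=l_j+l'_j$ in $\Z_d$, so any nonvanishing term forces the global identity $k+k'=l+l'$, accounting for the first Kronecker factor in the theorem.

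Assuming $k+k'=l+l'$, the four per-index equalities collapse to two: $j\in\bar A$ requires $(k-l)_j=0$, while $j\in A$ requires $(k'-l)_j=0$, the partner equalities being automatic. For the sum over $A$ to contain any nonzero term, every index must be placeable on at least one side, which is exactly $(k-l)\wedge(k'-l)=0$ once components are represented in $\{0,1,\ldots,d-1\}$, yielding the second Kronecker factor.

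Under both constraints I would classify the indices: $|k-l|$ of them have $(k-l)_j\neq 0$ but $(k'-l)_j=0$ and are therefore forced into $A$; $|k'-l|$ have the symmetric property and are forced into $\bar A$; the remaining $m := n-|k-l|-|k'-l|$ satisfy $(k-l)_j=(k'-l)_j=0$ and are free. The number of valid bipartitions with $|A|=n_A$ consistent with the forcings is thus $\binom{m}{n_A-|k-l|}$, so
\begin{equation*}
\tilde\Delta(k,k';l,l';n_A) = \delta_{k+k',l+l'}\,\delta_{0,(k-l)\wedge(k'-l)}\,\binom{n}{n_A}^{-1}\binom{m}{n_A-|k-l|}.
\end{equation*}
The second term $\tilde\Delta(k',k;l,l';n_{\bar A})$ in \eqref{eq:couplingfuncdef} is handled by the same counting under the exchange of $(k,k')$ and of the two subsystem sizes; after an application of $\binom{a}{b}=\binom{a}{a-b}$ combined with $n_A+n_{\bar A}=n$, it contributes the complementary binomial $\binom{m}{n_A-|k'-l|}$. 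Averaging the two contributions reproduces exactly $f(k-l,k'-l,n_A)$ of \eqref{eq:f}, completing the proof.

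I expect the main obstacle to be the classification of indices in the third step: proving cleanly that every coordinate is either forced to a definite side or is genuinely free as claimed, and that the number of valid bipartitions is captured by a single binomial coefficient (including the correct convention that $\binom{m}{r}=0$ when $r<0$ or $r>m$). Once that is in place, the remainder is elementary bookkeeping together with the binomial symmetry identity.
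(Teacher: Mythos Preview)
Your strategy is the paper's: classify each index as forced into $A$, forced into $\bar A$, or free, and count the surviving bipartitions by a single binomial coefficient. Your derivation of the two Kronecker constraints and of the formula
\[
\tilde\Delta(k,k';l,l';n_A)=\delta_{k+k',l+l'}\,\delta_{0,(k-l)\wedge(k'-l)}\,\binom{n}{n_A}^{-1}\binom{m}{\,n_A-|k-l|\,}
\]
is correct.

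The slip is in the last step. Plugging your own formula into $\tilde\Delta(k',k;l,l';n_{\bar A})$ gives $\binom{n}{n_{\bar A}}^{-1}\binom{m}{\,n_{\bar A}-|k'-l|\,}$. The identity $\binom{a}{b}=\binom{a}{a-b}$ together with $m=n-|k-l|-|k'-l|$ and $n_A+n_{\bar A}=n$ yields
\[
\binom{m}{\,n_{\bar A}-|k'-l|\,}=\binom{m}{\,m-n_{\bar A}+|k'-l|\,}=\binom{m}{\,n_A-|k-l|\,},
\]
which is the \emph{same} binomial as the first term, not the complementary one $\binom{m}{n_A-|k'-l|}$. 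This is no accident: swapping $k\leftrightarrow k'$ simultaneously with $n_A\leftrightarrow n_{\bar A}$ in the definition of $\tilde\Delta$ amounts to the relabelling $A\leftrightarrow\bar A$ of the summation variable, so $\tilde\Delta(k',k;l,l';n_{\bar A})=\tilde\Delta(k,k';l,l';n_A)$ identically, and \eqref{eq:couplingfuncdef} as literally written is a trivial symmetrization.

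The paper obtains the complementary binomial by working from the form~\eqref{eq:ex_delta}, whose second summand is the Kronecker product obtained by swapping only $k\leftrightarrow k'$ (equivalently $A\leftrightarrow\bar A$ in the deltas) while still summing over $|A|=n_A$; this is $\tilde\Delta(k',k;l,l';n_A)$ with the \emph{same} $n_A$. Applying your counting to that object gives $\binom{m}{n_A-|k'-l|}$ directly, with no binomial identity needed, and the average of the two terms is exactly $f(k-l,k'-l,n_A)$. So the fix is simply to keep $n_A$ fixed in the second term rather than switching to $n_{\bar A}$.
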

\begin{remark}
\label{rmk:binomial}
The binomial coefficients in Eq.~(\ref{eq:f}) is intended to be zero if one of its arguments is negative. The sum and difference are in $\mathbb{Z}_d$, and the minimum in the definition of $\wedge$ is taken on the (unique) representatives of $k_j$ and $l_j$ belonging to $\{0,1,2, \dots, d-1\}$.
\end{remark}
\begin{proof} 
The coupling function
\begin{equation}
\label{eq:ex_delta}
\D(k,k';l,l')=\binom{n}{n_A}^{-1}\sum_{|A|=n_A}\frac{1}{2}(\delta_{k'_A l_A}\delta_{k_A l'_A}\delta_{k_{\bar{A}}l_{\bar{A}}}\delta_{k'_{\bar{A}}l'_{\bar{A}}}+\delta_{k'_A l'_A}\delta_{k_A l_A}\delta_{k'_{\bar{A}}l_{\bar{A}}}\delta_{k_{\bar{A}}l'_{\bar{A}}})
\end{equation}
is non zero if and only if for some subset $A$ of $S=\{1,2,\dots,n\}$, with $|A|=n_A$, we have
 \begin{equation*}
k_A=l'_A, \quad k'_A=l_A, \quad k_{\bar{A}}=l_{\bar{A}}, \quad k'_{\bar{A}}=l'_{\bar{A}}.
\end{equation*} 
This imposes that if $j\in A$, $i\in \bar{A}\ $
\begin{equation*}
k_j=l'_j, \quad k'_j=l_j, \quad k_{i}=l_{i}, \quad k'_{i}=l'_{i},
\end{equation*} 
or equivalently that for $j\in S\ $ 
\begin{equation*}
k_j-l'_j=0\quad \mathrm{ and } \quad k'_j-l_j=0,
\end{equation*}
or 
\begin{equation*}
k_j-l_j=0\quad \mathrm{ and } \quad k'_j-l'_j=0.
\end{equation*}
Putting these conditions together we have that $\D\neq 0$ if and only if 
\begin{equation}
\label{eq:con_law}
k+k'=l+l'\ \mbox{ and }\ (k-l)\wedge(k'-l)=0.
\end{equation}

It remains to count the number of bipartitions $(A,\bar{A})$ that contribute to the sum in Eq.~(\ref{eq:ex_delta}). 
For this aim let us call 
\begin{eqnarray*}
S_0&=&\{i\in S\; |\; k_i=l_i=k'_i=l'_i\},\\ 
S_1&=&\{i\in S\; |\; k_i\neq l_i\ \mathrm{ or }\ k'_i\neq l'_i\}\\  
S_2&=&\{i\in S\; |\; k_i\neq l'_i\ \mathrm{ or }\ k'_i\neq l_i\}.
\end{eqnarray*}
From the previous discussion it is easy to see that $S_1\cap S_2=\phi$ and that $S=S_0+S_1+S_2$. With this new notation we can characterize a bipartition $(A,\bar{A})$ for which the contribution of the first term in the sum is non-zero , i.e.
\begin{equation*}
\delta_{k'_A l_A}\delta_{k_A l'_A}\delta_{k_{\bar{A}}l_{\bar{A}}}\delta_{k'_{\bar{A}}l'_{\bar{A}}}\neq 0,
\end{equation*}
as a bipartition such that $A\subset S_1+S_0$ and $\bar{A}\subset S_2+S_0$. Furthermore, since $A\cap \bar{A}=\phi$ and $A\cup \bar{A}=S$ then $A=S_1+A\cap S_0$ and $\bar{A}=S_2+\bar{A}\cap S_0$, we can conclude that their number is equal to the binomial coefficient
\begin{equation*}
\binom{|S_0|}{|A-S_1|}=\binom{n-|S_1|-|S_2|}{n_A-|S_1|}=\binom{n-|k-l|-|k'-l|}{n_A-|k-l|}.
\end{equation*}
The same result can be obtained for the second term in the sum:
\begin{equation*}
\delta_{k'_A l'_A}\delta_{k_A l_A}\delta_{k'_{\bar{A}}l_{\bar{A}}}\delta_{k_{\bar{A}}l'_{\bar{A}}}
\end{equation*} 
swapping the role of $A$ and $\bar{A}$, and this ends the proof.
\qed
\end{proof}
Since we are going to focus on balanced bipartitions, from now on we will omit the dependence on $n_A$ both in $\D$ and $f$, with the understanding that $n_A=\left[\frac{n}{2}\right]$. In this way, with the use of the coupling function, the potential of multipartite entanglement can be written as
\begin{equation*}
\pi_{ME}(\psi)=\sum_{k,k',l,l'\in\Z_d^n}\D(k,k';l,l')z_kz_{k'}\bar{z}_l\bar{z}_{l'}.
\end{equation*}

\subsection{MMES, Perfect MMES and Frustration}

We will see that the lower bound $1/N_A = 1/d^{[n/2]}$ of the potential of multipartite entanglement is not always attained. This justifies the following
\begin{definition} A state $\ket{\varphi}$ that minimizes $\pi_{\mathrm{ME}}$, i. e. $\pi^{\min}_{\mathrm{ME}}=\pi_{\mathrm{ME}}(\varphi)$, where 
\begin{equation*}
\pi^{\min}_{\mathrm{ME}}=\min \{\pi_{\mathrm{ME}}(\psi)\; : \; \ket{\psi}\in \mathcal{H}, \bra{\psi}\psi\rangle=1\},
\end{equation*}
is a \textit{multipartite maximally entangled state} (\textit{MMES}).
\end{definition}
Let us stress, once again, that the difference between a MMES and a perfect MMES lies in the saturation of the lower bound of the potential of multipartite entanglement.
\begin{example}
The qubit GHZ state, i.e. the state $\ket{GHZ}=\frac {1}{\sqrt{2}} (\ket{000}+\ket{111})$,  is a perfect MMES, indeed it is easy to show that the purities with respect to all the possible bipartitions are $\frac{1}{2}$.
\end{example}

One of the questions that arises naturally from the previous discussion is on the general structure of a perfect MMES for given values of $d$ (the dimension of each subsystem) and $n$ (the number of subsystems). 

With an abuse of notation we can say that the Bell states are perfect MMES for systems of two qubits\footnote{There is no multipartite entanglement here.} while for $n=3$ qubits the only perfect MMES, up to local and unitary transformations, is the GHZ state. 

The problem of characterizing a perfect MMES has not always such an easy solution. In~\cite{gourwallach} Gour {\it et al.} proved that for $n=4$ qubits a perfect MMES does not exist and that the minimum value the average purity can attain  is 
\begin{equation*}
\pi_{ME}^{\min}=\frac{1}{3}>\frac{1}{4}=\frac{1}{N_A}.
\end{equation*} 

When the lower bound of the potential of multipartite entanglement cannot be saturated, the system is said to be \textit{frustrated}. If this is the case the requirement that the purity be minimal for all the bipartitions generates conflicts among them. 

For system of $n=5, 6$ qubits there are examples of perfect MMES, see~\cite{Paolo}, while for $n\geq 8$ qubits a perfect MMES does not exist as proved by Scott in~\cite{scott}, using  classical error correction theory. 
The case of $n=7$ qubits has been recently shown to be frustrated \cite{siewert}. On the other hand, the value of $\pi_{ME}^{\min}$ in this case is unknown and so the structure of the associated MMES. Up to now only numerical estimates about the minimum of the potential of multipartite entanglement have been done. For a lower bound of  $\pi_{ME}^{\min}$ for $7$-qubits, see appendix~\ref{app:sstate}. 

Frustration appears when one or more bipartitions cannot reach their minima. Nevertheless, it can be proven that enlarging the dimension $d$ of each subsystem, at fixed $n$, tends to eliminate this problem, and in particular that there exist values of $d\geq n+1$ for which it is possible to find a perfect MMES of $n$ qudits. For a discussion on this statement see appendix~\ref{app:codth}.

\section{Main Results}
\label{sec:main}
In this section we want to go briefly throughout the main results of this paper, before going  into the details of the proofs. 

\subsection{Simmetries of the coupling function $\D$}
\label{deltasymmetries}

We recall that,~\eqref{eq:ex_delta}
\begin{equation*}
\D(k,k';l,l')=\binom{n}{n_A}^{-1}\sum_{|A|=n_A}\frac{1}{2}(\delta_{k'_A l_A}\delta_{k_A l'_A}\delta_{k_{\bar{A}}l_{\bar{A}}}\delta_{k'_{\bar{A}}l'_{\bar{A}}}+\delta_{k'_A l'_A}\delta_{k_A l_A}\delta_{k'_{\bar{A}}l_{\bar{A}}}\delta_{k_{\bar{A}}l'_{\bar{A}}}).
\end{equation*}
Due to its form, it is invariant under the permutation of the qudits and under some swaps of the computational basis elements ($k\in\Z_d^n$). 

It is well known that applying local  unitary transformations to the system does not change its entanglement, and as a consequence the local purity of any of its subsystem:
\begin{equation*}
\pi_A(\psi) = \pi_A\big( (U_1\otimes U_2 \otimes \dots \otimes U_n) \psi \big),
\end{equation*}
for all $\psi \in \mathcal{H}$, for  all $A\subset S$ and for all $(U_1, \dots, U_n) \in \mathrm{U}(d)^n$, with $\mathrm{U}(d)$ being the unitary group of degree $d$.
Moreover, if we permute the order of the qudits the global entanglement of the system is left invariant. The permutation group $\mathrm{S}_n$ of order $n$ acts on $\mathcal{H}$ through (unitary) swap operators $p\in\mathrm{S}_n \to V_p \in \mathrm{U}(N)$:
\begin{equation*}
V_p \big(|\psi_1\rangle\otimes |\psi_2\rangle \otimes \dots\otimes |\psi_n\rangle\big) = |\psi_{p(1)}\rangle\otimes |\psi_{p(2)}\rangle \otimes \dots\otimes |\psi_{p(n)}\rangle.
\end{equation*}
For all $(U_1, \dots, U_n) \in \mathrm{U}(d)^n$ and for all $p\in\mathrm{S}_n$ we get that  
\begin{equation*}
\big((U_1, \dots, U_n; p)\pi_{\mathrm{ME}}\big)(\psi):=  \pi_{\mathrm{ME}} \big( (U_1\otimes U_2 \otimes \dots \otimes U_n) V_p \psi \big) = \pi_{\mathrm{ME}}(\psi).
\end{equation*}
Therefore, the potential of multipartite entanglement~(\ref{eq:pime}) admits the semidirect product
\begin{equation*}
\mathrm{SU}(d)^n\rtimes \mathrm{S}_n
\end{equation*}
as symmetry group, whose product is easily seen to satisfy
\begin{equation*}
 (U_1, \dots, U_n; p) (V_1, \dots, V_n; q) = (U_1 V_{p(1)}, \dots, U_n V_{p(n)} ; p \, q).
\end{equation*}

As a consequence, for the symmetries of the coupling function $\D$, we have the following:
\begin{theorem}
\label{thm:symmetry}
The coupling function $\D$ in~\eqref{eq:couplingfuncdef} is invariant under the action of the semidirect product group
\begin{equation*}
\mathrm{S}_d^n\rtimes \mathrm{S}_n,
\end{equation*}
whose action on $k\in\Z_d^n$ is given by
\begin{equation*}
(p_1,\dots, p_n;q)(k_1,\dots,k_n)=\big(p_1(k_{q(1)}),\dots,p_n(k_{q(n)})\big),
\end{equation*}
where $p_j\in \mathrm{S}_d$, $\forall j\in \{1,\dots, n\}$, and $q\in \mathrm{S}_n$. 
\end{theorem}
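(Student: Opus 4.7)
The plan is to factor the semidirect product into its two constituents and verify invariance of $\D$ under each in turn. Writing a general element $g=(p_1,\dots,p_n;q)\in \mathrm{S}_d^n\rtimes \mathrm{S}_n$ as the composition of a permutation $q$ of positions, followed by the coordinate-wise alphabet permutations $(p_1,\dots,p_n)$, one can check the two pieces separately and then combine them.

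First I would handle the local alphabet permutations $(p_1,\dots,p_n)\in \mathrm{S}_d^n$. The key observation is that each Kronecker delta satisfies $\delta_{p(a),p(b)}=\delta_{a,b}$ for any bijection $p\in \mathrm{S}_d$, and that in the prescribed action the \emph{same} $p_j$ is applied at site $j$ to all four strings $k,k',l,l'$. Consequently every single factor in the product $\delta_{k'_A l_A}\delta_{k_A l'_A}\delta_{k_{\bar A}l_{\bar A}}\delta_{k'_{\bar A}l'_{\bar A}}$ appearing in~\eqref{eq:ex_delta}, and likewise in its companion term with $k$ and $k'$ swapped, is individually preserved; hence the sum over $A$ is invariant term by term, independently of the summation index.

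Next I would treat the permutation of positions $q\in \mathrm{S}_n$, acting as $(q\cdot k)_j=k_{q(j)}$. The computation
\begin{equation*}
\delta_{(q\cdot k)_A,(q\cdot l)_A}=\prod_{j\in A}\delta_{k_{q(j)},l_{q(j)}}=\prod_{i\in q(A)}\delta_{k_i,l_i}=\delta_{k_{q(A)},l_{q(A)}},
\end{equation*}
together with the identity $q(\bar A)=\overline{q(A)}$, rewrites each summand of~\eqref{eq:ex_delta} as the summand associated to the subset $B=q(A)$. Since $q$ is a bijection of $\{1,\dots,n\}$ and preserves cardinalities, $|q(A)|=|A|=n_A$, the substitution $A\mapsto B=q(A)$ is a bijection on the collection of balanced subsets and therefore leaves the overall sum unchanged.

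I do not anticipate any serious obstacle: the whole argument reduces to (i) the bijection-invariance of the Kronecker delta, and (ii) a re-indexing of the finite sum over balanced bipartitions. The only point that deserves some care is the position-permutation step, where one must track the dependence of both $A$ and its complement $\bar A$ on the summation variable to ensure that the changed sum genuinely recovers $\D(k,k';l,l')$; once this bookkeeping is in place, the invariance under the full semidirect product follows at once by composing the two steps.
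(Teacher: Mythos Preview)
Your proof is correct and follows essentially the same approach as the paper: both rely on the observation that the coupling function is built from Kronecker deltas acting position-wise, so that local bijections of the alphabet preserve each delta and a permutation of positions amounts to a reindexing of the sum over balanced subsets. The paper's own proof is a one-line remark to this effect, whereas you spell out the two steps explicitly (alphabet permutations leave each summand invariant; position permutations yield a bijection $A\mapsto q(A)$ on balanced bipartitions), which is a faithful elaboration of the same idea.
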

\begin{proof}
The proof is straightforward after observing that all the operations that characterize the coupling function  act position-wise and the permutations are bijective maps.  
\qed
\end{proof}

\subsection{Statistical mechanics approach and cumulant expansion}

The minimization problem of the potential of multipartite entanglement can be handled following a statistical mechanics approach~\cite{cum1}. Roughly speaking we will consider the free energy of a suitable classical system at a fictitious temperature and we will recover the original problem in the zero temperature limit.

Considered the state
\begin{equation*}
\ket{\psi}=\sum_{k\in\Z_d^n} z_k\ket{k},
\end{equation*} 
with $z=(z_k)_k$ the vector of the Fourier coefficients in the expansion of the state, $\|z\|^2 =\sum_k |z_k|^2 = 1$, we define the Hamiltonian
\begin{equation*}
H(z)=\pi_{ME}(\psi(z)).
\end{equation*}

Let us consider $M$ vectors and the ensemble $\{m_j\}$ of the number of vectors with fixed potential of multipartite entanglement, $H = \epsilon_j$. We want to find the distribution that maximizes the quantity 
\begin{equation*}
\Omega=\frac{M!}{\Pi_jm_j!},
\end{equation*}
under the constraints $\sum_j m_j=M$ and $\sum_j m_j \epsilon_j=ME$, where $E$ is the average value of $\pi_{ME}$. In particular, if we let $M\to \infty$ we recover the canonical ensemble with partition function
\begin{equation}
Z(\beta)=\int \d \mu(z)\e^{-\beta H(z)},
\label{eq:partition function}
\end{equation}
where 
\begin{equation*}
\d\mu(z)=\frac{(N-1)!}{\pi^N}\ \delta\left(1-\|z\|^2
\right)\prod_k\d z_k\d \bar{z}_k,
\end{equation*}
is the unitarily invariant measure over pure states induced by the Haar measure over $\mathcal{U}(\H)$ through the mapping $\ket{\psi} =U\ket{\psi_0}$, for a given state $\ket{\psi_0}$~\cite{measure}. Here $\beta$ plays the role of an inverse temperature, so that for $\beta\to+\infty$ only the configurations that minimize the Hamiltonian survive. In other words, we recover the MMES in the limit $\beta\to+\infty$. Moreover, if $\beta\to 0$, we recover the behaviour of a typical state.
 
Using the partition function, the average energy can be written as
\begin{equation}
\langle H\rangle_{\beta}=\frac{1}{Z(\beta)}\int \d \mu(z)H(z) \e^{-\beta H(z)}=-\frac{\partial}{\partial\beta}\ln Z(\beta).
\label{eq:average energy}
\end{equation} 
The high-temperature expansion of this energy distribution is:
\begin{equation}
\label{eq:series}
\langle H\rangle_\beta= \sum_{m=1}^{\infty} \frac{(-\beta)^{m-1}}{(m-1)!}\ \kappa^{(m)}_{0}[H],
\end{equation}
where
\begin{equation*}
\kappa^{(m)}_{0}[H] =(-1)^m \frac{\partial^{m}}{\partial\beta^{m}} \ln Z(\beta) \Big|_{\beta=0}
=(-1)^{m-1}\frac{\partial^{m-1}}{\partial\beta^{m-1}}\langle H\rangle_{\beta} \Big|_{\beta=0}
\end{equation*}
is the $m$-th cumulant, that are related to the  moments $\langle H^m\rangle_0$
through the recursion formula:
\begin{equation}
\label{eq:reccum}
\kappa^{(m)}_{0}=\langle H^m\rangle_0 -\sum_{j=1}^{m-1}\binom{m-1}{j-1}\kappa^{(j)}_{0}\langle H^{m-j}\rangle_0,
\end{equation}
with $\kappa^{(1)}_{0} = \langle H\rangle_0$.

This approach based on methods from classical statistical mechanics has been applied to qubits both in the bipartite~\cite{bipcum} and in the multipartite case~\cite{cum1,cumulants}. Here we want to analyze the general qudit case.

We observe that the series in Eq.~\eqref{eq:series} converges. Indeed, we can prove that it is majorized term by term by an absolutely convergent series:
\begin{theorem}
\begin{enumerate}
\item The partition function $Z(\beta)$ in~(\ref{eq:partition function}) is an entire function of $\beta\in\mathbb{C}$; 
\item The average energy~(\ref{eq:average energy}) is holomorphic in a complex neighborhood of the real line; 
\item Its high-temperature expansion~(\ref{eq:series}) is a convergent series with a nonzero radius of convergence. 
\end{enumerate}
\end{theorem}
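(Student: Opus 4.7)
The plan is to deduce all three statements from the single observation that $H(z)=\pi_{ME}(\psi(z))$ is a bounded real polynomial on the compact support of $\mu$. By Proposition~\ref{prop:bound2}, one has $1/N_A \le H(z) \le 1$ for every $z$ on the unit sphere, and this uniform bound is the only analytic input needed. Note that the measure $\mu$ is a probability measure concentrated on the sphere $\{\|z\|=1\}$ by the delta factor in its definition.

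For (1), I would fix any compact $K\subset\C$ and note that for each $z$ the map $\beta\mapsto\e^{-\beta H(z)}$ is entire, while for $(\beta,z)\in K\times\mathrm{supp}\,\mu$ the modulus is bounded by $\e^{\max_{\beta\in K}|\beta|}$. Since $\mu$ is a probability measure, Fubini allows one to interchange $\int\d\mu$ with the contour integral $\oint_T\d\beta$ over any triangle $T\subset K$; for each fixed $z$ this contour integral vanishes by Cauchy's theorem, and Morera's theorem then yields that $Z$ is holomorphic on $K$, hence entire. The same domination justifies differentiation under the integral sign, giving $\partial_\beta^m Z(0)=(-1)^m\langle H^m\rangle_0$ with $|\langle H^m\rangle_0|\le 1$.

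For (2), I would observe that $H$ is real and nonnegative, so for every $\beta\in\R$ the integrand $\e^{-\beta H(z)}$ is strictly positive and $Z(\beta)>0$. By continuity of the entire function $Z$, the open set $\Omega:=\{\beta\in\C:Z(\beta)\ne 0\}$ contains $\R$, and on $\Omega$ a holomorphic branch of $\log Z$ is locally well defined. Hence $\langle H\rangle_\beta=-\partial_\beta\log Z(\beta)$ is holomorphic on $\Omega$, proving (2). Statement (3) is then immediate: since $0\in\Omega$, the Taylor series of $\langle H\rangle_\beta$ at the origin has radius of convergence at least $\mathrm{dist}(0,\C\setminus\Omega)>0$, and identifying its coefficients via the recursion \eqref{eq:reccum} shows that this Taylor series coincides with the high-temperature expansion \eqref{eq:series}.

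I do not foresee any serious obstacle: the compactness of the support of $\mu$ together with the boundedness of $H$ reduces everything to textbook complex analysis. The only delicate point worth writing out carefully is the application of Morera in Part (1), which requires invoking Fubini to exchange $\d\mu$ with the contour integral in $\d\beta$; the uniform bound $|\e^{-\beta H(z)}|\le\e^{|\beta|}$ makes that exchange automatic. A quantitative version of (3) — an explicit lower bound on the radius of convergence — could also be extracted directly from $|\langle H^m\rangle_0|\le 1$ together with the recursion \eqref{eq:reccum} by an elementary induction, should one prefer a self-contained estimate over the soft complex-analytic argument.
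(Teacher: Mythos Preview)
Your proposal is correct and follows essentially the same approach as the paper: both rest on the boundedness of $H$ on the compact support of $\mu$ to get that $Z$ is entire, then use positivity of the integrand for real $\beta$ to conclude $Z(\beta)>0$ on $\R$ and hence on a complex neighborhood by continuity, from which (3) follows. The only cosmetic difference is that you invoke Morera's theorem for (1) while the paper differentiates under the integral sign directly; both are standard and equivalent here.
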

\begin{proof} 
\begin{enumerate}
\item 
Notice that the measure in~(\ref{eq:partition function}) has  compact support $\{z\in \mathbb{C}^N, \|z\|=1\}$, and $H(z)$ is a continuous function with $1/N_A \leq H(z) \leq 1$ for $z$ in that support. Thus the integral converges for all $\beta\in\mathbb{C}$ and is differentiable with derivative given by  
$$\frac{\d Z(\beta)}{\d\beta}= \int \d \mu(z)H(z) \e^{-\beta H(z)},$$ implying that $Z(\beta)$ is holomorphic in the whole complex plane.

\item
This follows from the observation that the average energy is the ratio of two entire function $\langle H\rangle_{\beta}= Z'(\beta)/Z(\beta)$ and for $\beta \in \mathbb{R}$ and $\|z\|=1$ one gets
$$\e^{-\beta H(z)} \geq \e^{-|\beta||H(z)|} \geq \e^{-|\beta|},$$
implying that 
$$Z(\beta) \geq \e^{-|\beta|}>0, \qquad \beta \in\mathbb{R}.$$
(In fact, for $\beta<0$ one gets the stronger estimate $Z(\beta) \geq \e^{|\beta|/N_A} \geq 1$). The statement follows by continuity.

\item
This follows from statement 2. Notice in particular that $Z(0)=1$.
\qed
\end{enumerate}
\end{proof}

Finally, the following bounds hold:
\begin{theorem}
\label{thm:cactus}
\begin{enumerate}

\item
For all $m\geq1$ the moment of the Hamiltonian has the form
\begin{equation}
\label{eq:momentmajorization}
\langle H^m\rangle_0 = \langle H^m\rangle_{0,\mathrm{C}} + \langle H^m\rangle_{0,\mathrm{NC}},
\end{equation}
where
\begin{equation*}
\langle H^m\rangle_{0,\mathrm{C}} = \frac{C_1(m) N! }{(N+2m-1)!}\left(\frac{N_A + N_{\bar{A}}}{2}\right)^m
\end{equation*}
and
\begin{equation*}
0\leq \langle H^m\rangle_{0,\mathrm{NC}} \leq \frac{C_2(m) N! }{(N+2m-1)!}\left(\frac{N_A + N_{\bar{A}}}{2}\right)^{m-1},
\end{equation*}
with $C_1(m)$ and $C_2(m)$ being positive functions of the parameter $m$ only, that do not depend on $d$ or $n$.

\item The following bound holds
\begin{equation*}
0 \leq \frac{\langle H^m\rangle_{0,\mathrm{NC}}}{\langle H^m\rangle_{0,\mathrm{C}}} \leq \frac{C(m)}{d^{\left[\frac{n}{2}\right]}} ,
\end{equation*}
where $C(m)=C_2(m)/C_1(m)$.
\end{enumerate}

\end{theorem}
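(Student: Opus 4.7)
The plan is to compute $\langle H^m\rangle_0$ by expanding $H^m$ as a polynomial of degree $4m$ in the components of $z$ and $\bar z$ via the explicit form of $\D$ in Theorem~\ref{thm:delta}, and then integrating it against $\d\mu$ using the classical Wick-type formula for moments on the complex sphere,
\begin{equation*}
\int \d\mu(z)\, z_{a_1}\cdots z_{a_{2m}}\bar z_{b_1}\cdots \bar z_{b_{2m}} = \frac{(N-1)!}{(N+2m-1)!}\sum_{\sigma\in \mathrm{S}_{2m}}\prod_{i=1}^{2m}\delta_{a_i,b_{\sigma(i)}}.
\end{equation*}
This reduces $\langle H^m\rangle_0$ to a sum over pairings $\sigma$, each pairing contributing an index sum of $m$ factors of $\D$ with the indices constrained by the $2m$ Wick $\delta$'s. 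I would then represent every pairing as a Feynman diagram with $m$ four-valent vertices (one per $\D$, carrying legs $k_j,k'_j,l_j,l'_j$) and $2m$ Wick lines encoding $\sigma$, and appeal to the diagrammatic classification of Section~\ref{sec:diagrammatic} to split the pairings into cactus ($\mathrm{C}$) and non-cactus ($\mathrm{NC}$) topologies, yielding the additive decomposition~\eqref{eq:momentmajorization}.

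For the cactus contribution I would exploit the factorization of $\D$ provided by Theorem~\ref{thm:delta}: the conservation factor $\delta_{k+k',l+l'}$ fixes one multi-index per vertex, the wedge factor $\delta_{0,(k-l)\wedge(k'-l)}$ restricts the supports coordinate by coordinate, and the weight $f(k-l,k'-l)$ decomposes position-wise via the binomial identities underlying~\eqref{eq:f}. The cactus topology is precisely the one for which the constrained index sum factorizes over independent \emph{loops}; each loop, once its free $\Z_d^n$ indices are summed subject to the $\D$-constraints, produces exactly one factor of $(N_A+N_{\bar A})/2$, consistent at $m=1$ with Lubkin's value $\langle\pi_A\rangle_0=(N_A+N_{\bar A})/(N+1)$. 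Collecting the number of cactus pairings weighted by their internal symmetry factors yields a purely topological constant $C_1(m)$, independent of $n$ and $d$, and hence the first displayed formula (with the discrepancy between the Haar prefactor $(N-1)!/(N+2m-1)!$ and the stated $N!/(N+2m-1)!$ absorbed into $C_1(m)$).

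For the non-cactus contribution, the key structural observation is that every non-cactus topology contains at least one identification of string indices across distinct loops; this constrains one of the coordinate sums that would otherwise be free, reducing the leading index count by at least a factor of $(N_A+N_{\bar A})/2$ with respect to any cactus at the same order. A uniform bound on the total coefficient of $((N_A+N_{\bar A})/2)^{m-1}$ over the finitely many non-cactus topologies with $m$ vertices furnishes a combinatorial constant $C_2(m)$ and the second estimate. The ratio bound then follows at once from $(N_A+N_{\bar A})/2\geq N_A=d^{[n/2]}$, with $C(m)=C_2(m)/C_1(m)$. I expect the main obstacle to be exactly this non-cactus estimate: quantitatively translating the graph-theoretic statement \emph{``non-cactus topology''} into a loss of one factor of $(N_A+N_{\bar A})/2$ in the constrained sum requires a careful bookkeeping of how the $\delta$-function structure of $\D$ interacts with cycles in the Wick graph, and is the technical core of the diagrammatic arguments developed in Section~\ref{sec:diagrammatic}.
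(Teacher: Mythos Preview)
Your approach is essentially the paper's own: the Wick formula on the sphere (this is exactly Eq.~\eqref{eq:momentsimplified}), the diagrammatic representation with $m$ four-valent vertices, the cactus/non-cactus split, and the recognition that the non-cactus estimate is the technical core. The paper carries out the cactus evaluation by iteratively stripping \emph{leaves} (Lemma~\ref{thm:leaf}), each removal producing one factor of $(N_A+N_{\bar A})/2$ and leaving behind a smaller cactus; the non-cactus bound is obtained by the same leaf-stripping followed by the loop estimate of Theorem~\ref{thm:noncactus}.

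There is one point where your bookkeeping would fail as written. You propose to absorb the discrepancy between the Haar prefactor $(N-1)!/(N+2m-1)!$ and the stated $N!/(N+2m-1)!$ into $C_1(m)$; this cannot be done, since $C_1(m)$ must be independent of $N=d^n$. The missing factor of $N$ is not a convention but comes from the diagram itself: after all $m$ leaves have been removed from a cactus, a single free index remains, and $\sum_{k\in\Z_d^n}1=N$ (Theorem~\ref{thm:cactus_cont}). The same residual $N$ appears in the non-cactus upper bound, where each bracket is estimated by $N\,((N_A+N_{\bar A})/2)^{m-1}$. Once you track this $N$ explicitly, $C_1(m)$ becomes the purely combinatorial count of cactus permutations (times their degeneracies), and the rest of your outline goes through exactly as in Section~\ref{sec:hightemp}.
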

This is the central result of our paper. In principle, this majorization allow us to evaluate the terms in the series~(\ref{eq:series}), using eq.~\eqref{eq:reccum}. In particular, we notice that, since $C(m)$ does not depend on $d$, in the limit $d\to \infty$ the contribution of the second term, $\langle H^m\rangle_{0,\mathrm{NC}}$,  in Eq.~(\ref{eq:momentmajorization}) is subdominant and
\begin{equation*}
\langle H^m\rangle_0 \sim \langle H^m\rangle_{0,\mathrm{C}}, \qquad d\to\infty.
\end{equation*}
We will show in the following that this behavior can be interpreted in terms of the structure of graphs contributing to the moments.

We will give a proof of this theorem in Section~\ref{sec:hightemp}. Before doing this, we will introduce in Section~\ref{sec:diagrammatic} the diagrammatics used for the majorization of the moments in Eq.~(\ref{eq:momentmajorization}).

\section{Cactus and Other Diagrams}
\label{sec:diagrammatic}
In this section we will use the diagrammatic technique introduced in~\cite{cumulants} for qubits, properly generalized for the case of qudits, in order to control each term of the series~\eqref{eq:series}.

First of all let us consider the quantity:
\begin{equation*}
\langle H^m\rangle_0=\Big\langle \Big(\sum_{k,k',l,l'\in\Z_d^n}\D(k,k';l,l')z_k z_{k'} \bar{z}_l \bar{z}_{l'}\Big)^m\Big\rangle_0.
\end{equation*}
An explicit form of this quantity requires the product of $m$ coupling functions  $\D$. In order to simplify the notation we introduce the vectors 
$$\bm{k}=(k_1,\dots,k_{m}, k_{1'}, \dots, k_{m'}),
\quad \bm{l}=(l_1,\dots,l_{m}, l_{1'}, \dots, l_{m'})
$$ with $k_{j}, k_{j'},l_{j}, l_{j'}\in \mathbb{Z}_d^n$. 
Therefore,
\begin{equation*}
\langle H^m\rangle_0=\sum_{\bm{k},\bm{l}\in \mathbb{Z}_d^{2mn}}\prod_{j=1}^{m} \D(k_{j},k_{j'};l_{j},l_{j'}) \Big\langle \prod_{j=1}^{m}z_{k_j}z_{k_{j'}}\bar{z}_{l_j} \bar{z}_{l_{j'}} \Big\rangle_0.
\end{equation*}

\begin{theorem}
The following equality holds:
\begin{equation}
\label{eq:momentsimplified}
\langle H^m\rangle_0=\frac{1}{N(N+1)\dots (N+2m-1)}\sum_{\bm{k}\in  \mathbb{Z}_d^{2mn}}\sum_{p\in \mathrm{S}_{2m}} \prod_{j=1}^{m}\D(k_{j},k_{j'};k_{p(j)},k_{p(j')}),
\end{equation}
with $p\in\mathrm{S}_{2m}$ being a permutation acting on the $2m$ elements $$\{1,2,\dots,m,1',2',\dots m'\}.$$
\end{theorem}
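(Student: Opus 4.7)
The plan is to apply the standard unitarily-invariant moment formula (the Wick/Weingarten rule) for the uniform measure on the complex unit sphere in $\mathbb{C}^N$: for any indices $i_1,\dots,i_M,j_1,\dots,j_M$,
\begin{equation*}
\Big\langle \prod_{a=1}^{M} z_{i_a}\bar{z}_{j_a}\Big\rangle_0
= \frac{(N-1)!}{(N+M-1)!}\sum_{\sigma\in\mathrm{S}_M}\prod_{a=1}^{M}\delta_{i_a,\,j_{\sigma(a)}}.
\end{equation*}
This identity follows from the unitary invariance of $d\mu$ together with Schur--Weyl duality, or directly from realising $d\mu$ as a complex Gaussian measure conditioned to the sphere and performing the Wick contractions. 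Since $d\mu$ has already been introduced via the Haar measure on $\mathrm{U}(N)$, I would record this as a short preliminary lemma (or cite~\cite{measure}); checking for $M=1$ gives $\langle z_i\bar{z}_j\rangle_0=\delta_{ij}/N$, which is the expected sanity check.

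Specialising to $M=2m$ with the labels $\{1,\dots,m,1',\dots,m'\}$ yields
\begin{equation*}
\Big\langle \prod_{j=1}^{m} z_{k_j}z_{k_{j'}}\bar{z}_{l_j}\bar{z}_{l_{j'}}\Big\rangle_0
= \frac{1}{N(N+1)\cdots(N+2m-1)}\sum_{p\in\mathrm{S}_{2m}}\prod_{a}\delta_{k_a,\,l_{p(a)}},
\end{equation*}
where $a$ ranges over the $2m$ labels. Substituting this into the expression for $\langle H^m\rangle_0$ stated immediately before the theorem and interchanging the (finite) sums over $\bm{k}$, $\bm{l}$, and $p$, I then perform the $\bm{l}$-sum. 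For each fixed $p$ the product of Kronecker deltas forces $l_b=k_{p^{-1}(b)}$ for every label $b$, so the sum over $\bm{l}$ collapses completely and each $l_j,l_{j'}$ appearing inside $\D$ is replaced by $k_{p^{-1}(j)},k_{p^{-1}(j')}$.

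This produces
\begin{equation*}
\langle H^m\rangle_0
= \frac{1}{N(N+1)\cdots(N+2m-1)}\sum_{\bm{k}\in\Z_d^{2mn}}\sum_{p\in\mathrm{S}_{2m}}\prod_{j=1}^{m}\D\bigl(k_j,k_{j'};\,k_{p^{-1}(j)},k_{p^{-1}(j')}\bigr),
\end{equation*}
and relabeling $p\mapsto p^{-1}$, which is a bijection of $\mathrm{S}_{2m}$, turns $p^{-1}$ into $p$ and gives exactly~\eqref{eq:momentsimplified}. No genuine obstacle is expected: the only substantive ingredient is the sphere moment formula, which is classical, and the rest is purely bookkeeping with delta functions.
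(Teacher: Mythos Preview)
Your proposal is correct and is precisely the standard argument: the paper itself does not spell out a proof but simply states that the qubit case was done in~\cite{cumulants} and that the extension to qudits is a carbon copy. Your use of the sphere moment formula, the collapse of the $\bm{l}$-sum via the Kronecker deltas, and the final relabeling $p\mapsto p^{-1}$ is exactly that carbon-copy argument, so there is nothing to add.
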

This theorem was given in~\cite{cumulants} for qubits, $d=2$. The proof of its extension to qudits $d>2$ is a carbon copy of the proof for qubits.

By defining the square brackets
\begin{equation*}
 [p(1)\ p(1'), \dots, p(m)\ p(m')]:=\sum_{\bm{k}\in  \mathbb{Z}_d^{2mn}}\prod_{j=1}^{m} \D(k_{j},k_{j'};k_{p(j)},k_{p(j')}), 
\end{equation*}
with $p\in\mathrm{S}_{2m}$, 
Eq.~(\ref{eq:momentsimplified}) becomes
\begin{equation}
\label{eq:cumulants2}
\langle H^m\rangle_0=\frac{1}{N(N+1)\dots (N+2m-1)}\sum_{p\in \mathrm{S}_{2m}}[p(1)\ p(1'), \dots, p(m)\ p(m')].
\end{equation}
As promised we can give a diagrammatic representation of the terms in the sum.
Each pair $(k_j,k_{j'})$  can be associated to a vertex of a graph from which two edges go out and two go in. The first two edges are labeled by $k_{p(j)}$ and $k_{p(j')}$, and the latter are $k_j$ and $k_{j'}$, see fig.~\ref{fig:cactus1}.
\begin{figure}[!h]
   \centering
   \includegraphics[scale=0.20]{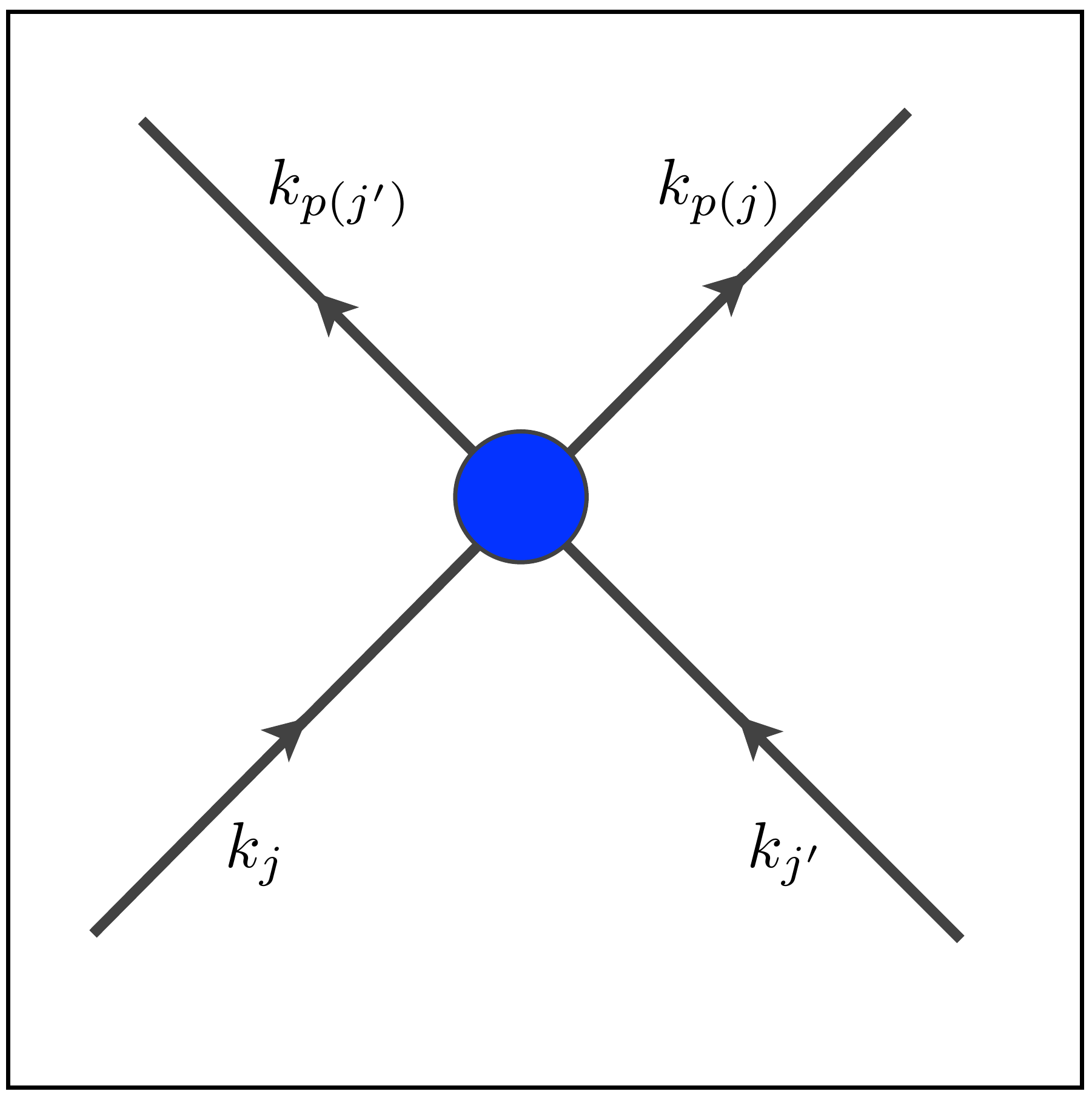}
 \caption{Graphical representation of the interaction of each pair $(k_i,k_{i'})$: each vertex with 4 edges, two going in and two going out.}
  \label{fig:cactus1}
\end{figure}
\begin{example}
\label{ex:2pt}
The square brackets $[1\ 2,1'\ 2']$ leads to the graph in fig.~\ref{fig:cactus2}.
\end{example}
\begin{figure}[!h]
  \centering
   \includegraphics[scale=0.20]{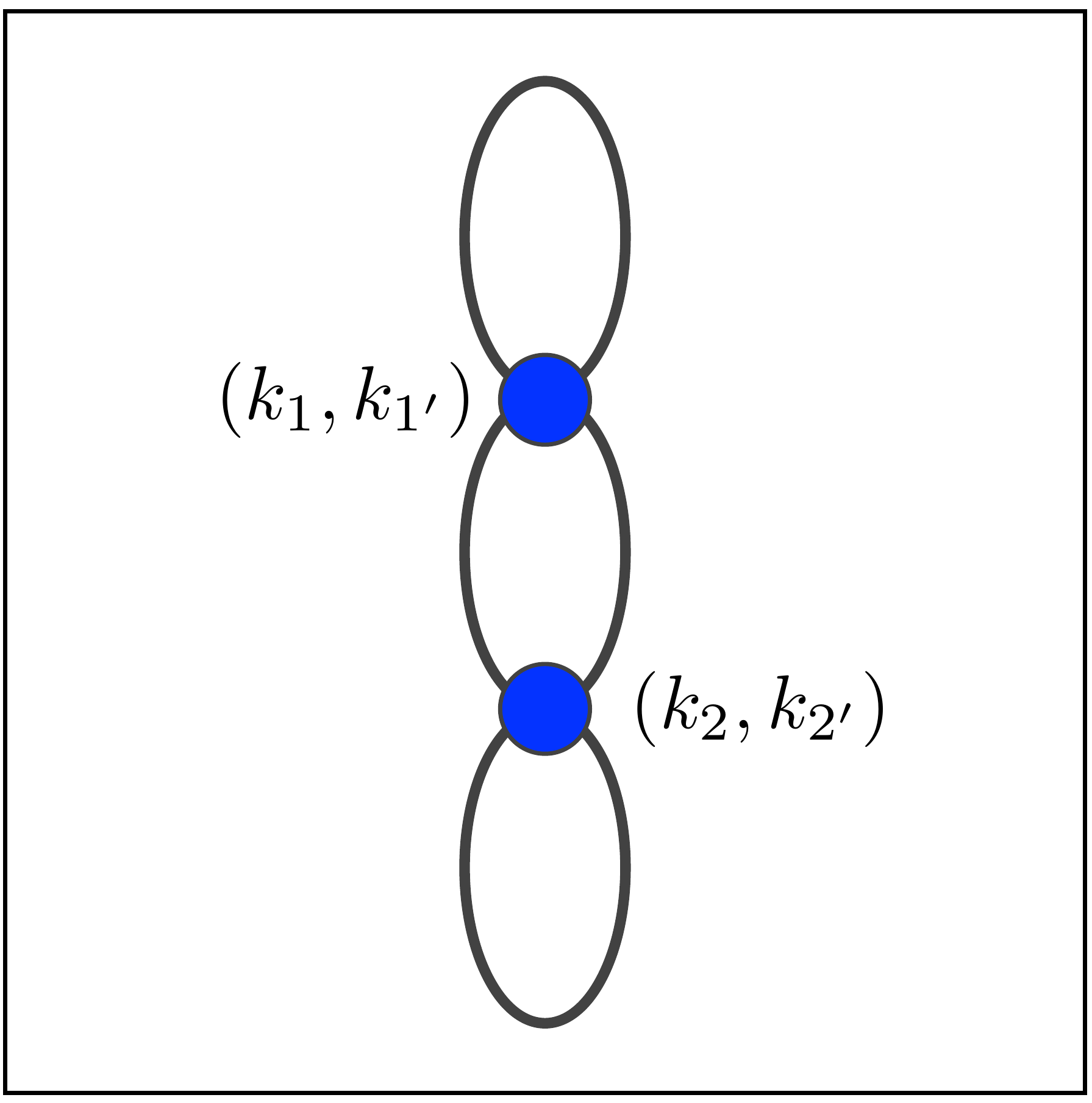}
 \caption{Graph with two points representing $[1\ 2,1'\ 2']$.}
 \label{fig:cactus2}
\end{figure}

It is possible to rephrase some of the previous results in terms of these diagrams. Indeed, Eq.~(\ref{eq:con_law}) can be interpreted as a current conservation law, i.e. the current going into a vertex has to be the same as the current that goes out, see fig.~\ref{fig:cactus1}. Moreover, the symmetries of the coupling function $\D$, given in Theorem~\ref{thm:symmetry}, are translated in a degeneracy of the graphs. For instance, the square brackets in example~\ref{ex:2pt} leads to the same graph as
\begin{equation*}
[1\ 2,1'\ 2'],\, [2\ 1,2'\ 1'],\, [1'\ 2',1\ 2],\, [2'\ 1',1\ 2],\, [1\ 2,2'\ 1']
\end{equation*}
and so on.
\begin{example}
In terms of Feynman graphs we have
\begin{equation*}
\langle H\rangle_0=\frac{1}{N(N+1)}\left([1\ 1']+[1'\ 1]\right)=\frac{1}{N(N+1)}(\ \raisebox{-6.5 ex}{\includegraphics[scale=0.20]{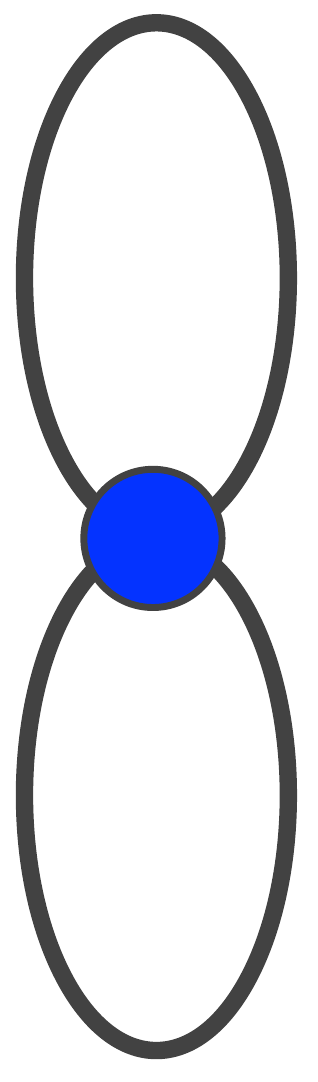}}+\raisebox{-6.5 ex}{\includegraphics[scale=0.20]{doppio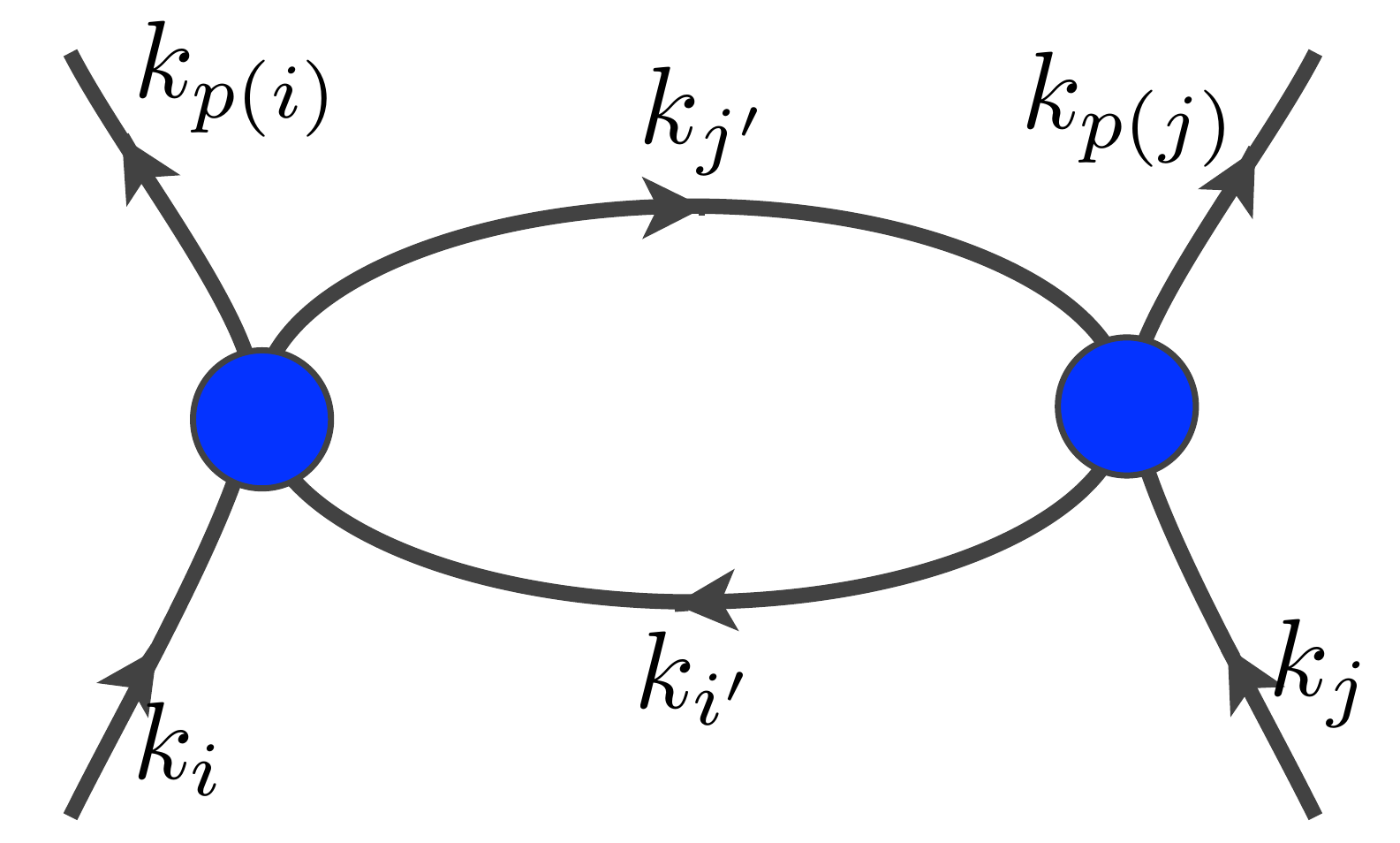}}\ )
\end{equation*}
\end{example}

\begin{definition}
A connected graph with $v\geq 2$ vertices is called \emph{cactus} if for every vertex  there exists a pair of edges such that removing them the graph becomes disconnected, otherwise the graph is called \emph{non-cactus}. A graph with $v=1$ is a cactus by definition.
\end{definition}
\begin{example}
The graph in fig.~\ref{fig:cactus2} is a cactus, while the graph in fig.~\ref{fig:cycle} is a non-cactus. Indeed, by removing a pair of edges from a vertex, the graph becomes one of the two graphs in fig.~\ref{fig:noncactusmonco}.
\begin{figure}[!h]
  \centering
  \includegraphics[scale=0.20]{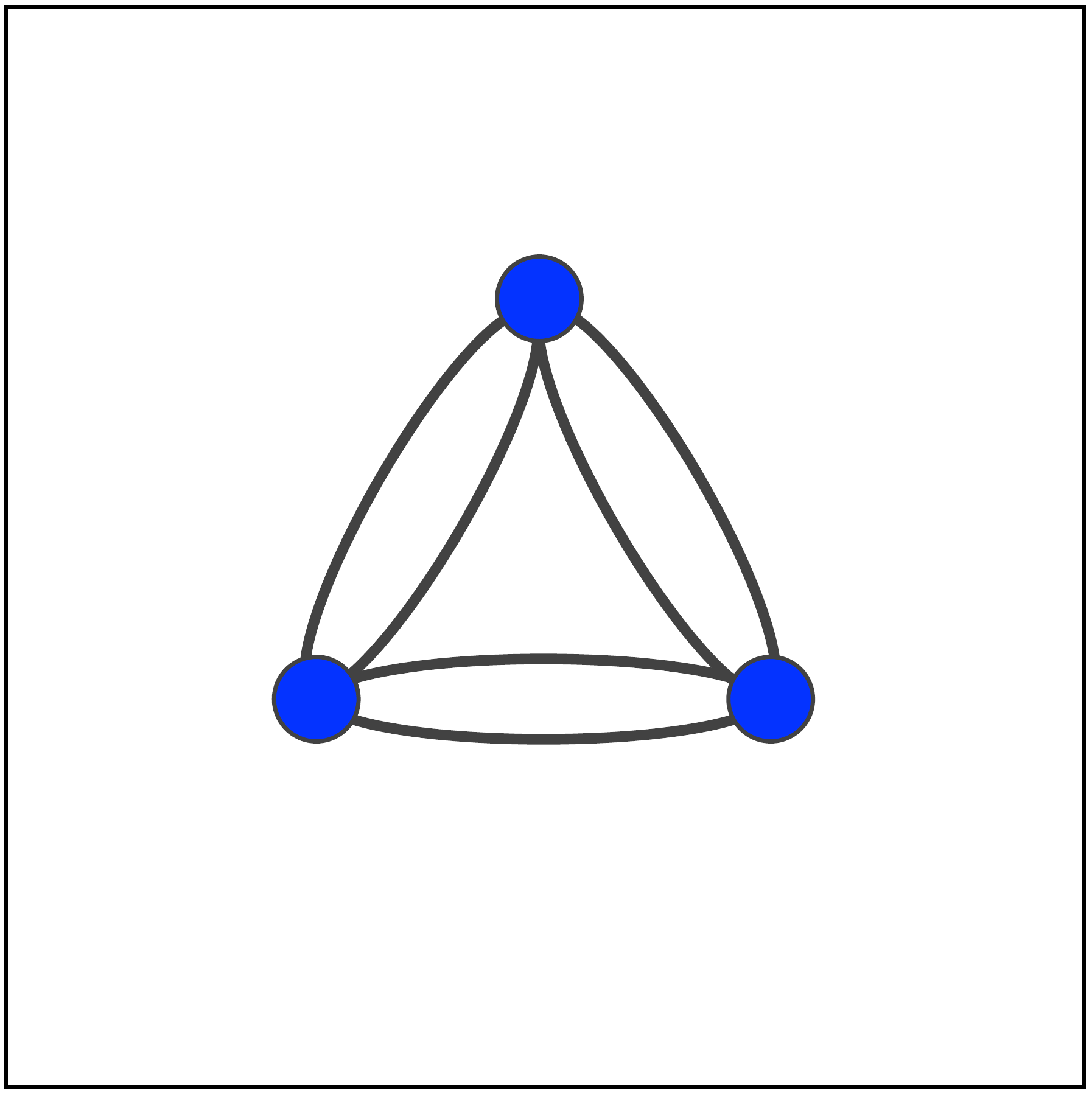}
 \caption{A graph with a cycle is a non-cactus.}
 \label{fig:cycle}
 \end{figure}
\begin{figure}[!h]
 \begin{minipage}[b]{7cm}
   \centering
   \includegraphics[scale=0.20]{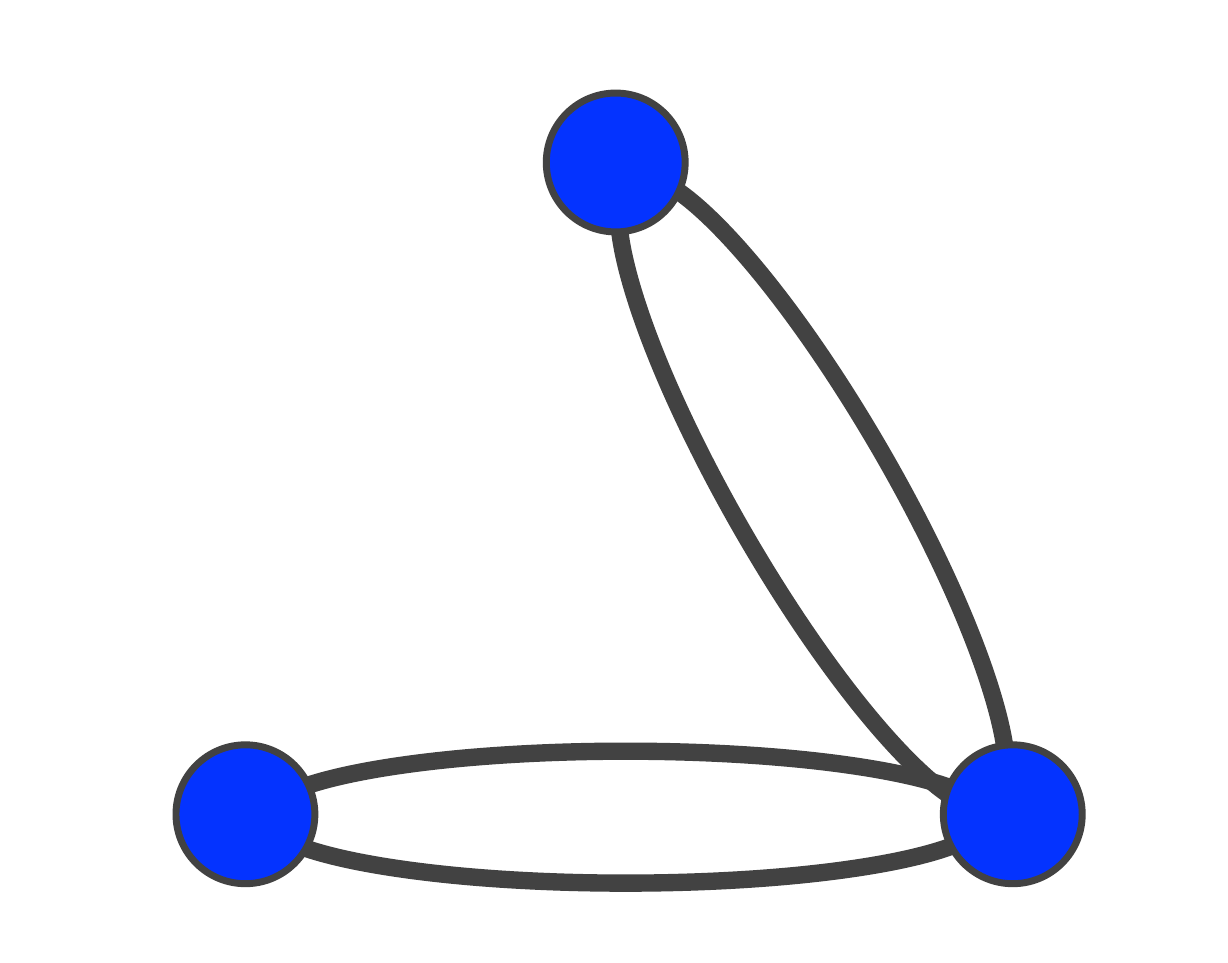}
 \end{minipage}
 \ \hspace{-15mm} 
 \begin{minipage}[b]{5cm}
  \centering
  \includegraphics[scale=0.20]{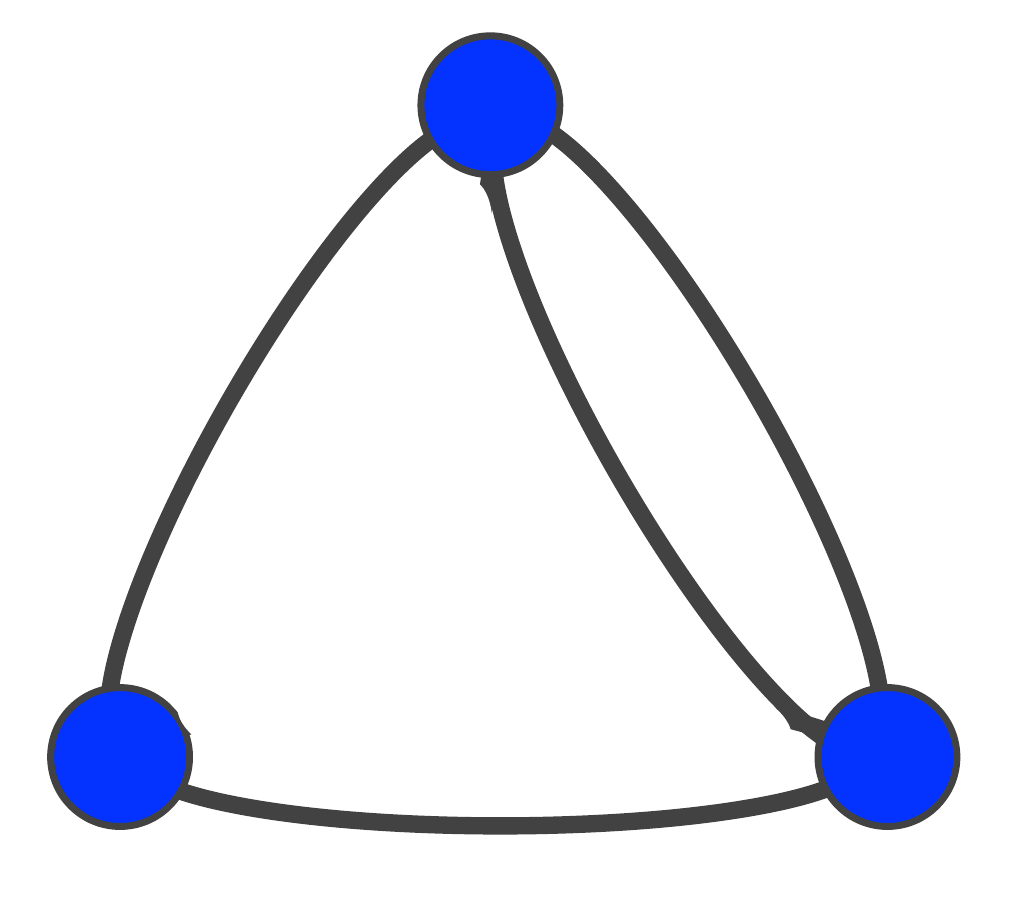}
 \end{minipage}
\caption{Removing two edges from a vertex of the non-cactus in Fig.~\ref{fig:cycle} leaves the graph connected.}
\label{fig:noncactusmonco}
\end{figure}
\end{example}
\subsection{Graph surgery}
\label{sec:deg} 
In this section we will study in detail the graphs introduced in the previous section. In particular we will compute the contribution that each graph gives to the moments~(\ref{eq:cumulants2}) and their degeneracy. In order to do this we will divide each graph in subgraphs and  will compute the degeneracy and the contribution of each single subgraph.
 
\begin{definition}
We call \textit{leaf} the subgraph of a graph represented by 
\begin{equation*}
[\dots, j\ p(j'),\dots]=\raisebox{-2.5 ex}{\includegraphics[scale=0.20]{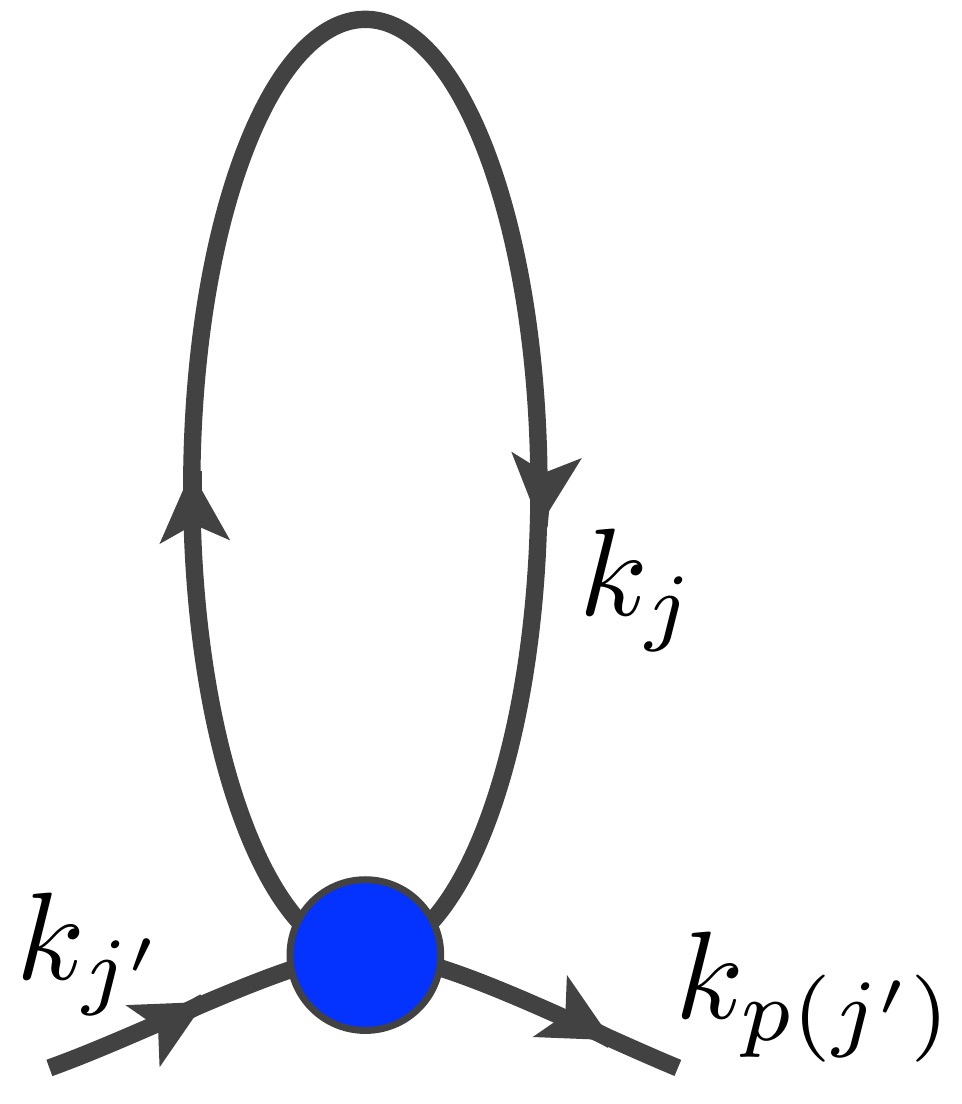}}.
\end{equation*}
\end{definition}
\begin{lemma}
\label{thm:leaf}
A leaf at the vertex $(k_j, k_{j'})$ gives the contribution:
\begin{equation*}
\delta_{k_{j'},k_{p(j')}} \frac{N_A+N_{\bar{A}}}{2}.
\end{equation*}
\end{lemma}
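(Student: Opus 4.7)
\medskip
\noindent\textbf{Proof plan.} The leaf notation $[\dots, j\ p(j'), \dots]$ encodes the condition $p(j)=j$, so the $\Delta$--factor at the vertex is $\Delta(k_j, k_{j'}; k_j, k_{p(j')})$ and the internal leaf variable to be summed out is $k_j$. My first step would be to feed the tuple $(k,k',l,l')=(k_j,k_{j'},k_j,k_{p(j')})$ into the closed form of Theorem~\ref{thm:delta}. The conservation law $k+k'=l+l'$ collapses to $k_{j'}=k_{p(j')}$, giving the announced Kronecker delta; the wedge constraint $(k-l)\wedge(k'-l)=0$ reduces to $0\wedge(k_{j'}-k_j)=0$, which is automatic since $0\wedge x =0$ for every $x\in\Z_d^n$. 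What remains is
\begin{equation*}
\sum_{k_j\in\Z_d^n} f(0,\,k_{j'}-k_j,\,n_A)\;\delta_{k_{j'},k_{p(j')}} \;=\; \delta_{k_{j'},k_{p(j')}} \sum_{v\in\Z_d^n} f(0,v,n_A),
\end{equation*}
after the change of variable $v=k_{j'}-k_j$.

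The core of the proof is therefore to evaluate $\sum_v f(0,v,n_A)$ using the expression~\eqref{eq:f}. First I would reorganize the sum by Hamming weight: there are $\binom{n}{w}(d-1)^w$ strings $v\in\Z_d^n$ with $|v|=w$, so
\begin{equation*}
\sum_{v\in\Z_d^n} f(0,v,n_A)=\frac{1}{2}\binom{n}{n_A}^{-1}\sum_{w=0}^{n}\binom{n}{w}(d-1)^w\left[\binom{n-w}{n_A}+\binom{n-w}{n_A-w}\right],
\end{equation*}
where the second binomial can be rewritten via $\binom{n-w}{n_A-w}=\binom{n-w}{n_{\bar A}}$. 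Next I would apply the elementary trinomial identity $\binom{n}{w}\binom{n-w}{n_A}=\binom{n}{n_A}\binom{n-n_A}{w}$ (both equal the multinomial $n!/(w!\,n_A!\,(n-w-n_A)!)$), and the analogous identity with $n_A$ replaced by $n_{\bar A}$. Using also $\binom{n}{n_A}=\binom{n}{n_{\bar A}}$, the prefactor cancels and what survives is
\begin{equation*}
\sum_{v\in\Z_d^n} f(0,v,n_A)=\frac{1}{2}\sum_{w=0}^{n}(d-1)^w\left[\binom{n-n_A}{w}+\binom{n-n_{\bar A}}{w}\right].
\end{equation*}

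The last step is a straightforward application of the binomial theorem: each inner sum truncates naturally (the binomials vanish for $w$ beyond their upper index) and yields $\sum_w\binom{n_{\bar A}}{w}(d-1)^w=d^{n_{\bar A}}=N_{\bar A}$ and $\sum_w\binom{n_A}{w}(d-1)^w=d^{n_A}=N_A$. Combining,
\begin{equation*}
\sum_{v\in\Z_d^n} f(0,v,n_A)=\frac{N_A+N_{\bar A}}{2},
\end{equation*}
so multiplying back by $\delta_{k_{j'},k_{p(j')}}$ gives exactly the claimed contribution. Conceptually, the proof is just unpacking $\Delta$ at a self--loop and performing a geometric sum, so no step is genuinely delicate; the only place where one has to be careful is the combinatorial rearrangement converting $\binom{n}{w}\binom{n-w}{n_A}$ into $\binom{n}{n_A}\binom{n-n_A}{w}$, which is what allows the binomial theorem to decouple the two subsystems and produce the clean mean $(N_A+N_{\bar A})/2$.
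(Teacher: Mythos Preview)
Your proof is correct and follows essentially the same approach as the paper: plug the leaf configuration into the closed form of $\Delta$ from Theorem~\ref{thm:delta}, extract the Kronecker delta from the conservation law, and sum $f$ over the free leaf variable. The paper's proof simply asserts $\sum_{k_j} f(k_{j'}-k_j,0)=(N_A+N_{\bar A})/2$ without details, whereas you carry out the weight-by-weight reorganization and the trinomial identity explicitly; this is a welcome elaboration rather than a different method.
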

\begin{proof}
In a graph the contribution of a leaf is:
\begin{equation*}
[\dots, j\ p(j'),\dots]=\sum_{\substack{k_i:i\neq j\\ k_{i'}}}\dots\sum_{k_j}\D(k_j,k_{j'};k_j,k_{p(j')}).
\end{equation*}
Since this is the only term in which the index $j$ appears, it can be isolated from the rest:
\begin{eqnarray*}
\sum_{k_j}\D(k_j,k_{j'};k_j,k_{p(j')})&=&\delta_{k_{j'},k_{p(j')}}\sum_{k_j} f(k_{j'}-k_j,0)\\
&=&\delta_{k_{j'},k_{p(j')}} \frac{N_A+N_{\bar{A}}}{2}.
\end{eqnarray*}
\flushright\qed
\end{proof}
\begin{figure}[!h]
   \centering
   \includegraphics[scale=0.20]{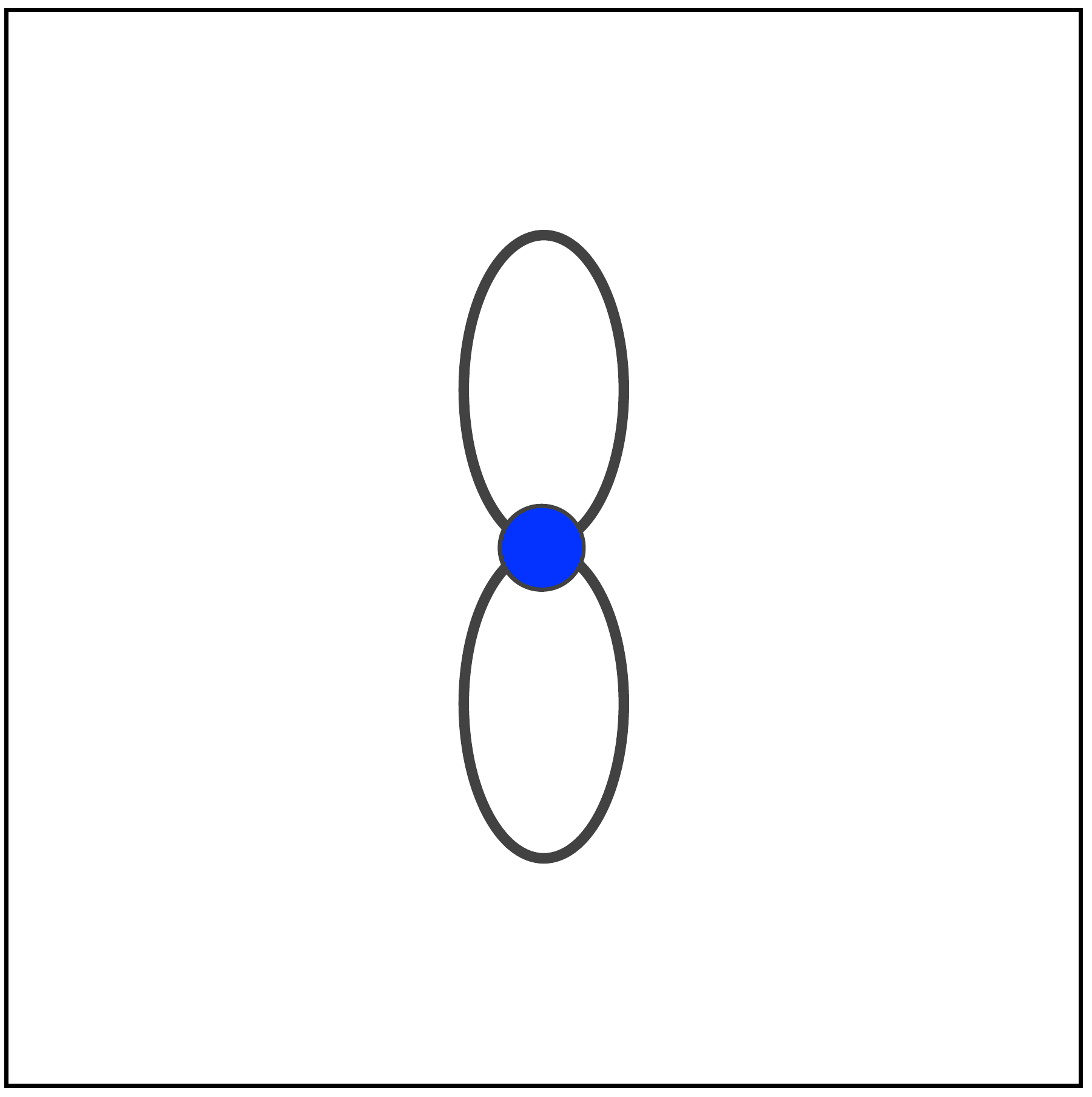}
\caption{The only graph with a single vertex.}
 \label{fig:otto}
 \end{figure}
\begin{example}
As an example consider the Feynman graph in fig.~\ref{fig:otto}. By computing the contribution of the upper leaf as in the previous lemma we find
\begin{equation*}
\raisebox{-6.5 ex}{\includegraphics[scale=0.20]{doppioloop.pdf}}=\sum_{k_{1'}}\delta_{k_{1'},k_{1'}} \frac{N_A+N_{\bar{A}}}{2}=N\; \frac{N_A+N_{\bar{A}}}{2}.
\end{equation*}
\end{example}
\begin{definition}
A \textit{loop} is the subgraph of a graph represented by 
\begin{equation*}
[\dots, p(i)\ j',\dots, p(j)\ i',\dots]=\raisebox{-6.5 ex}{\includegraphics[scale=0.20]{loop.pdf}}.
\end{equation*}
\end{definition}
\begin{theorem}
\label{thm:cactus_cont}
Each cactus graph gives a contribution 
\begin{equation*}
N\left(\frac{N_A+N_{\bar{A}}}{2}\right)^v,
\end{equation*}
where $v$ is the number of vertices in the graph. 
\end{theorem}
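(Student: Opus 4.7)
I would prove Theorem~\ref{thm:cactus_cont} by induction on the number of vertices $v$ of the cactus. The base case $v = 1$ is the double-loop of Fig.~\ref{fig:otto}, which has already been computed in the example immediately following Lemma~\ref{thm:leaf} and gives exactly $N \cdot (N_A + N_{\bar{A}})/2$, matching the formula.

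For the inductive step with $v \geq 2$, the first task is to show that any such cactus contains a leaf vertex (i.e.\ a vertex with a self-loop), in the sense of the definition preceding Lemma~\ref{thm:leaf}. The cactus hypothesis says that every vertex admits a disconnecting pair of edges, so the diagram decomposes into maximal pair-blocks (each being either a self-loop at one vertex or a double edge between two vertices) glued at cut vertices, and the corresponding block-cut structure is a tree. Since every vertex is 4-valent, hence has exactly two pair-block incidences, a terminal block of this tree (one touching a single cut vertex) must house a non-cut vertex $j$ whose remaining two edges close into a self-loop; this vertex $j$ plays the role of the leaf.

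Applying Lemma~\ref{thm:leaf} at vertex $j$ then extracts the scalar factor $(N_A + N_{\bar{A}})/2$ and produces the Kronecker delta $\delta_{k_{j'},\,k_{p(j')}}$, which fuses the remaining outgoing edge $k_{j'}$ at $j$ with its incoming partner. Geometrically this contracts $j$ together with its self-loop, splicing the two remaining edges into a single edge of the reduced diagram. I would then verify that the resulting diagram on $v-1$ vertices is itself a cactus (every remaining vertex inherits a disconnecting pair of edges from the original graph), so that the inductive hypothesis gives it the value $N \cdot ((N_A + N_{\bar{A}})/2)^{v-1}$; combining with the leaf factor yields the advertised $N \cdot ((N_A + N_{\bar{A}})/2)^v$.

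The hardest part will be this combinatorial step: establishing that every cactus with $v \geq 2$ admits a leaf vertex and that contracting it yields a cactus on $v-1$ vertices. This rests on a careful analysis of the tree of pair-blocks in a 4-regular cactus, together with careful bookkeeping of the index identifications produced by the Kronecker delta, especially in configurations where the contraction causes two originally distinct edges to coincide and give rise to a new self-loop at a neighbouring vertex.
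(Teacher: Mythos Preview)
Your proposal is correct and follows essentially the same inductive scheme as the paper: peel off a leaf using Lemma~\ref{thm:leaf}, observe that this yields a cactus with one fewer vertex, and iterate down to the single-vertex double-loop. The only difference is that you spell out the combinatorial facts that the paper merely asserts---namely, that every cactus with $v\geq 2$ possesses a leaf and that contracting it again produces a cactus---via a block--cut tree analysis, which is a reasonable way to justify those claims.
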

\begin{proof}
We can compute the contribution of a graph decomposing it in its elementary subgraphs. 
From its definition, we can deduce that a cactus has at least one leaf. Moreover, notice that after we have computed the contribution of the leaf, the remaining terms in the square brackets correspond to a Feynman graph with $v-1$ vertices. This new graph is essentially the same as the graph with $v$ vertices but without a leaf and with a loop transformed into a leaf. Besides, the removal of a leaf leaves the structure of the graph invariant, meaning that it transforms a cactus in a cactus and a non-cactus in a non-cactus. 

We can iterate the computation obtaining the contribution
\begin{equation*}
\frac{N_A+N_{\bar{A}}}{2}
\end{equation*}
for each vertex. At the end of this computation the remaining term will be $\sum_k 1=N$, and this concludes the proof. 
\qed
\end{proof}
The evaluation of the contribution of non-cactus graphs is not as simple as the one of the cactus. Nevertheless, we can give an upper bound for it, by bounding the loop contributions.

\begin{theorem}
\label{thm:noncactus}
A loop gives a contribution that is lower or equal than $(N_A+N_{\bar{A}})/2$.
\end{theorem}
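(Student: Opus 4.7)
The plan is to reduce the double sum defining the loop contribution to a single sum that can be evaluated in closed form, and then bound the result by $(N_A+N_{\bar A})/2$ using elementary binomial inequalities.

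\emph{Collapsing one of the two couplings.} Write the loop contribution as
\begin{equation*}
L := \sum_{k_{i'},k_{j'}}\D(k_i,k_{i'};k_{p(i)},k_{j'})\,\D(k_j,k_{j'};k_{p(j)},k_{i'}),
\end{equation*}
and bound the second factor crudely by its conservation delta alone,
\begin{equation*}
\D(k_j,k_{j'};k_{p(j)},k_{i'}) \leq \delta_{k_j+k_{j'},\,k_{p(j)}+k_{i'}},
\end{equation*}
which follows from Theorem~\ref{thm:delta} together with $f\leq 1$ and the fact that the $\wedge$-indicator is at most $1$. Combined with the conservation delta already present in the first $\D$, the two deltas are jointly satisfiable only if the \emph{external} relation $k_i+k_j=k_{p(i)}+k_{p(j)}$ holds, in which case they fix $k_{j'}=k_i+k_{i'}-k_{p(i)}$. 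If the external relation fails, $L=0$; otherwise, introducing $\alpha:=k_i-k_{p(i)}$ and $\beta:=k_{i'}-k_{p(i)}$, the surviving sum reads
\begin{equation*}
L \leq \sum_{\beta\in\Z_d^n}\delta_{0,\,\alpha\wedge\beta}\,f(\alpha,\beta).
\end{equation*}

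\emph{Evaluating the surviving sum.} The constraint $\alpha\wedge\beta=0$ forces $\beta_j=0$ at every position $j$ where $\alpha_j\neq 0$, so $\beta$ ranges freely in $\Z_d$ on the remaining $n-|\alpha|$ positions. Stratifying by $|\beta|=k$ and using the explicit form~\eqref{eq:f} of $f$, the sum becomes a pair of double binomial sums that collapse via the Vandermonde-type identities
\begin{equation*}
\binom{n-r}{k}\binom{n-r-k}{n_A-r}=\binom{n-r}{n_A-r}\binom{n-n_A}{k},\qquad \binom{n-r}{k}\binom{n-r-k}{n_A-k}=\binom{n-r}{n_A}\binom{n_A}{k},
\end{equation*}
(with $r=|\alpha|$), followed by the binomial theorem $\sum_k\binom{m}{k}(d-1)^k=d^m$. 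This yields the closed form
\begin{equation*}
\sum_{\beta}\delta_{0,\,\alpha\wedge\beta}\,f(\alpha,\beta) = \frac{1}{2\binom{n}{n_A}}\left[\binom{n-|\alpha|}{n_A-|\alpha|}N_{\bar A}+\binom{n-|\alpha|}{n_A}N_A\right].
\end{equation*}

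\emph{Concluding.} Since $\binom{n-|\alpha|}{n_A-|\alpha|}\leq \binom{n}{n_A}$ and $\binom{n-|\alpha|}{n_A}\leq \binom{n}{n_A}$ (each counts the $n_A$-subsets of $S$ that respectively contain or avoid a fixed subset of size $|\alpha|$, hence a subfamily of all $n_A$-subsets), one obtains $L\leq (N_A+N_{\bar A})/2$. I expect the main obstacle to be the combinatorial bookkeeping of the middle step: translating the $\wedge$-constraint into a support restriction on $\beta$ and spotting the two Vandermonde-type identities that make the double sums telescope. One also has to verify, using the vanishing convention of Remark~\ref{rmk:binomial}, that the bound remains valid in the edge cases $|\alpha|>n_A$ or $|\alpha|>n_{\bar A}$, where some of the binomial coefficients in the closed form simply vanish and thus only strengthen the bound.
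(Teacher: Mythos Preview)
Your proof is correct and follows a genuinely different route from the paper's. The paper expands both coupling functions in full, carries all four constraints (two conservation deltas and two $\wedge$-indicators) through the algebra, then bounds both factors of $f$ by $1$ and appeals to a counting of the admissible values of the summation index to conclude the sum is at most $N_A$. You instead relax the second $\D$ to its bare conservation delta at the outset; this collapses the double sum to the single sum $\sum_\beta\delta_{0,\alpha\wedge\beta}f(\alpha,\beta)$---a generalised version of the leaf sum of Lemma~\ref{thm:leaf}---which you then evaluate in closed form via the two Vandermonde-type product identities and the binomial theorem, before bounding with $\binom{n-r}{n_A-r}\leq\binom{n}{n_A}$ and $\binom{n-r}{n_A}\leq\binom{n}{n_A}$. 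Your route trades away one set of constraints for an exact evaluation of what remains; the payoff is a fully explicit, self-contained argument that also shows the bound is saturated precisely when $\alpha=0$, i.e., when the loop degenerates to a leaf. The paper's approach retains more constraints and aims for the nominally sharper intermediate bound $N_A$, but its counting step is stated tersely; your computation provides an independent and more transparent justification of the same inequality.
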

\begin{proof} 
We can isolate the contribution of each single loop obtaining
\begin{equation*}
[\dots, p(i)\ j',\dots, p(j)\ i',\dots]=\sum_{k_{i'},k_{j'}}\D(k_i,k_{i'};k_{p(i)},k_{j'})\D(k_{j},k_{j'};k_{p(j)},k_{i'}).
\end{equation*}
If we substitute here the expression of the coupling function in theorem~\ref{thm:delta} we find
\begin{eqnarray}
\sum_{k_{i'},k_{j'}}\delta_{k_i+k_{i'},k_{p(i)}+k_{j'}}\ \delta_{k_{j}+k_{j'},k_{p(j)}+k_{i'}}\ \delta_{0,(k_i-k_{p(i)})\wedge(k_{i'}-k_{p(i)})}\  \nonumber \\
 \times\ \delta_{0,(k_{j}-k_{p(j)})\wedge(k_{j'}-k_{p(j)})}  f(k_i-k_{p(i)},k_{i'}-k_{p(i)})f(k_{j}-k_{p(j)},k_{j'}-k_{p(j)})\nonumber\\
= \delta_{k_i+k_{j},k_{p(i)}+k_{p(j)}}\sum_{l} \delta_{0,(k_i-k_{p(i)})\wedge\ l}\ f(k_i-k_{p(i)},l)\ f(k_i-k_{p(i)},l+k_{p(i)}-k_{j})\nonumber\\
\times \sum_{k_{j'}}\delta_{k_{j}+k_{j'},k_{p(j)}+l+k_{p(i)}}\ \delta_{0,(k_{j}-k_{p(j)})\wedge(k_{j'}-k_{p(j)})}\nonumber\\
= \delta_{k_i+k_{j},k_{p(i)}+k_{p(j)}}\sum_{l}  f(k_i-k_{p(i)},l)\ f(k_i-k_{p(i)},l+k_{p(i)}-k_{j})\nonumber\\ \times \delta_{0,(k_i-k_{p(i)})\wedge\ l}\ \delta_{0,(k_i-k_{p(i)})\wedge(l+k_{p(i)}-k_{j})}.\nonumber\\
\label{eq:dim_loop}
\end{eqnarray}
It is straightforward to prove that $f(k,l)\leq 1$. Moreover, the condition that the Kronecker deltas have to be different from zero and the assumption that the binomial coefficient is zero if one of its argument is negative, remark~\ref{rmk:binomial}, fix the positions in which $l$ can be different from $0$ and give the condition:
\begin{equation*}
|l|\leq n_A=\left[\frac{n}{2}\right],
\end{equation*} 
and so the last expression in~(\ref{eq:dim_loop}) can be bounded by $N_{A}=d^{\left[\frac{n}{2}\right]}$ and thus by $(N_A + N_{\bar{A}})/2$. 
\qed
\end{proof}

\begin{figure}[t]
   \centering
   \includegraphics[scale=0.20]{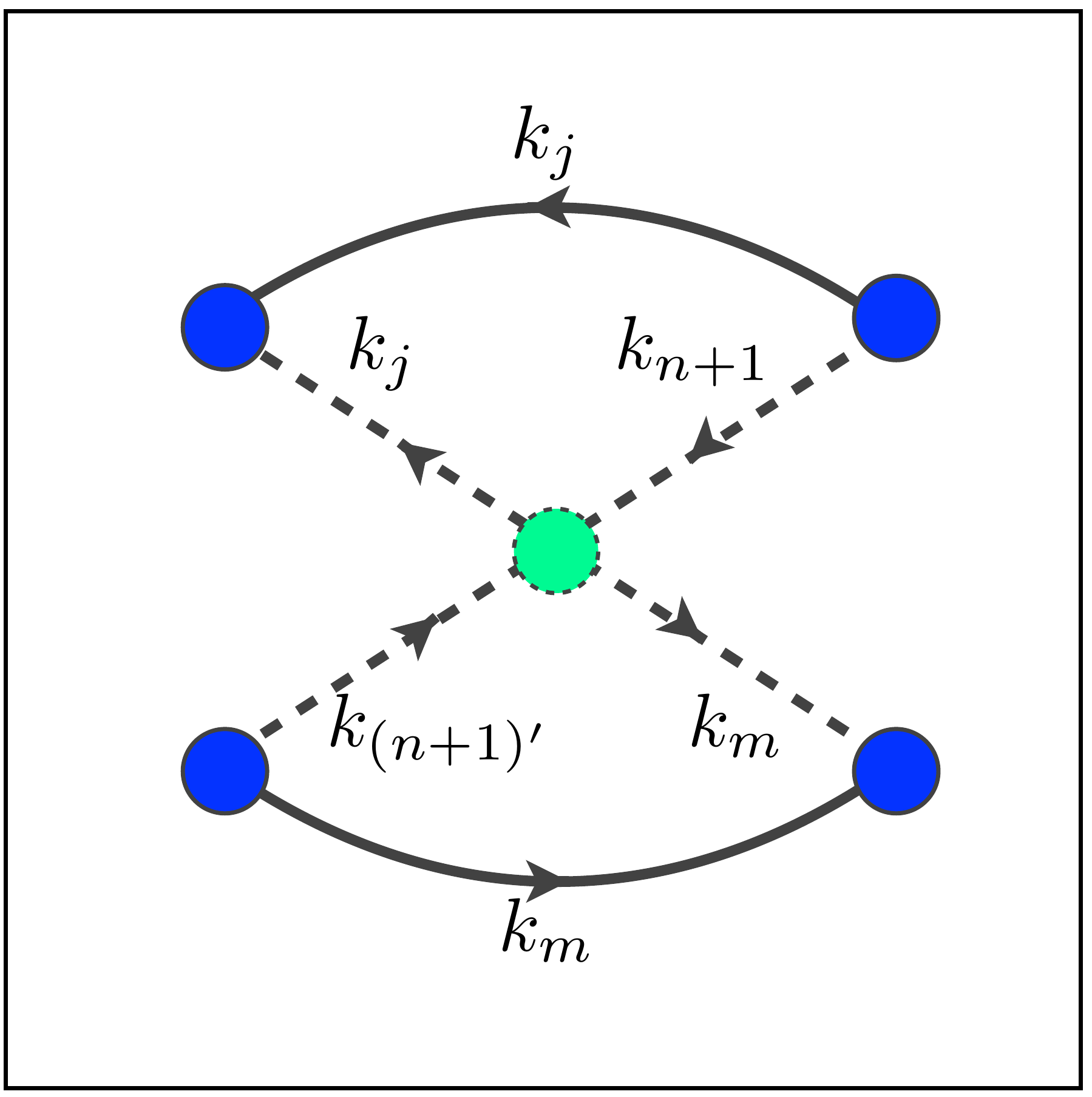}
   \caption{Pinching operation}
\label{fig:pinching}
\end{figure}

In the next part of this section we will give an explicit way to compute the degeneracy of graphs. In particular, we prove that we can compute the degeneracy of a generic $(v+1)$-vertex graph (we will call it daughter) knowing only the degeneracy $\deg$ of a $v$-vertex graph from which the graph is generated (we will call it mother). 
In the following, given a $(v+1)$-vertex graph $G$, we will call $G^{(m)}$ its mother graph. In this notation $G^{(m^2)}$ will be the mother of the mother and so on until the 1-vertex graph is obtained. 

\begin{definition}
We define \textit{pinching} the operation that connects two edges adding a vertex to the graph (see fig.~\ref{fig:pinching}).
\end{definition}
The following proposition illustrate the degeneracy of this operation.
\begin{proposition}
Adding a vertex through pinching increases the degeneracy $\deg(G)$ of a graph  by factor:
\begin{enumerate}[a)]
\item $4$ if the  four vertices are non degenerate or if they degenerate into two but the edges have different directions;
\item $2$ if the four vertices degenerate into one;
\item $2$ if the four vertices degenerate into two and the directions  of the two edges are the same.
\end{enumerate}
\end{proposition}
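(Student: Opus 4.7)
\smallskip

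\noindent\textbf{Proof proposal.} The plan is to interpret the degeneracy $\deg(G)$ as the number of distinct permutations $p\in \mathrm{S}_{2m}$ in~\eqref{eq:cumulants2} whose associated square bracket reproduces the same underlying graph, and then to count, for a fixed mother graph $G^{(m)}$, how many of the new permutations $p'\in\mathrm{S}_{2(m+1)}$ collapse to the same daughter graph $G$ obtained by pinching. Since pinching inserts a new vertex on two chosen directed edges $e_1, e_2$ of $G^{(m)}$, I first observe that, after labelling the two incoming and two outgoing half-edges at the new vertex, there is a $2\times 2=4$-fold ambiguity: the half-edges inherited from $e_1$ may be placed in either of the two \emph{in/out} slots of the new vertex, and similarly for those coming from $e_2$. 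Relabelling these slots is a symmetry of the coupling function $\D$ (by Theorem~\ref{thm:symmetry}, which makes $\D$ invariant under the swap of the two primed/unprimed indices at each vertex), so all four choices produce the same unlabelled Feynman graph.

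In the generic situation (case a), the four endpoints of $e_1$ and $e_2$ in $G^{(m)}$ are such that none of these $2\times 2$ assignments gets identified with another through a symmetry of the mother graph — either because the endpoints are four distinct vertices, or because when they coincide into two the opposite orientations of the two half-edges rule out any further identification. Hence the degeneracy is multiplied exactly by the raw factor $4$. In case (b), the two edges $e_1,e_2$ are both loops at the same vertex of $G^{(m)}$, so that an additional $\Z_2$-automorphism of the resulting configuration exchanges $e_1\leftrightarrow e_2$ (equivalently, it swaps the two loops of the tadpole that becomes attached to the new vertex); the four choices pair up into two equivalence classes and the factor is $4/2=2$. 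In case (c), the two edges $e_1$ and $e_2$ are parallel, with the same source and the same target, so again an extra $\Z_2$ automorphism of $G^{(m)}$ (exchanging the two parallel edges) identifies the four assignments into two equivalence classes, once more giving the factor $2$.

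Putting these three cases together and combining with the degeneracy of the mother graph gives $\deg(G)=c\cdot \deg(G^{(m)})$ with $c\in\{4,2,2\}$ as stated. The only genuinely delicate point will be to argue, in case (a), that when the four endpoints degenerate into two \emph{but with opposite orientations} no further automorphism of $G^{(m)}$ can identify the $2\times 2$ choices (and hence that $4$, not $2$, is correct there), while in case (c) the coincidence of orientations is precisely what produces the extra identification; I would handle this by writing out the action of the candidate swap on the directed edges and checking that in case (a) it reverses an arrow (hence is not a symmetry) while in case (c) it respects the arrows. The rest is bookkeeping on the $2\times 2$ slot-assignments at the inserted vertex.
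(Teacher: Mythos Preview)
Your approach is essentially the paper's own: both count, for a fixed bracket representation of the mother graph, how many distinct bracket representations of the daughter arise, getting the raw factor $4$ from the two independent slot-swaps at the new vertex and then halving it in the degenerate cases via an extra $\Z_2$ identification; the paper does this by writing out the explicit brackets $[\dots,(v+1)\ (v+1)',\dots]$ and enumerating the equivalent ones, while you phrase the same count in the language of graph automorphisms, but the logic is identical. One small correction: the swap symmetry of $\D$ in its primed/unprimed arguments that you use is \emph{not} the content of Theorem~\ref{thm:symmetry} (that theorem concerns the $\mathrm{S}_d^n\rtimes \mathrm{S}_n$ action on the qudit labels $k\in\Z_d^n$); the relevant symmetry $\D(k,k';l,l')=\D(k',k;l,l')=\D(k,k';l',l)$ follows directly from the definition~\eqref{eq:couplingfuncdef} or the explicit expression~\eqref{eq:ex_delta}.
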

\begin{remark} 
From now on we suppose to start from a $v$-vertex graph and to add a vertex labeled by $(v+1,(v+1)')$.
\end{remark}
\begin{proof}
In the first case (fig.~\ref{fig:pinching}) the mother graph is of the form
\begin{equation*}
[\ \dots,j\ p(i'), \dots, m\ p(l'), \dots\ ],
\end{equation*}
then the daughter graph can be represented by:
\begin{equation*}
[\ \dots,v'\ p(i'), \dots, (v+1)\ p(l'), \dots, j\ m].
\end{equation*}
After the exchange of $v'$ and $v+1$ or $j$ and $m$ the graph is left unchanged, therefore the degeneracy of this new graph has an extra factor of $4$ compared with the degeneracy of the mother,  i. e. $\deg(G)=4\deg(G^{(m)})$.

In the second case, see fig.~\ref{fig:eye}(a), the $v$-vertex graph can be represented by 
\begin{equation*}
[\ \dots,j\ j', \dots\ ],
\end{equation*}
and the pinching leads to the representation
\begin{equation*}
[\ \dots,(v+1)\ (v+1)', \dots, j\ j' \ ],
\end{equation*}
or equivalently to
\begin{equation*}
[\ \dots,(v+1)'\ (v+1), \dots, j\ j' \ ].
\end{equation*}
Since there are no other possibilities, $\deg(G)=2\deg(G^{(m)})$.

In the last case, fig.~\ref{fig:eye}(b), we start from the graph
\begin{equation*}
[\ \dots,i\ j', \dots \ ],
\end{equation*}
and arrive to 
\begin{equation*}
[\ \dots,(v+1)\ (v+1)', \dots, i\ i' \ ],
\end{equation*}
where again we have an extra factor $2$ of degeneracy.
\qed
\end{proof}

\begin{figure}[t]
 \begin{minipage}[b]{7cm}
   \centering
   \includegraphics[scale=0.20]{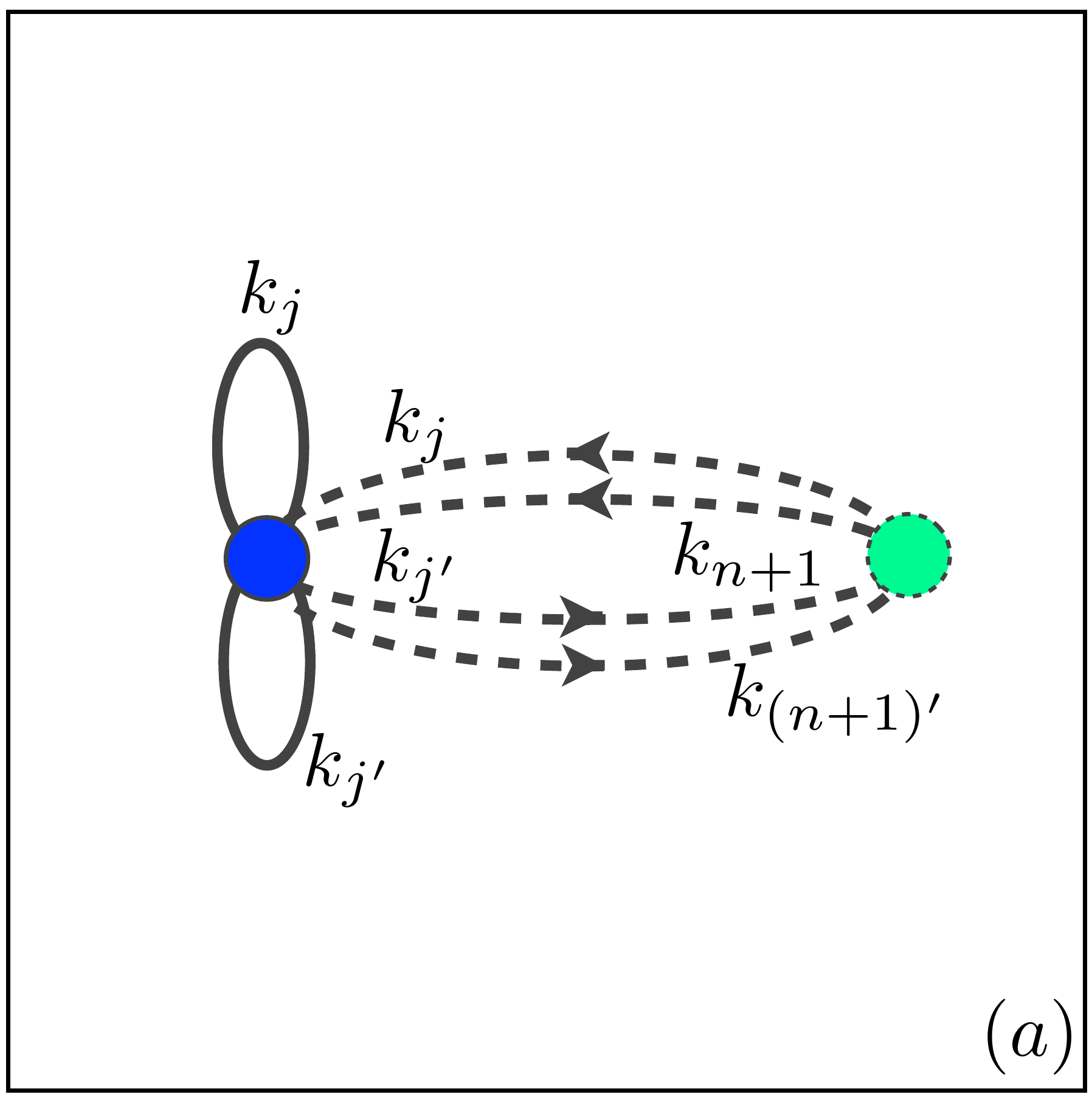}
 \end{minipage}
 \ \hspace{-15mm} 
 \begin{minipage}[b]{7cm}
  \centering
  \includegraphics[scale=0.20]{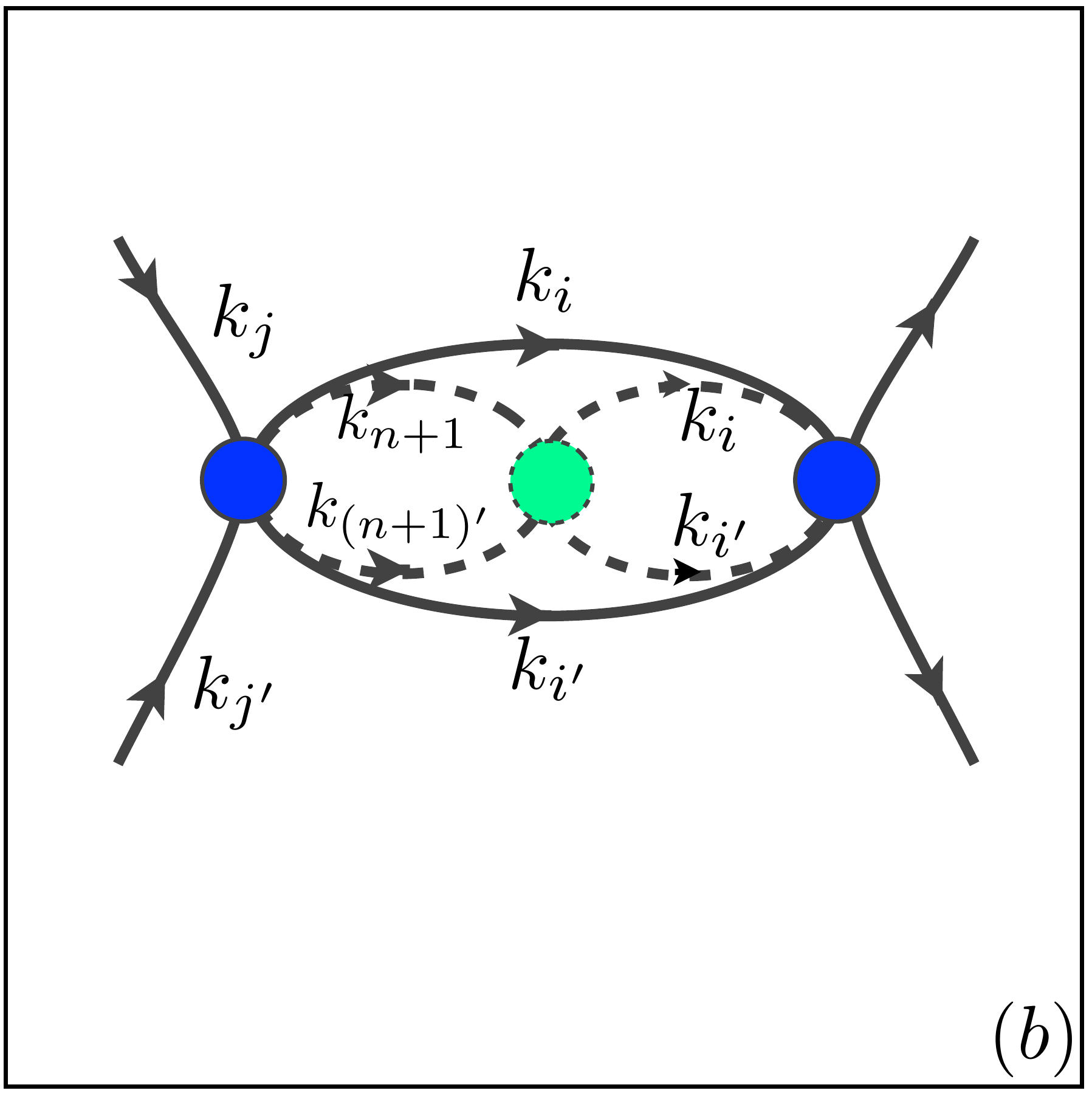}
 \end{minipage}
\caption{(a) Four vertices degenerating into one. (b) Four vertices degenerating into two.}
\label{fig:eye}
\end{figure}

\begin{remark} 
In the previous proposition there is no mention to the case in which the four vertices degenerate into two and the two edges degenerate into one. Nevertheless this operation coincide with the germination of a leaf, fig.~\ref{fig:leaf_germ}. Indeed, before the creation of the leaf the graph is associated to
\begin{equation*}
[\ \dots,i\ p(j'), \dots\ ],
\end{equation*}
while after the pinching the representation becomes 
\begin{equation*}
[\ \dots, (v+1)\ p(j'), \dots, i\ (v+1)'\ ], 
\end{equation*}
and $3$ other combinations lead to the same graph: swapping $v+1$ and $(v+1)'$ or the elements in the two pairs. 
\end{remark}

\begin{remark}
Remarkably, the pinching operation allows us to construct all $(v+1)$-vertex graphs starting from the $v$-vertex graphs and in addition provides  a practical way for computing the degeneracy. Moreover, we recall that cactus graphs are always generated by other cactus, and there is no way to transform a non-cactus into a simpler graph just by adding a vertex. However, it is not true in general that the degeneracy of the daughter graph is the degeneracy of the mother multiplied by the degeneracy of the pinching, $\deg_{\mathrm{pinc}}$. In fact, the addition of a vertex can break the symmetry of a graph, and when this happens we have a factor lower or equal than 
\begin{equation*}
v+1=\frac{(v+1)!}{v!},
\end{equation*}
so that the more the graph is symmetric the lower is its degeneracy. Furthermore, sometimes pinching different edges gives rise to the same graph so that in counting the degeneracy we have to consider also all these possibilities. 
\end{remark}

Summing up the previous considerations we can state the following. 
\begin{proposition}
The degeneracy of a graph $G$ is
\begin{center}
\begin{math}
\deg(G)=\deg(G^{(m)})\times \deg_{\mathrm{pinc}} \times \tilde{\deg}_{\mathrm{pinc}} \times (v+1),
\end{math}
\end{center}
where the last factor is the symmetrization factor and $\tilde{\deg}_{\mathrm{pinc}}=$different ways in which pinching the edges gives rise to the same graph.
\end{proposition}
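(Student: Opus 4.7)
The plan is to interpret $\deg(G)$ combinatorially as the number of permutations $p\in\mathrm{S}_{2(v+1)}$ whose square bracket $[p(1)\,p(1'),\dots,p(v+1)\,p((v+1)')]$ represents the same Feynman graph $G$. The strategy is to establish a fibered counting: I would partition the representations of $G$ according to their associated mother representation, show that each fiber has the same cardinality, and compute this cardinality as a product of local contributions coming from the pinching operation.

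First I would show that there is a natural projection sending each square bracket representation of $G$ to a square bracket representation of its mother $G^{(m)}$, obtained by contracting the newest vertex pair $(v+1,(v+1)')$ and repairing the two edges that were split by the pinching; running this backwards shows that every representation of $G$ arises from some representation of $G^{(m)}$ followed by a pinching. Each of the $\deg(G^{(m)})$ representations of the mother can then be pinched in several ways: the factor $\deg_{\mathrm{pinc}}$ accounts for the local multiplicity coming from the orientation and degeneration possibilities established in the preceding proposition (cases a, b, c), while $\tilde{\deg}_{\mathrm{pinc}}$ captures the global ambiguity, namely the number of distinct edge-pair selections on the mother that yield the same daughter graph. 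The symmetrization factor $v+1$ arises because the newest vertex pair can be relabeled among the $v+1$ available positions without altering the topology, producing $v+1$ distinct permutations $p$ in $\mathrm{S}_{2(v+1)}$ from each individual pinching construction.

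The main obstacle will be verifying that these four contributions are genuinely independent, i.e.\ that the procedure does not overcount representations of $G$. Concretely, one has to show that the projection to the mother is surjective with fibers of constant size $\deg_{\mathrm{pinc}}\times\tilde{\deg}_{\mathrm{pinc}}\times(v+1)$, and that within each fiber the decomposition into (local pinching multiplicity)$\times$(choice of edges to pinch)$\times$(label of the newest vertex) is bijective. This requires a careful analysis of how an automorphism of $G$ interacts with the distinguished newest vertex: such an automorphism either fixes that vertex, in which case it descends to an automorphism of $G^{(m)}$ that is already accounted for by $\tilde{\deg}_{\mathrm{pinc}}$, or moves it, which contributes to the $v+1$ symmetrization factor. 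Once this independence is confirmed case by case using the classification of pinchings from the preceding proposition, multiplying the four factors yields the stated formula.
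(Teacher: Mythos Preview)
The paper does not supply a formal proof of this proposition at all: it is stated as a summary of the preceding remarks on pinching, leaf germination, and symmetry breaking. Your attempt to turn those remarks into a rigorous fibered counting argument is therefore more than the paper does, and the overall structure you describe (project a representation of $G$ onto one of $G^{(m)}$, then count the fiber) is a sensible way to organize such a proof.

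There is, however, a genuine problem with your interpretation of the factor ``$(v+1)$''. You treat it literally as the number of labels the newest vertex can receive, and then argue that automorphisms of $G$ moving that vertex are absorbed into this factor while automorphisms fixing it are absorbed into $\tilde{\deg}_{\mathrm{pinc}}$. But the paper's own remark immediately preceding the proposition says explicitly that the symmetry factor is \emph{lower or equal than} $v+1$, ``so that the more the graph is symmetric the lower is its degeneracy''. The worked examples confirm this: for the two-vertex non-cactus of fig.~\ref{fig:eye2} the paper computes $\deg=2\times 2=4$, with no factor $(v+1)=2$ appearing, precisely because the two vertices of that graph are interchangeable and the symmetrization factor collapses to $1$. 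So in the paper the symbol ``$(v+1)$'' in the formula is a placeholder for the actual symmetrization factor of the daughter graph, not a literal $v+1$. Your bijectivity claim, as stated, would overcount in exactly these symmetric cases: when an automorphism of $G$ moves the newest vertex onto another vertex that, after contraction, gives the \emph{same} mother representation, the $v+1$ relabelings do not produce distinct permutations in $\mathrm{S}_{2(v+1)}$. To repair your argument you would need to replace the flat factor $v+1$ by the length of the orbit of the newest vertex under the automorphism group of $G$, and check that the remaining factors then match.
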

\begin{remark}
Pinching edges pointing in the same direction gives rise to different graphs from the ones obtained pinching edges with the same direction. The difference between these two cases is shown in fig.~\ref{fig:3p}(b) and fig.~\ref{fig:3p}(c) where the two graphs differ in the arrows direction.
\end{remark}
\begin{figure}[t]
 \centering
   \includegraphics[scale=0.20]{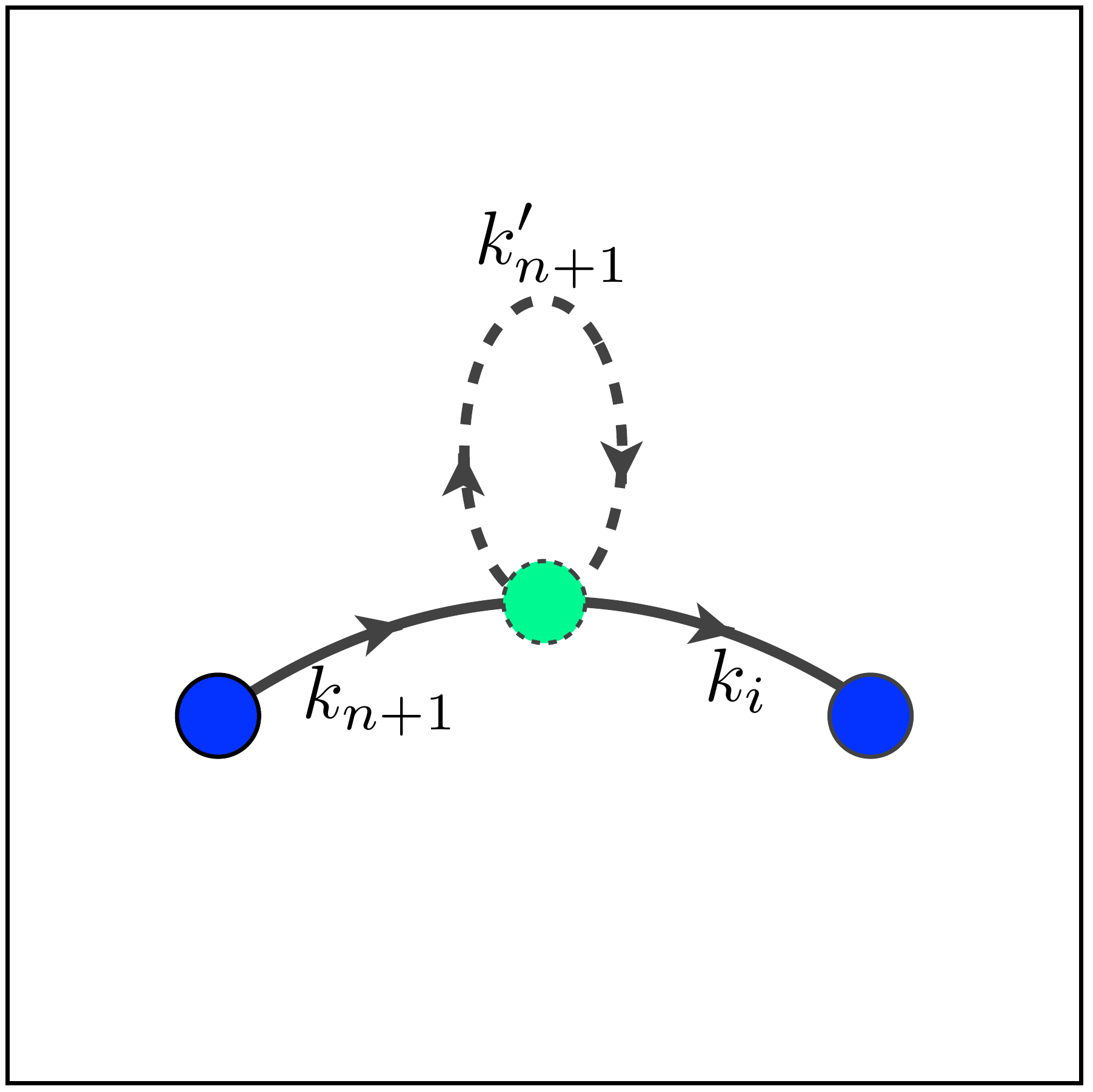}
\caption{Germination of a leaf.}
\label{fig:leaf_germ}
 \end{figure}
 We are now ready to give a bound for the degeneracy of the graphs.
\begin{theorem}
\label{thm:degeneracy}
The degeneracy of a $v$-vertex graph $G$ satisfies
\begin{equation*}
\deg(G)\leq 2^{2v}\, v!
\end{equation*}
\end{theorem}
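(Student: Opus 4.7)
The plan is to prove the bound by induction on the number of vertices $v$, using the pinching decomposition developed in Section~\ref{sec:deg} as the main engine. The base case $v=1$ is immediate: as exhibited in the example preceding the definition of cactus, the only one-vertex graph is the ``otto'' of Fig.~\ref{fig:otto}, produced by exactly the two bracket expressions $[1\ 1']$ and $[1'\ 1]$, so its degeneracy is $2\leq 2^{2}\cdot 1! = 4$.

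For the inductive step I would assume that every $v$-vertex graph $G'$ satisfies $\deg(G')\leq 2^{2v}v!$, and take an arbitrary $(v+1)$-vertex graph $G$. By the discussion in Section~\ref{sec:deg}, every such $G$ is obtained by applying the pinching operation to some $v$-vertex mother $G^{(m)}$ (the inverse of pinching is well defined at any vertex, so the construction exhausts all graphs with $v+1\geq 2$ vertices). Combining the multiplicative formula
\[
\deg(G)=\deg(G^{(m)})\times\deg_{\mathrm{pinc}}\times\tilde\deg_{\mathrm{pinc}}\times(v+1)
\]
with the bounds already established in the section, namely $\deg_{\mathrm{pinc}}\leq 4$ (from the proposition classifying the three pinching cases), and the symmetrization factor $\leq v+1=(v+1)!/v!$, one obtains
\[
\deg(G)\leq 4\,(v+1)\,\deg(G^{(m)}).
\]
Applying the inductive hypothesis to $G^{(m)}$ then yields
\[
\deg(G)\leq 2^{2v}v!\cdot 4(v+1)=2^{2(v+1)}(v+1)!,
\]
which is exactly the desired bound for $v+1$ vertices.

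The main obstacle is controlling the factor $\tilde\deg_{\mathrm{pinc}}$, which counts how many distinct pairs of edges of the mother yield the same daughter $G$ after pinching. I would handle this by fixing, once and for all, a canonical choice of pinching site at each induction step (for instance, the lexicographically smallest edge-pair whose pinching produces $G$). Pinchings at edges related by an automorphism of $G^{(m)}$ then yield bracket expressions already counted inside $\deg(G^{(m)})$, so no extra multiplicative factor beyond $\deg_{\mathrm{pinc}}\cdot(v+1)$ is needed. With that bookkeeping settled, the induction closes cleanly and the factor $2^{2v}v!$ emerges as the product of the per-step growth factors $4$ and $(v+1)$, accumulated from the base case.
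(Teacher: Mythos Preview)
Your inductive step relies on the inequality $\deg(G)\leq 4\,(v+1)\,\deg(G^{(m)})$, i.e.\ on setting $\tilde\deg_{\mathrm{pinc}}=1$ after a ``canonical choice'' of pinching site. This is where the argument breaks. The paper's own example of Fig.~\ref{fig:3p}(a) is a concrete counterexample: its mother obtained by removing the leaf is the eye of Fig.~\ref{fig:eye2}, with $\deg(G^{(m)})=4$; your bound would give $\deg(G)\leq 4\cdot 3\cdot 4=48$, but the actual degeneracy is $192$. The factor $\tilde\deg_{\mathrm{pinc}}=4$ here reflects that the four edges of the eye are all equivalent under its automorphisms, and your claim that ``bracket expressions related by an automorphism of $G^{(m)}$ are already counted inside $\deg(G^{(m)})$'' has the logic backwards: automorphisms of the mother make $\deg(G^{(m)})$ \emph{smaller}, not larger, so there is no slack inside $\deg(G^{(m)})$ to absorb $\tilde\deg_{\mathrm{pinc}}$ once you have already plugged in the inductive bound.

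The paper sidesteps this entirely by abandoning the recursive formula and giving a direct global count: a configuration $[p(1)\,p(1'),\dots,p(v)\,p(v')]$ producing a given graph is specified by (i) a labeling of the $v$ vertices, at most $v!$ choices, and (ii) at each vertex a local orientation of the two in--edges and two out--edges, at most $4$ choices; hence $\deg(G)\leq 4^v\,v!$. This is a one-line argument that never needs to track how $\tilde\deg_{\mathrm{pinc}}$ and the symmetrization factor trade off against each other. If you want to keep the inductive framing, you would have to prove something like $\tilde\deg_{\mathrm{pinc}}\cdot s\leq v+1$ (with $s$ the true symmetrization factor), or choose at each step the mother of \emph{largest} degeneracy; neither is what you wrote.
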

\begin{proof}
To compute the degeneracy of a graph in the worst case scenario we have to compute all the configurations $[p(1)\ p(1'), \dots,p(v)\ p(v')]$ that lead to the same graph. In the worst case the pinching operation gives a  factor of $4$ for each point and every permutation of the vertices  leaves the configuration unchanged, so that the degeneracy, in the worst case is $4^v v!$.
\qed
\end{proof}

\begin{example}
The only graph with one vertex is the one in fig.~\ref{fig:otto}, that has degeneracy $2$. From this graph we can generate the two connected graphs with two vertices. The one in fig.~\ref{fig:cactus1} is obtained by a non degenerate pinching so that its degeneracy is $2\times 4\times 2=16$, while the one in fig.~\ref{fig:eye2} is generated by a degenerate pinching and so its degree of degeneracy is $2\times 2 =4$. 
\begin{figure}[!h]
  \centering
  \includegraphics[scale=0.20]{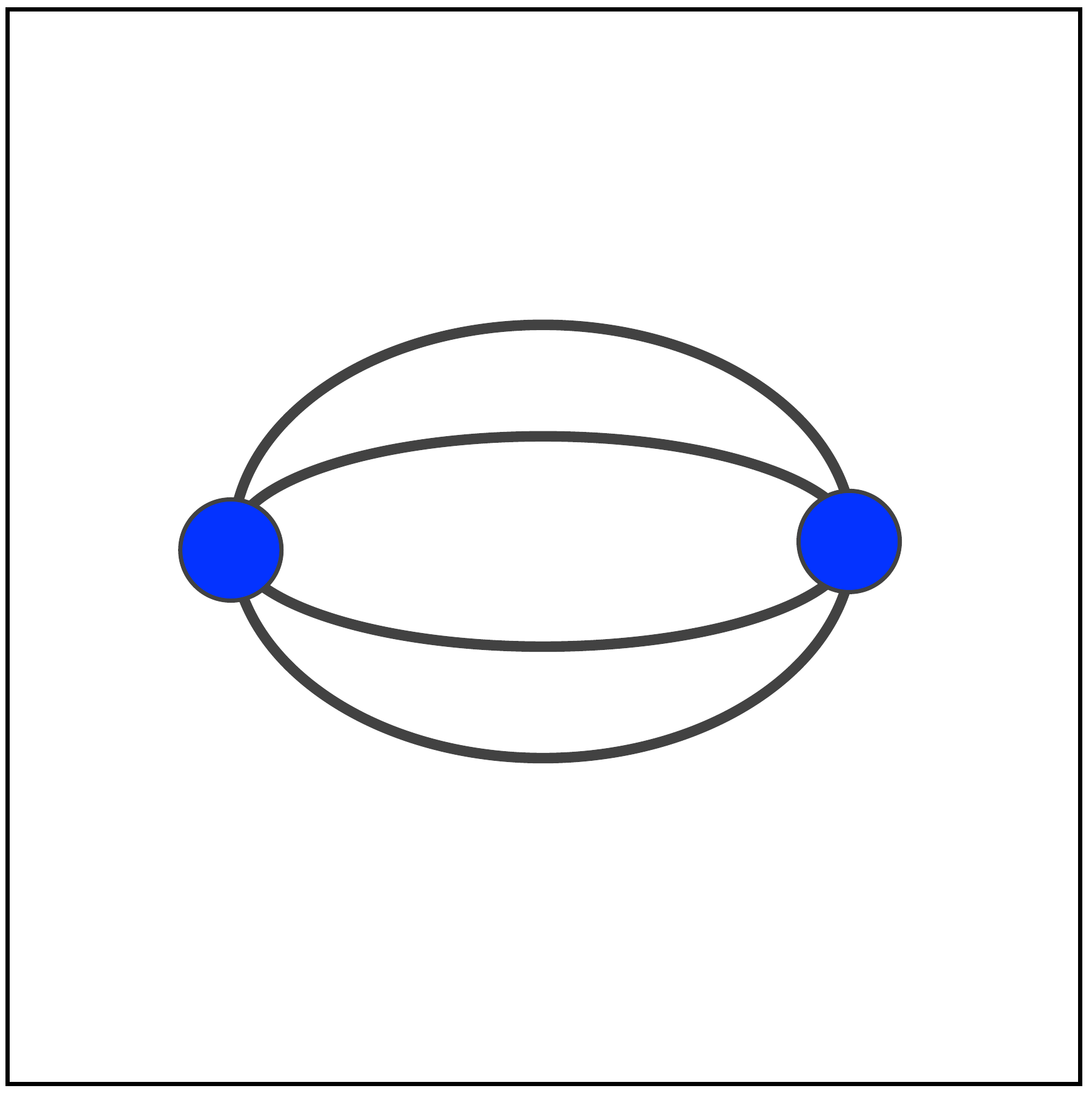}
 \caption{A non-cactus with two vertices.}
 \label{fig:eye2}
 \end{figure}
\end{example}
\begin{figure}[t]
 \begin{minipage}[b]{7cm}
   \centering
   \includegraphics[scale=0.20]{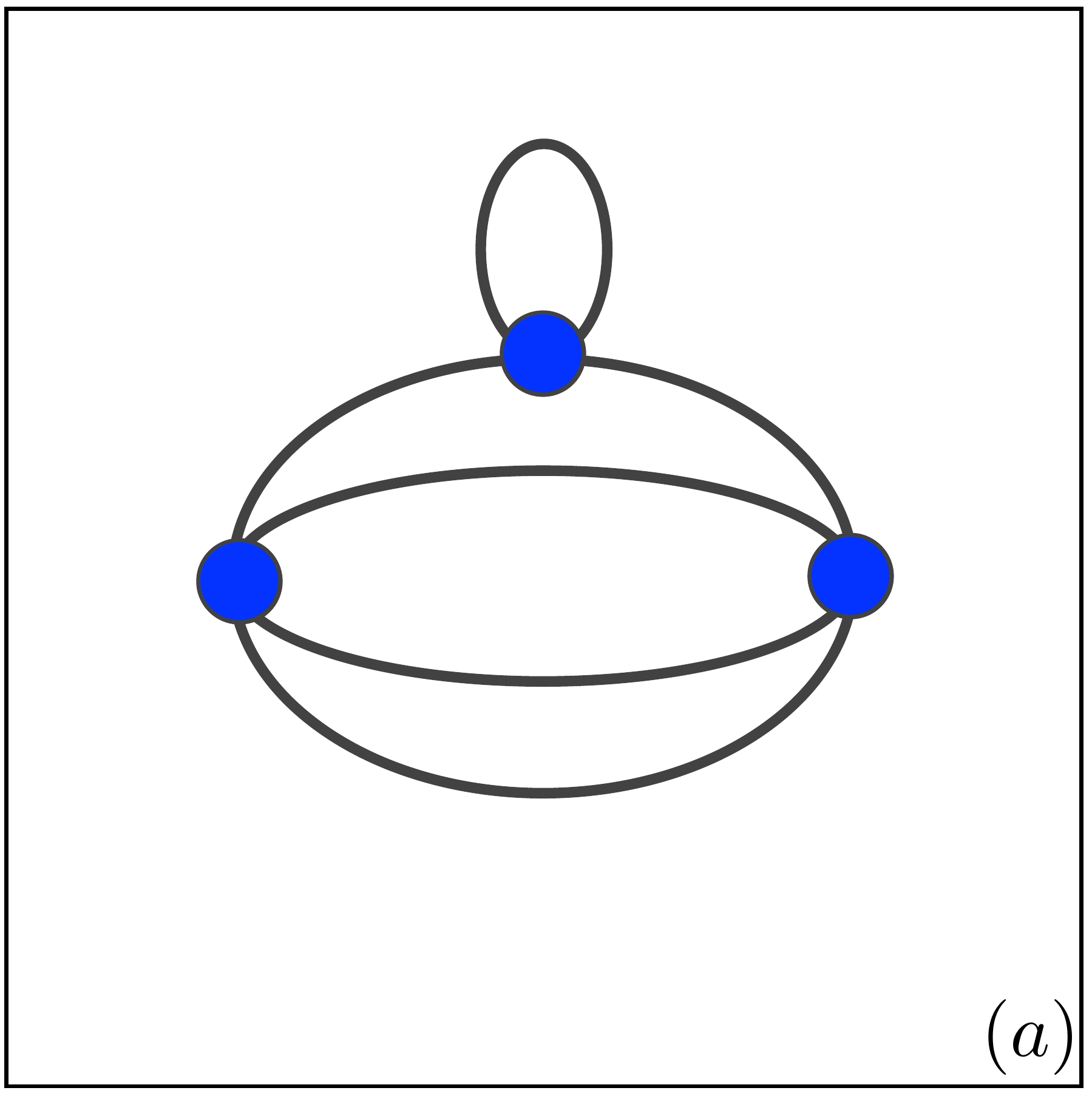}
 \end{minipage}
 \ \hspace{-15mm} 
 \begin{minipage}[b]{7cm}
  \centering
  \includegraphics[scale=0.20]{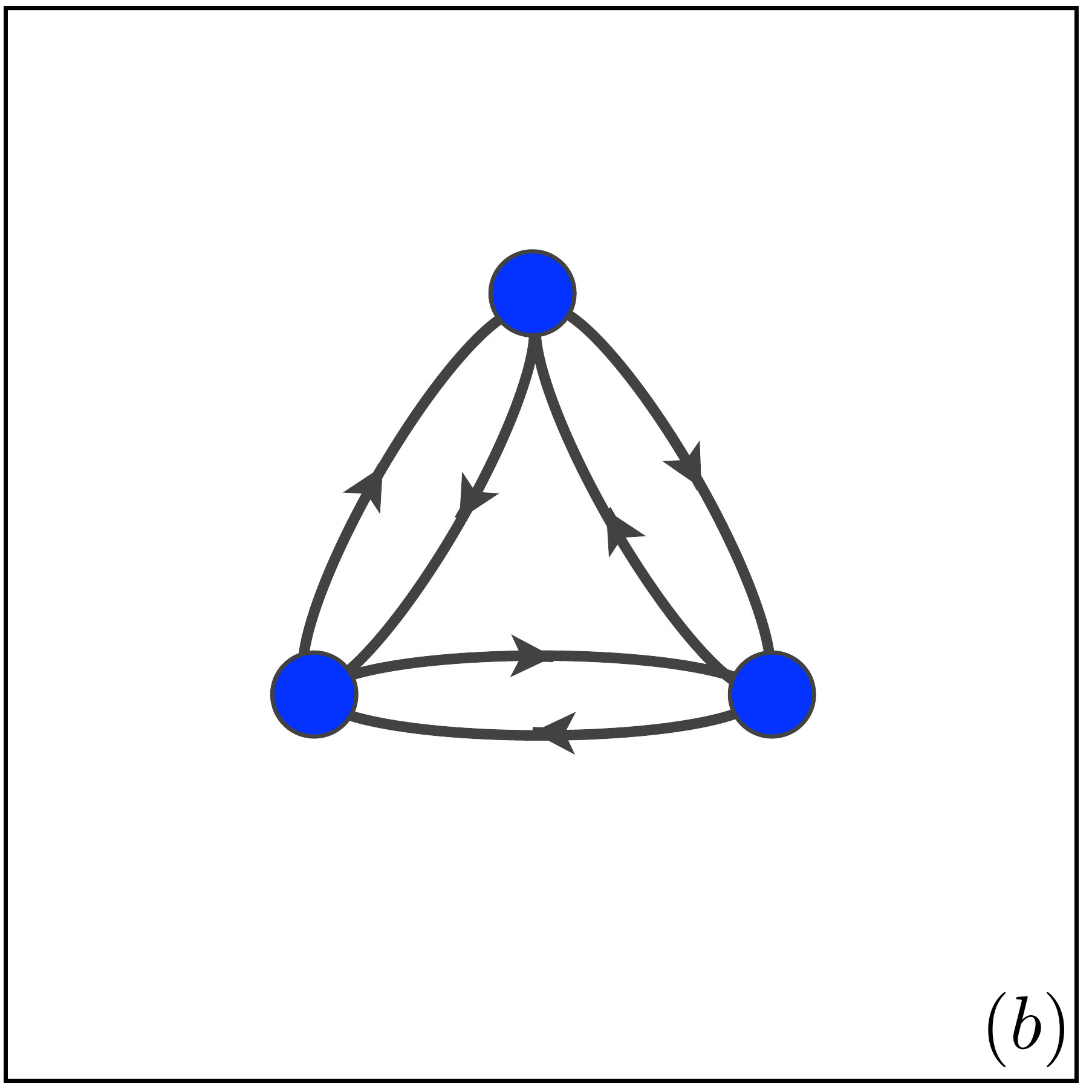}
 \end{minipage}
 \center
 \begin{minipage}[b]{7cm}
  \centering
  \includegraphics[scale=0.20]{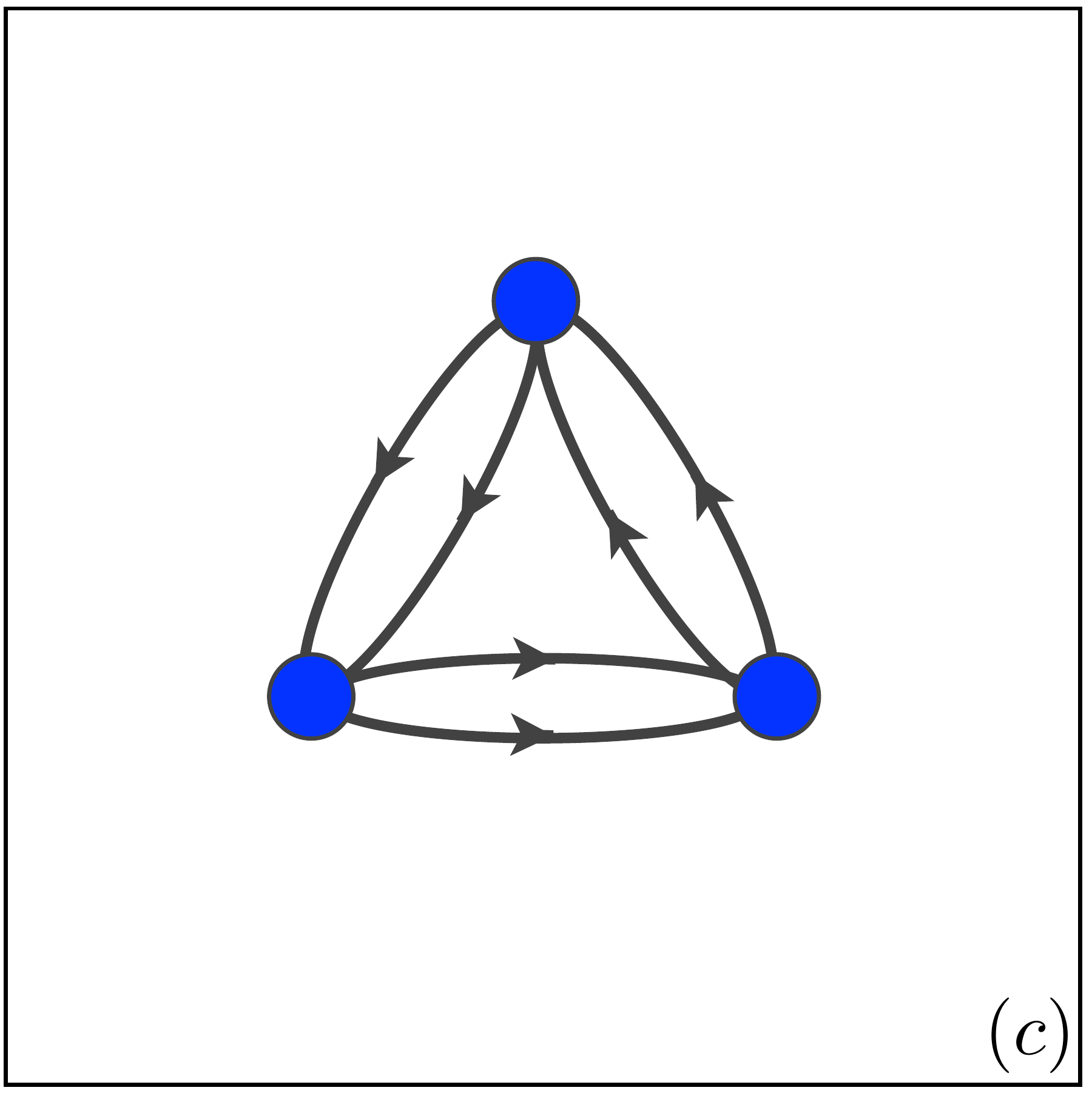}
 \end{minipage}
 \caption{All connected non-cactus with three vertices}
 \label{fig:3p}
\end{figure}
\begin{example}
The Feynman graph in fig.~\ref{fig:3p}(a) is generated from the one in fig.~\ref{fig:eye2} by the germination of a leaf.  Its degeneracy is $4\times 4\times 4\times 3=192$, where the $3$ is the symmetrization factor.
\end{example}
\begin{example}
The graph in fig.~\ref{fig:3p}(c) generates after a pinching of the graph in  fig.~\ref{fig:eye2}, so that its degeneracy is $16\times 4=64$, while the graph in fig.~\ref{fig:3p}(b) is generated  by pinching of the graph in fig.~\ref{fig:eye2}, therefore its degeneracy is $4 \times 2\times 2=16$.
\end{example}

\section{High-Temperature Expansion: proof of Theorem~\ref{thm:cactus}}
\label{sec:hightemp}
The representation in terms of Feynman graphs   introduced in the previous section gives important information about the series~\eqref{eq:series}, and in particular allows to prove theorem~\ref{thm:cactus}. 
\begin{proof}[of theorem~\ref{thm:cactus}]
From Eq.~(\ref{eq:cumulants2}) we can write $\langle H^m\rangle_0$ in terms of the square brackets:
\begin{equation*}
\langle H^m\rangle_0=\frac{1}{N(N+1)\dots(N+2m-1)} \sum_{p\in \mathrm{S}_{2m}}[p(1)p(1'),\dots, p(m)p(m')]. 
\end{equation*}
Let us now divide the permutation in $\mathrm{S}_{2m}$ in the ones that generate a cactus, $\mathcal{P}_1$, and the ones that generate a non-catus, $\mathcal{P}_2$.
\begin{eqnarray*}
\langle H^m\rangle_0 &=&\frac{1}{N(N+1)\dots(N+2m-1)}\nonumber\\
& &\times\left(\sum_{p\in \mathcal{P}_1}[p(1)p(1'),\dots, p(m)p(m')]+ \sum_{p\in \mathcal{P}_2}[p(1)p(1'),\dots, p(m)p(m')]\right) 
\nonumber\\
\end{eqnarray*}
 Applying theorem~\ref{thm:cactus}, the first permutations give:
\begin{equation*} 
[p(1)p(1'),\dots, p(m)p(m')]=\left(\frac{N_A+N_{\bar{A}}}{2}\right)^m,
\end{equation*}
recalling that the number of vertices in a Feynman graph generated by permutations in $\mathrm{S}_{2m}$ is exactly $m$, while for permutation in $\mathcal{P}_2$ applying theorem~\ref{thm:noncactus}, we have:
\begin{equation*} 
[p(1)p(1'),\dots, p(m)p(m')]\leq N \left(\frac{
N_A + N_{\bar{A}}}{2}\right)^{m-1}. 
\end{equation*}
Defining $\tilde{C}_1(m)$ and $\tilde{C}_2(m)$ as the number of nonequivalent cactus and non-cactus graphs (respectively) with $m$ vertices, and recalling that according to theorem~\ref{thm:degeneracy} every $m$-vertex graph has at most degeneracy $2^{2m} m!$, we eventually get
\begin{eqnarray*}
\langle H^m\rangle_0&\leq&\nonumber\frac{2^{2m}m!}{N(N+1)\dots(N+2m-1)}
\\
& & \times\bigg[ \tilde{C}_1(m)\ N\left(\frac{N_A+N_{\bar{A}}}{2}\right)^m+ \tilde{C}_2(m)\ N \left(\frac{
N_A+N_{\bar{A}}}{2}\right)^{m-1}\bigg],
\end{eqnarray*}
and the first part of the theorem follows.

The second statement follows immediately from the first one,  by recalling that 
$$\frac{N_A + N_{\bar{A}}}{2} =\frac{d^{\left[\frac{n}{2}\right]}+d^{\left[\frac{n+1}{2}\right]}}{2} \ge d^{\left[\frac{n}{2}\right]}. $$
This concludes the proof of theorem~\ref{thm:cactus}. 
\qed
\end{proof}

\begin{remark}
A final remark is in order. As stated at the end of Section~\ref{sec:main}, it is clear that, since $C_1(m)$ and $C_2(m)$ do not depend on $d$, in the limit $d\to \infty$ the contribution to the sum due to the presence of the non-cactus graphs goes to zero and only the presence of cactus becomes relevant. Heuristically, we can attribute the presence of frustration in the system to the relevance of the non-cactus graphs in the series.  
\end{remark}

\begin{remark}
We can apply theorem~\ref{thm:cactus} from a different perspective, keeping $d$ fixed and evaluating the limit $N=2^n \to\infty$. In particular, let us consider the case of qubits ($d=2$) where exact explicit expressions for the first, second and third moment have been obtained using Feynman diagrams \cite{cumulants} and one can evaluate the contributions from non-cactus diagrams in the limit $N=2^n \to\infty$. 

\begin{enumerate}

\item $m=1$. This case is trivial. Contributions to the moment only come from the cactus shown in fig.~\ref{fig:otto}.

\item $m=2$. An explicit calculation shows that cactus diagrams are of the form in fig.~\ref{fig:otto} (disconnected) whereas non-cactus diagrams are of the form in fig.~\ref{fig:eye2}. We obtain

 \begin{equation*}
\frac{\langle H^2\rangle_{0,\mathrm{NC}}}{\langle H^2\rangle_{0,\mathrm{C}}} = \frac{f_2(N)}{(N+4)(N_A+N_{\bar{A}})^2},
\end{equation*}
with
\begin{eqnarray*}
f_2(N)= 2
\left(\!\!\begin{array}{c} n    \\ n_{A} \end{array}\!\!\right)^{\!\!\!-1}
\!\!\!\sum_{0\leq k\leq n_A}\!\!
\left(\!\!\begin{array}{c} n_A    \\ k \end{array}\!\!\right)
\left(\!\!\begin{array}{c} n_{\bar{A}}    \\ k \end{array}\!\!\right)
2^{n/2}\left[4^{n/4-k}+4^{-(n/4-k)}\right] .
\label{eq:f2appprim}
\end{eqnarray*}
It is possible to prove \cite{cumulants} that, in the limit
$N\to\infty$,
\begin{equation*}
\label{eq:f2}
f_2(N)\sim 3 \sqrt{2} N^\alpha,
\end{equation*}
with
\begin{equation*}
\alpha=\log_2 3-1 \simeq 0.5850.
\label{eq:alphadef}
\end{equation*}
Therefore, for $N\to\infty$ and $N_A\simeq N_{\bar{A}}=2^{\left[\frac{n}{2}\right]}$ we
have
\begin{eqnarray*} \label{eq:secondocum}
& & \frac{\langle H^2\rangle_{0,\mathrm{NC}}}{\langle H^2\rangle_{0,\mathrm{C}}}\sim \frac{b(2)}{N^{2-\alpha}}\le \frac{C(2)}{2^{\left[\frac{n}{2}\right]}},
\end{eqnarray*}
with $b(2)$ a positive constant.

\item $m=3$. We obtain
\begin{eqnarray*}
\frac{\langle H^3\rangle_{0,\mathrm{NC}}}{\langle H^3\rangle_{0,\mathrm{C}}} = \frac{16f_3^{(1)}(N)+64f_3^{(0)}(N)+ 3 (N+8)(N_A+N_{\bar{A}})f_2(N)}{(40+12N+N^2)(N_A+N_{\bar{A}})^3}
\end{eqnarray*}
where the role of $f_3^{(1)}$ and $f_3^{(0)}$ is analogous to that of $f_2$ for $m=2$. Their complete expressions are not transparent and can be found in \cite{cumulants}. We notice that the contributions of $f_3^{(0)},f_3^{(1)}$ come from non-cactus diagrams in fig.~\ref{fig:3p} $(b)$ and $(c)$, respectively. The contribution of $f_2$ comes from the non-cactus (connected) diagram in fig.~\ref{fig:3p} $(a)$ and from (disconnected) contributions obtained from the non-cactus in fig.~\ref{fig:eye2} and the diagram in fig.~\ref{fig:otto}. 
In the limit $N\to\infty$, we obtain
\begin{equation*}
\label{eq:f3}
f_3^{(0)}(N)\sim c N^{5-\gamma},\quad f_3^{(1)}(N)\sim N^{\alpha}
\end{equation*}
with
\begin{equation*}
\gamma \simeq 4.1583, \quad c\simeq 1.05385.
\end{equation*}
Finally, for $N\to\infty$ and $N_A\simeq N_{\bar{A}}=2^{\left[\frac{n}{2}\right]}$ we have
\begin{eqnarray*} \label{eq:terzocum}
& & \frac{\langle H^3\rangle_{0,\mathrm{NC}}}{\langle H^3\rangle_{0,\mathrm{C}}}\sim \frac{b(3)}{N^{3-\alpha}}\le \frac{C(3)}{2^{\left[\frac{n}{2}\right]}},
\end{eqnarray*}
with $b(3)$ a positive constant.

\end{enumerate}

\end{remark}

\appendix

\section{Lower Bound on the $7$-qubit potential of multipartite entanglement}
\label{app:sstate}

The minimum of the potential of multipartite entanglement for $7$ qubits is not known yet. Up to now only guesses have been proposed and some numerical bounds have been found. Here we construct a $7$-qubit state with the lowest $\pi_{\mathrm{ME}}$ found until now, to the best of our knowledge. The state can be constructed starting from an orthonormal basis of 3- and 4-qubit MMES. 

Considering the computational basis, we will write the $i$-th vector of the basis in terms of the Fourier coefficients $z^{(i)}=(z^{(i)}_k)$:
\begin{equation*}
\ket{\psi}_i=\sum_{k\in\mathbb{Z}_2^n} z^{(i)}_k\ket{k}.
\end{equation*}

The 3-qubit MMES basis ($d=2,n=3$) is made of GHZ states $\ket{\mathrm{GHZ}}_i$ ($i=0,..,7$) with Fourier coefficients
\begin{eqnarray*}
z^{(0)}&=&\frac{1}{\sqrt{2}}\left(1,0,0,0,0,0,0,1\right); \qquad z^{(1)}=\frac{1}{\sqrt{2}}\left(1,0,0,0,0,0,0,-1\right);\\
z^{(2)}&=&\frac{1}{\sqrt{2}}\left(0,1,0,0,0,0,1,0\right);\qquad z^{(3)}=\frac{1}{\sqrt{2}}\left(0,1,0,0,0,0,-1,0\right);\\
z^{(4)}&=&\frac{1}{\sqrt{2}}\left(0,0,1,0,0,1,0,0\right);\qquad z^{(5)}=\frac{1}{\sqrt{2}}\left(0,0,1,0,0,-1,0,0\right);\\
z^{(6)}&=&\frac{1}{\sqrt{2}}\left(0,0,0,1,1,0,0,0\right);\qquad z^{(7)}=\frac{1}{\sqrt{2}}\left(0,0,0,1,-1,0,0,0\right).
\end{eqnarray*} 

The 4-qubit MMES basis ($d=2,n=4$) is made by the states $\ket{\mathrm{MMES}_4}_i$ ($i=0,..,15$) with Fourier coefficients
\begin{eqnarray*}
z^{(0)}&=&\frac{1}{4}\left(-1,-1,-1,-1,-1,-1,1,1,-1,1,-1,1,1,-1,-1,1\right);\\
z^{(1)}&=&\frac{1}{4}\left(-1,-1,-1,-1,-1,-1,1,1,1,-1,1,-1,-1,1,1,-1\right);\\
z^{(2)}&=&\frac{1}{4}\left(-1,-1,-1,-1,1,1,-1,-1,-1,1,-1,1,-1,1,1,-1\right);\\
z^{(3)}&=&\frac{1}{4}\left(-1,-1,-1,-1,1,1,-1,-1,1,-1,1,-1,1,-1,-1,1\right);\\
z^{(4)}&=&\frac{1}{4}\left(-1,-1,1,1,-1,-1,-1,-1,-1,1,1,-1,1,-1,1,-1\right);\\
z^{(5)}&=&\frac{1}{4}\left(-1,-1,1,1,-1,-1,-1,-1,1,-1,-1,1,-1,1,-1,1\right);\\
z^{(6)}&=&\frac{1}{4}\left(-1,-1,1,1,1,1,1,1,-1,1,1,-1,-1,1,-1,1\right);\\
z^{(7)}&=&\frac{1}{4}\left(-1,-1,1,1,1,1,1,1,1,-1,-1,1,1,-1,1,-1\right);
\end{eqnarray*} 
\begin{eqnarray*} 
z^{(8)}&=&\frac{1}{4}\left(-1,1,-1,1,-1,1,1,-1,-1,-1,-1,-1,1,1,-1,-1\right);\\
z^{(9)}&=&\frac{1}{4}\left(-1,1,-1,1,-1,1,1,-1,1,1,1,1,-1,-1,1,1\right);\\
z^{(10)}&=&\frac{1}{4}\left(-1,1,-1,1,1,-1,-1,1,-1,-1,-1,-1,-1,-1,1,1\right);\\
z^{(11)}&=&\frac{1}{4}\left(-1,1,-1,1,1,-1,-1,1,1,1,1,1,1,1,-1,-1\right);\\
z^{(12)}&=&\frac{1}{4}\left(-1,1,1,-1,-1,1,-1,1,-1,-1,1,1,1,1,1,1\right);\\
z^{(13)}&=&\frac{1}{4}\left(-1,1,1,-1,-1,1,-1,1,1,1,-1,-1,-1,-1,-1,-1\right);\\
z^{(14)}&=&\frac{1}{4}\left(-1,1,1,-1,1,-1,1,-1,-1,-1,1,1,-1,-1,-1,-1\right);\\
z^{(15)}&=&\frac{1}{4}\left(-1,1,1,-1,1,-1,1,-1,1,1,-1,-1,1,1,1,1\right).
\end{eqnarray*} 

We will look for a minimizing 7-qubit state expressed in terms of tensor products of the elements of the two basis
\begin{equation*}
\ket{\sigma_7}=\sum_{i,j} c_{i,j} \ket{\mathrm{MMES}_4}_i\otimes \ket{\mathrm{GHZ}}_j.
\end{equation*}
We have numerically evaluated the complex coefficients $c_{i,j}$ so that the state $\ket{\sigma_7}$ is a minimizer of the potential of multipartite entanglement $\pi_{\mathrm{ME}}$.  
We have found a solution such that the only non-vanishing $c_{i,j}$'s are tabulated in Table~\ref{tab:coefficients} and expressed in the form
\begin{equation*}
c_{i,j}=\phi_{i,j}^{(re)}+i\phi_{i,j}^{(im)},
\end{equation*}
with $\phi_{i,j}^{(re)},\phi_{i,j}^{(im)}$ denoting the real and imaginary part of the coefficient, respectively.
\begin{table}[h!]
\centering
\begin{tabular}{||c | c | c ||} 
 \hline
$(i,j)$ & $\phi_{i,j}^{(re)}$ & $\phi_{i,j}^{(re)}$  \\ [0.5ex] 
 \hline\hline
 (1,1)&0.313685 & -0.019416 \\
(1,4) &-0.124963 & 0.00751404 \\
 (2,2)&\text{6.16876805$\times 10^{-6}$} & -0.000116371 \\
 (2,3)&-0.000103808 & -0.0000691072 \\
 (3,2)&0.000046695 & -0.0000735369 \\
 (3,3)&-0.000243151 & -0.000195018 \\
 (4,1)&0.0193888 & 0.313752 \\
 (4,4)&-0.00777771 & -0.124766 \\
 (5,5)&0.0719262 & 0.15313 \\
 (5,8)&0.152837 & -0.0721254 \\
 (6,6)&-0.160604 & -0.0526771 \\
 (6,7)&0.0528744 & -0.160803 \\
 (7,6)&0.0527307 & -0.160861 \\
 (7,7)&0.161076 & 0.0527179 \\
 (8,5)&0.153033 & -0.0719374 \\
 (8,8)&-0.0723194 & -0.153041 \\
 (9,5)&0.0529309 & -0.160616 \\
 (9,8)&0.160688 & 0.0526321 \\
 (10,6)&-0.072288 & -0.153269 \\
 (10,7)&-0.15302 & 0.0720842 \\
 (11,6)&-0.152985 & 0.0719888 \\
 (11,7)&0.0719157 & 0.153028 \\
 (12,5)&-0.161016 & -0.0527083 \\
 (12,8)&0.0526094 & -0.160931 \\
 (13,1)&0.0128427 & -0.0812437 \\
 (13,4)&0.032297 & -0.204478 \\
 (14,2)&-0.087611 & -0.17006 \\
 (14,3)&0.264942 & 0.00134756 \\
 (15,2)&-0.245851 & -0.124495 \\
 (15,3)&-0.132874 & 0.115682 \\
 (16,1)&-0.0128813 & 0.0812742 \\
 (16,4)&-0.0323628 & 0.204452 \\ [1ex] 
 \hline
\end{tabular}
\caption{Table of non-vanishing coefficients $c_{i,j}$ obtained from the minimization of $\pi_{\mathrm{ME}}$ in the case of 7 qubits.}
\label{tab:coefficients}
\end{table}
The value of $\pi_{\mathrm{ME}}$ in this case is
\begin{equation*}
\pi_{\mathrm{ME}}(\sigma_7)=0.131952.
\end{equation*} 
In figure~\ref{fig:istos} we plot the histogram characterizing the distribution of the purities among the balanced bipartitions. 
\begin{figure}[!h]
\centering
\includegraphics[scale=0.70]{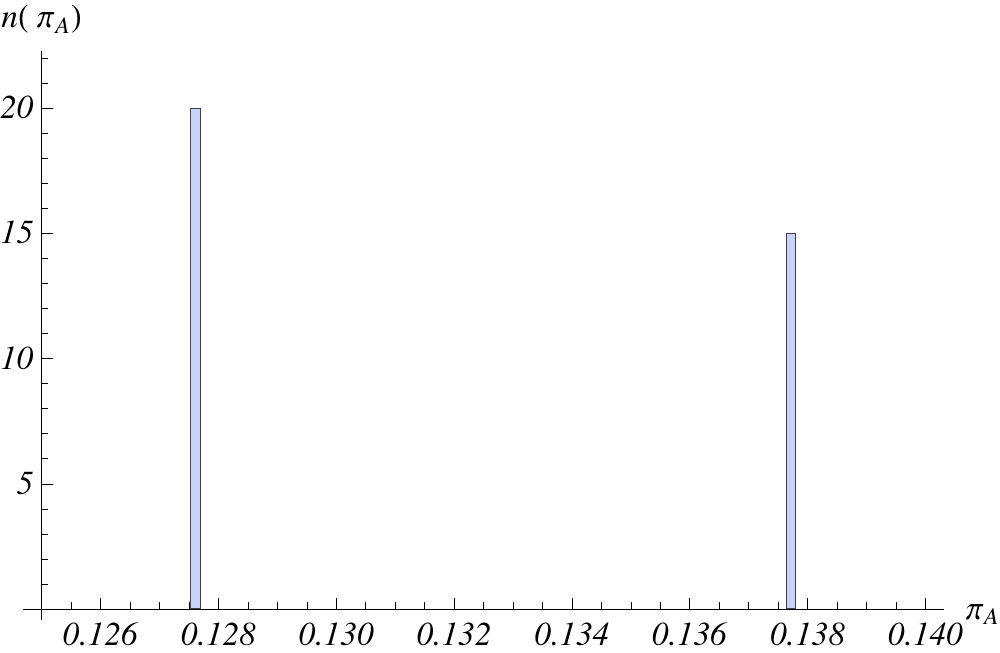}
\caption{Purity distribution among balanced bipartitions for the $7$-qubit state $\ket{\sigma_7}$.}
\label{fig:istos}
\end{figure} 
An interesting feature of this state is that all bipartitions are frustrated, i.e. none of them reaches the minimum for the corresponding purity but the distribution is fairly well peaked in correspondence of two spikes only.


\section{Maximum distance separable codes and perfect MMES}
\label{app:codth}

The definition of a (classical) code starts from the choice of a set of elements $\Sigma$ that constitutes the alphabet of the code. Even if there are no restriction on this choice, since information theory is constructed around machines and computers, the most common set considered is $\Sigma=\{0,1\}$, with the clear meaning that the information is encoded in bits. 
\begin{definition}
Given an alphabet $\Sigma$, a \textit{(classical) code} $C$ is a set of strings, called \textit{codewords}, over $\Sigma$ of fixed length. 
\end{definition}

In the set of all possible codewords of fixed length $n$, $\Sigma^n$, it is possible to define a distance in the following.  
\begin{definition}
The \textit{Hamming distance} between two strings of the same length, $\mathrm{d}_H:\Sigma^n\times\Sigma^n\to\R$, is the number of positions in which the corresponding symbols are different.
\end{definition}
\begin{example}
The Hamming distance between the two strings $0011$ and $1001$ is $\d_H(0011,1001)=3$.
\end{example}
\begin{remark}
The Hamming distance is a proper distance, i.e. it is positive, symmetric and satisfies the triangle inequality. 
\end{remark}
\begin{definition}
The \textit{minimal Hamming distance} of the code, $\delta$, is defined as
\begin{equation*}
\delta=\min_{\{v,w\in C,v\neq w\}}\d_H(v,w). 
\end{equation*}
\end{definition}

The following theorem gives a bound on the dimension of the code (the maximum number of codewords) and the minimal Hamming distance of the codewords. 
\begin{theorem}[Singleton bound~\cite{singleton}]
\label{lm:code}
For any code $C\subseteq \Sigma^n$ the following inequality holds:
\begin{equation*}
M \geq q^{n-\delta+1},
\end{equation*}
with $M$ the number of the codewords and $\delta$ its minimum Hamming distance.
\end{theorem}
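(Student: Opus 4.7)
The plan is to establish the Singleton bound by a short projection/pigeonhole argument. I will exhibit an injection from $C$ into a set of strings of length strictly less than $n$ whose cardinality is $q^{n-\delta+1}$, and read off the claimed inequality on $M=|C|$.

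Concretely, I would introduce the truncation map $\pi\colon \Sigma^n\to\Sigma^{n-\delta+1}$ that forgets the last $\delta-1$ coordinates of a string, and show that its restriction to $C$ is injective. Suppose, for contradiction, that two distinct codewords $v,w\in C$ satisfy $\pi(v)=\pi(w)$. Then $v$ and $w$ agree on the first $n-\delta+1$ positions and therefore can disagree on at most $n-(n-\delta+1)=\delta-1$ positions, so $\d_H(v,w)\leq \delta-1<\delta$, contradicting the definition of $\delta$ as the minimum Hamming distance of $C$. Once $\pi|_C$ is injective, $M=|C|=|\pi(C)|\leq |\Sigma^{n-\delta+1}|=q^{n-\delta+1}$, which is the Singleton bound. (I read the direction $M\geq q^{n-\delta+1}$ in the excerpt as a typographical slip: the content of Singleton's inequality, and of the pigeonhole argument above, is an upper bound on $M$; the reversed bound would be trivially violated by, for instance, a code consisting of a single codeword.)

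There is essentially no technical obstacle: the entire argument is a one-paragraph pigeonhole computation. The only point requiring any care is the choice of truncation length, namely $n-(\delta-1)=n-\delta+1$, which is exactly the threshold at which two codewords at distance $\geq \delta$ are forced to have distinct projections; any shorter prefix would break injectivity, while any longer prefix would waste a coordinate and yield a weaker bound. One should also note that the inequality is vacuous unless $1\leq n-\delta+1\leq n$, equivalently $1\leq\delta\leq n$, which is automatic since the existence of two distinct codewords in $\Sigma^n$ forces $\delta\leq n$, while $\delta\geq 1$ is built into the definition of the minimum distance.
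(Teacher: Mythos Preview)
Your argument is correct and is the standard projection/pigeonhole proof of the Singleton bound. The paper itself does not supply a proof of this theorem at all: it is stated with a citation to Singleton's original article~\cite{singleton} and then used without further justification, so there is nothing to compare against beyond noting that your proof is the textbook one.

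Your side remark about the direction of the inequality is also right. As printed, the statement $M \geq q^{n-\delta+1}$ is the wrong way round; the Singleton bound is an \emph{upper} bound on the number of codewords, $M \leq q^{n-\delta+1}$, exactly as your injection into $\Sigma^{n-\delta+1}$ shows. The paper's subsequent use of the theorem (defining MDS codes as those saturating the bound, and applying it with $\delta \geq n_{\bar{A}}+1$ to force $M \leq d^{n_A}$) is consistent only with the corrected inequality, confirming that the printed $\geq$ is a typographical slip.
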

\begin{definition}
A code for which the the Singleton bound is saturated is called \textit{Maximum Distance Separable} (MDS) code.
\end{definition}

Let us consider a code $C=\{c_j\}$, with $N_A$ codewords of length $n$ and alphabet $\mathbb{Z}_d$. Using the codewords of $C$ we can construct the $n$-qudit state:
\begin{equation}
\label{eq:MMES}
\ket{\psi}=\frac{1}{\sqrt{N_A}}\sum_{j=1}^{N_A}\ket{c_j}.
\end{equation}
If the minimal Hamming distance of $C$ is greater than $n_A+1$, after the partial trace over a balanced bipartition all the off-diagonal terms, $\tr_{\bar{A}}(\ketbra{c_j}{c_k})$, vanish. Indeed
\begin{equation*}
\tr_{\bar{A}}(\ketbra{c_j}{c_k})=\sum_{l\in\mathbb{Z}_d^{n_{\bar{A}}}}\braket{l}{c_j}\braket{c_k}{l}\neq 0,
\end{equation*}
if and only if $c_j$ and $c_k$ have at least $n_{\bar{A}}$ symbols in common. Moreover, the presence of $d^{\left[\frac{n}{2}\right]}$ terms in the sum is due to the necessity of having $\rho_A$ proportional to identity, i. e.  $\frac{\I}{{N_A}}$ for every bipartition $(A,\bar{A})$. Therefore, for this state, $\pi_{ME}$ reaches its minimum and the state in eq.~\eqref{eq:MMES} is a perfect MMES. 

It remains to prove the existence of such a code. In particular, from the Singleton bound, $\delta\geq n_{\bar{A}}+1$, meaning that we are addressing the relation between $n$ and $d$ in order for a MDS code to exist.   

\begin{theorem}
If $d$ is a prime or a prime power, a MDS code exists if $n\leq d-1$.
\end{theorem}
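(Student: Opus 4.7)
The plan is to construct MDS codes explicitly using the classical Reed--Solomon evaluation recipe, which is available precisely because the alphabet can be endowed with the structure of a finite field. Since $d$ is a prime or prime power, I identify $\mathbb{Z}_d$ (as a set) with the finite field $\mathbb{F}_d$ of order $d$. For each integer $k$ with $1\leq k\leq n$ I will exhibit a code $C\subseteq\mathbb{F}_d^n$ of cardinality $d^k$ and minimum Hamming distance $\delta=n-k+1$; by Theorem~\ref{lm:code} this saturates Singleton's bound and so $C$ is MDS. The flexibility in $k$ matters because the perfect-MMES construction in~(\ref{eq:MMES}) requires exactly $N_A=d^{[n/2]}$ codewords, obtained by the choice $k=[n/2]$.

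The construction has three steps. First, I would fix $n$ distinct nonzero evaluation points $\alpha_1,\dots,\alpha_n\in\mathbb{F}_d^\ast$; this is the only place where the hypothesis $n\leq d-1$ enters, since $|\mathbb{F}_d^\ast|=d-1$. Second, I would take $C$ to be the set of evaluation vectors $c_f=(f(\alpha_1),\dots,f(\alpha_n))$ as $f$ runs over the $d^k$ polynomials in $\mathbb{F}_d[x]$ of degree strictly less than $k$. Third, I would verify that $f\mapsto c_f$ is injective and that distinct codewords $c_f,c_g$ satisfy $\mathrm{d}_H(c_f,c_g)\geq n-k+1$. Both checks reduce to the single elementary fact that a nonzero polynomial of degree less than $k$ over a field has at most $k-1$ roots, so $f-g$ can vanish on no more than $k-1$ of the evaluation points; combined with Singleton's upper bound this pins down $\delta$ exactly.

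There is no deep obstacle here: the argument is entirely classical and algebraic. The essential input is the existence of a finite field of order $d$, which is the precise reason the prime-power hypothesis on $d$ cannot be relaxed. The bound $n\leq d-1$ stated in the theorem is sharp for the particular ``punctured'' construction sketched above, in which the evaluation set is $\mathbb{F}_d^\ast$; slightly larger values of $n$ can be reached by adjoining the point $0$ or an ``infinity'' point to the evaluation set, but the present variant is the cleanest one to write down and is amply sufficient for the purposes of this appendix.
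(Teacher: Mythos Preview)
Your proposal is correct and is exactly the approach the paper takes: the paper's ``proof'' is simply the one-line statement that ``the MDS codes to which the theorem is referring are the Reed--Solomon codes'' with a citation, and you have spelled out the standard Reed--Solomon evaluation construction in detail, correctly isolating where the hypotheses $d$ prime-power and $n\leq d-1$ enter. If anything, you have supplied more than the paper does.
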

The MDS codes to which the theorem is referring are the Reed-Solomon codes~\cite{reedsolomon}. This means that given $n$ it is always possible to choose the first suitable $d\geq n+1$ in order to construct a perfect MMES. 

\begin{remark}
This bound gives only a bound on the point at which frustration disappears. Indeed, in the case of $5$ qubits, for example, a perfect MMES exists, while according to the bound we need $d\geq 4$.
\end{remark}

\begin{acknowledgements}
We thank Giorgio Parisi and Saverio Pascazio for useful discussions and insightful comments.
P. Facchi and S. Di Martino are supported by the Gruppo Nazionale per la Fisica Matematica (GNFM) of the Istituto Nazionale di Alta Matematica (INdAM). G. Florio is supported by MIUR through the project ``VirtualMurgia'' and GNFM through ``Progetto Giovani''. P. Facchi and G. Florio are supported by Istituto Nazionale di Fisica Nucleare through the project ``QUANTUM''. S. Di Martino is supported by the ERC (Advanced Grant IRQUAT, project number ERC-267386).
\end{acknowledgements}


\begin{thebibliography}{10}

\bibitem{schr}
Erwin Schr\"odinger.
\newblock Discussion of probability relations between separated systems.
\newblock {\em Mathematical Proceedings of the Cambridge Philosophical
  Society}, 31:555--563, 1935.

\bibitem{peres}
Asher Peres.
\newblock {\em Quantum theory: concepts and methods}, volume~57 of {\em
  Fundamental Theories of Physics}.
\newblock Kluwer Academic Publishers Group, Dordrecht, 1993.

\bibitem{RevAmico}
Luigi Amico, Rosario Fazio, Andreas Osterloh, and Vlatko Vedral.
\newblock Entanglement in many-body systems.
\newblock {\em Rev. Mod. Phys.}, 80:517--576, 2008.

\bibitem{RevHoro}
Ryszard Horodecki, Pawe\l{} Horodecki, Micha\l{} Horodecki, and Karol
  Horodecki.
\newblock Quantum entanglement.
\newblock {\em Rev. Mod. Phys.}, 81:865--942, 2009.

\bibitem{nielsenchuang}
Michael~A. Nielsen and Isaac~L. Chuang.
\newblock {\em Quantum Computation and Quantum Information}.
\newblock Cambridge University Press, Cambridge, 2000.

\bibitem{additivity}
Koenraad~M.R. Audenaert and Samuel~L. Braunstein.
\newblock On strong superadditivity of the entanglement of formation.
\newblock {\em Comm. Math. Phys.}, 246, 2004.

\bibitem{relative}
Ke~Li and Andreas Winter.
\newblock Relative entropy and squashed entanglement.
\newblock {\em Comm. Math. Phys.}, 326, 2014.

\bibitem{mezzadri1}
J.P. Keating and F.~Mezzadri.
\newblock Random matrix theory and entanglement in quantum spin chains.
\newblock {\em Comm. Math. Phys.}, 252, 2004.

\bibitem{mezzadri2}
A.~R. Its, F.~Mezzadri, and M.~Y. Mo.
\newblock Entanglement entropy in quantum spin chains with finite range
  interaction.
\newblock {\em Comm. Math. Phys.}, 284, 2008.

\bibitem{shorcmp}
David~P. DiVincenzo, Tal Mor, Peter~W. Shor, John~A. Smolin, and Barbara~M.
  Terhal.
\newblock Unextendible product bases, uncompletable product bases and bound
  entanglement.
\newblock {\em Comm. Math. Phys.}, 238, 2003.

\bibitem{hayden}
Patrick Hayden, Debbie~W. Leung, and Andreas Winter.
\newblock Aspects of generic entanglement.
\newblock {\em J. Math. Phys.}, 265, 2006.

\bibitem{giraud1}
O.~Giraud.
\newblock Distribution of bipartite entanglement for random pure states.
\newblock {\em J. Phys. A: Math. Theor.}, 40:2793, 2007.

\bibitem{giraud2}
O.~Giraud.
\newblock Purity distribution for bipartite random pure states.
\newblock {\em J. Phys. A: Math. Theor.}, 40:F1053, 2007.

\bibitem{teleport}
Charles~H. Bennett, Gilles Brassard, Claude Cr\'epeau, Richard Jozsa, Asher
  Peres, and William~K. Wootters.
\newblock Teleporting an unknown quantum state via dual classical and
  einstein-podolsky-rosen channels.
\newblock {\em Phys. Rev. Lett.}, 70:1895--1899, 1993.

\bibitem{crypto1}
C.~H. Bennett and G.~Brassard.
\newblock {Quantum Cryptography: Public Key Distribution and Coin Tossing}.
\newblock In {\em Proceedings of the IEEE International Conference on
  Computers, Systems and Signal Processing}, pages 175--179, New York, 1984.
  IEEE Press.

\bibitem{crypto2}
Artur~K. Ekert.
\newblock Quantum cryptography based on bell's theorem.
\newblock {\em Phys. Rev. Lett.}, 67:661--663, 1991.

\bibitem{crypto3}
Charles~H. Bennett.
\newblock Quantum cryptography using any two nonorthogonal states.
\newblock {\em Phys. Rev. Lett.}, 68:3121--3124, 1992.

\bibitem{geometry}
Ingemar Bengtsson and Karol Zyczkowski.
\newblock {\em Geometry of quantum states: an introduction to quantum
  entanglement}.
\newblock Cambridge University Press, 2006.

\bibitem{conc2}
William~K. Wootters.
\newblock Entanglement of formation of an arbitrary state of two qubits.
\newblock {\em Phys. Rev. Lett.}, 80:2245--2248, 1998.

\bibitem{monogamy}
Valerie Coffman, Joydip Kundu, and William~K. Wootters.
\newblock Distributed entanglement.
\newblock {\em Phys. Rev. A}, 61:052306, 2000.

\bibitem{wc}
Alexander Wong and Nelson Christensen.
\newblock Potential multiparticle entanglement measure.
\newblock {\em Physical Review A}, 63(4):044301, 2001.

\bibitem{mw}
David~A Meyer and Nolan~R Wallach.
\newblock Global entanglement in multiparticle systems.
\newblock {\em arXiv preprint quant-ph/0108104}, 2001.

\bibitem{bruss}
Dagmar Bru{\ss}.
\newblock Characterizing entanglement.
\newblock {\em Journal of Mathematical Physics}, 43(9):4237--4251, 2002.

\bibitem{gisin}
Nicolas Gisin and Helle Bechmann-Pasquinucci.
\newblock Bell inequality, bell states and maximally entangled states for $n$
  qubits.
\newblock {\em Physics Letters A}, 246(1):1--6, 1998.

\bibitem{mmes}
Paolo Facchi, Giuseppe Florio, Giorgio Parisi, and Saverio Pascazio.
\newblock Maximally multipartite entangled states.
\newblock {\em Phys. Rev. A}, 77:060304, 2008.

\bibitem{latorre1}
Wolfram Helwig, Wei Cui, Jose~Ignacio Latorre, Arnau Riera, and Hoi-Kwong Lo.
\newblock Absolute maximal entanglement and quantum secret sharing.
\newblock {\em Phys. Rev. A}, 86:052335, 2012.

\bibitem{latorre2}
Dardo Goyeneche, Daniel Alsina, Jose~I. Latorre, Arnau Riera, and Karol
  Zyczkowski.
\newblock Absolutely maximally entangled states, combinatorial designs and
  multi-unitary matrices.
\newblock {\em Phys. Rev. A}, 92:032316, 2015.

\bibitem{zyc2}
D.~Goyeneche, , and K.~Zyczkowski.
\newblock Genuinely multipartite entangled states and orthogonal arrays.
\newblock {\em Phys. Rev. A}, 90:022316, 2014.

\bibitem{zyc1}
D.~Goyeneche, J.~Bielawski, and K.~Zyczkowski.
\newblock Multipartite entanglement in heterogeneous systems.
\newblock {\em ArXiv}, page 1602.08064, 2016.

\bibitem{frust}
Paolo Facchi, Giuseppe Florio, Ugo Marzolino, Giorgio Parisi, and Saverio
  Pascazio.
\newblock Multipartite entanglement and frustration.
\newblock {\em New Journal of Physics}, 12:025015, 2010.

\bibitem{scott}
A.~J. Scott.
\newblock Multipartite entanglement, quantum-error-correcting codes, and
  entangling power of quantum evolutions.
\newblock {\em Phys. Rev. A}, 69:052330, 2004.

\bibitem{siewert}
F. Huber, O. Guhne, and J. Siewert,
\newblock Absolutely maximally entangles states of seven qubits do not exist. 
\newblock ArXiv:1608.06228v2, 2016.

\bibitem{gourwallach}
Gilad Gour and Nolan~R. Wallach.
\newblock All maximally entangled four-qubit states.
\newblock {\em J. Math. Phys.}, 51(11):112201, 24, 2010.

\bibitem{cum1}
Paolo Facchi, Giuseppe Florio, Ugo Marzolino, Giorgio Parisi, and Saverio
  Pascazio.
\newblock Statistical mechanics of multipartite entanglement.
\newblock {\em Journal of Physics A: Mathematical and Theoretical},
  42(5):055304, 2009.

\bibitem{Paolo}
Paolo Facchi.
\newblock Multipartite entanglement in qubit systems.
\newblock {\em Atti Accad. Naz. Lincei Cl. Sci. Fis. Mat. Natur. Rend. Lincei
  (9) Mat. Appl.}, 20(1):25--67, 2009.

\bibitem{cumulants}
Paolo Facchi, Giuseppe Florio, Ugo Marzolino, Giorgio Parisi, and Saverio
  Pascazio.
\newblock Classical statistical mechanics approach to multipartite
  entanglement.
\newblock {\em Journal of Physics A: Mathematical and Theoretical},
  43(22):225303, 2010.

\bibitem{measure}
Karol Zyczkowski and Hans-J\"{u}rgen Sommers.
\newblock Induced measures in the space of mixed quantum states.
\newblock {\em Journal of Physics A: Mathematical and General}, 34(35):7111,
  2001.

\bibitem{cactus}
Paolo Facchi, Giuseppe Florio, Giorgio Parisi, Saverio Pascazio, and Antonello
  Scardicchio.
\newblock A large-n approximated field theory for multipartite entanglement.
\newblock {\em Phys. Rev. A}, 92:062330, 2015.

\bibitem{bipcum}
P.~Facchi, U.~Marzolino, G.~Parisi, S.~Pascazio, and A.~Scardicchio.
\newblock Phase transitions of bipartite entanglement.
\newblock {\em Phys. Rev. Lett.}, 101:050502, 2008.

\bibitem{singleton}
Richard~C. Singleton.
\newblock Maximum distance q-nary codes.
\newblock {\em Information Theory, IEEE Transactions on}, 10(2):116--118, 1964.

\bibitem{reedsolomon}
I.~S. Reed and G.~Solomon.
\newblock Polynomial codes over certain finite fields.
\newblock {\em J. Soc. Indust. Appl. Math.}, 8:300--304, 1960.

\end{thebibliography}
\end{document}